 \theoremstyle{plain}
 \newtheorem*{lemma*}{Lemma}
\titlerunning{Conservativity of embeddings in the $\lambda\Pi$ calculus modulo rewriting} 
\author[1,2]{Ali Assaf}
\affil[1]{Inria, Paris, France}
\affil[2]{\'Ecole polytechnique, Palaiseau, France}
\authorrunning{A. Assaf} 
\subjclass{F.4.1 Mathematical Logic}
\keywords{$\lambda\Pi$ calculus modulo rewriting, pure type systems, logical framework, normalization, conservativity}
\begin{document}
\global\long\def\operator#1{\mathsf{#1}}
\global\long\def\set#1{\mathcal{#1}}
\global\long\def\meta#1{\mathrm{#1}}

\global\long\def\pts#1{\lambda#1}
\global\long\def\lpi{\lambda\Pi}
\global\long\def\lpimod#1{\lpi/#1}
\global\long\def\lpimmod#1{\lpi^{-}/#1}

\global\long\def\app#1#2{#1\,#2}
\global\long\def\abs#1#2#3{\lambda#1\!:\!#2.\,#3}
\global\long\def\prod#1#2#3{\Pi#1\!:\!#2.\,#3}

\global\long\def\arr#1#2{#1\rightarrow#2}
\global\long\def\Prop{\operator{Prop}}
\global\long\def\Type{\operator{Type}}
\global\long\def\Kind{\operator{Kind}}

\global\long\def\freevars#1{\meta{FV}\left(#1\right)}
\global\long\def\subst#1#2#3{#3[#2\backslash#1]}

\global\long\def\rewrites{\leadsto}
\global\long\def\rewriterule#1#2#3{[#1]\ #2\rewrites#3}

\global\long\def\reduces{\longrightarrow}
\global\long\def\convertible{\equiv}

\global\long\def\cempty{\cdot}
\global\long\def\cdecl#1#2#3{#1,#2:#3}
\global\long\def\cconcat#1#2{#1,#2}

\newcommandx\wellformed[2][usedefault, addprefix=\global, 1=]{\meta{WF}_{#1}(#2)}
\newcommandx\typing[4][usedefault, addprefix=\global, 1=]{#2\vdash_{#1}#3:#4}

\global\long\def\ttype#1{u_{#1}}
\global\long\def\tterm#1{\varepsilon_{#1}}
\global\long\def\tsort#1{\dot{#1}}
\global\long\def\tprod#1#2#3{\dot{\pi}_{#1#2#3}}

\global\long\def\trans#1#2{\left|#1\right|_{#2}}
\global\long\def\ttrans#1#2{\left\Vert #1\right\Vert _{#2}}

\global\long\def\btrans#1{\varphi(#1)}
\global\long\def\bttrans#1{ \psi(#1)}

\global\long\def\completion#1{#1^{*}}

\global\long\def\conservative#1#2#3{\mathcal{C}_{S}(#1,#2,#3)}
\global\long\def\reducible#1#2#3{#1\models_{S}#2:#3}

\title{Conservativity of embeddings \protect\\
in the $\lambda\Pi$ calculus modulo rewriting\protect\\
(long version)}
\maketitle
\begin{abstract}
The $\lambda\Pi$ calculus can be extended with rewrite rules to embed
any functional pure type system. In this paper, we show that the embedding
is conservative by proving a relative form of normalization, thus
justifying the use of the $\lambda\Pi$ calculus modulo rewriting
as a logical framework for logics based on pure type systems. This
result was previously only proved under the condition that the target
system is normalizing. Our approach does not depend on this condition
and therefore also works when the source system is not normalizing.
\end{abstract}

\section{Introduction}

The \emph{$\lpi$ calculus modulo rewriting} is a logical framework
that extends the $\lpi$ calculus \cite{harper_framework_1993} with
rewrite rules. Through the Curry-de Bruijn-Howard correspondence,
it can express properties and proofs of various logics. Cousineau
and Dowek \cite{cousineau_embedding_2007} introduced a general embedding
of \emph{functional pure type systems} (FPTS), a large class of typed
$\lambda$-calculi, in the $\lpi$ calculus modulo rewriting: for
any FPTS $\pts S$, they constructed the system $\lpimod S$ using
appropriate rewrite rules, and defined two translation functions $\trans M{}$
and $\ttrans A{}$ that translate respectively the terms and the types
of $\pts S$ to $\lpimod S$. %
This embedding is complete, in the sense preserves typing: if $\typing[\pts S]{\Gamma}MA$
then $\typing[\lpimod S]{\ttrans{\Gamma}{}}{\trans M{}}{\ttrans A{}}$.
From the logical point of view, it preserves provability. The converse
property, called \emph{conservativity}, was only shown partially:
assuming $\lpimod S$ is strongly normalizing, if there is a term
$N$ such that $\typing[\lpimod S]{\ttrans{\Gamma}{}}N{\ttrans A{}}$
then there is a term $M$ such that $\typing[\pts S]{\Gamma}MA$.

\paragraph*{Normalization and conservativity}

Not much is known about normalization in $\lpimod S$. Cousineau and
Dowek \cite{cousineau_embedding_2007} showed that the embedding preserves
reduction: if $M\reduces M'$ then $\trans M{}\reduces^{+}\trans{M'}{}$.
As a consequence, if $\lpimod S$ is strongly normalizing (i.e.~every
well-typed term normalizes) then so is $\pts S$, but the converse
might not be true\emph{ a priori}. This was not enough to show the
conservativity of the embedding, so the proof relied on the unproven
assumption that $\lpimod S$ is normalizing. This result is insufficient
if one wants to consider the $\lpi$ calculus modulo rewriting as
a general logical framework for defining logics and expressing proofs
in those logics, as proposed in \cite{boespflug_coqine:_2012,boespflug_-calculus_2012}.
Indeed, if the embedding turns out to be inconsistent then checking
proofs in the logical framework has very little benefit.

Consider the PTS $\pts{HOL}$ that corresponds to higher order logic
\cite{barendregt_lambda_1992}:
\[
\begin{array}{ccl}
\set S & = & \Prop,\Type,\Kind\\
\set A & = & (\Prop:\Type),(\Type:\Kind)\\
\set R & = & (\Prop,\Prop,\Prop),(\Type,\Prop,\Prop),(\Type,\Type,\Type)
\end{array}
\]
This PTS is strongly normalizing, and therefore consistent. A polymorphic
variant of $\pts{HOL}$ is specified by $U^{-}=HOL+(\Kind,\Type,\Type)$.
It turns out that $\pts{U^{-}}$ is inconsistent: there is a term
$\omega$ such that $\typing[\pts{U^{-}}]{}{\omega}{\prod{\alpha}{\Prop}{\alpha}}$
and which is not normalizing \cite{barendregt_lambda_1992}. We motivate
the need for a proof of conservativity with the following example.
\begin{example}
\label{ex:polymorphic-identity} The polymorphic identity function
$I=\abs{\alpha}{\Type}{\abs x{\alpha}x}$ is \emph{not} well-typed
in $\pts{HOL}$, but it is well-typed in $\pts{U^{-}}$ and so is
its type:
\[
\typing[\pts{U^{-}}]{}I{\prod{\alpha}{\Type}{\arr{\alpha}{\alpha}}}
\]
\[
\typing[\pts{U^{-}}]{}{\prod{\alpha}{\Type}{\arr{\alpha}{\alpha}}}{\Type}
\]
However, the translation $\trans I{}=\abs{\alpha}{\ttype{\Type}}{\abs x{\app{\tterm{\Type}}{\alpha}}x}$
\emph{is} well-typed in $\lpimod{HOL}$:
\[
\typing[\lpimod{HOL}]{}{\trans I{}}{\prod{\alpha}{\ttype{\Type}}{\arr{\app{\tterm{\Type}}{\alpha}}{\app{\tterm{\Type}}{\alpha}}}}
\]
\[
\typing[\lpimod{HOL}]{}{\prod{\alpha}{\ttype{\Type}}{\arr{\app{\tterm{\Type}}{\alpha}}{\app{\tterm{\Type}}{\alpha}}}}{\Type}
\]
It seems that $\lpimod{HOL}$, just like $\pts{U^{-}}$, allows more
functions than $\pts{HOL}$, even though the type of $\trans I{}$
is not the translation of a $\pts{HOL}$ type. Is that enough to make
$\lpimod{HOL}$ inconsistent?
\end{example}

\paragraph*{Absolute normalization vs relative normalization}

One way to answer the question is to prove strong normalization of
$\lpimod S$ by constructing a model, for example in the algebra of
\emph{reducibility candidates} \cite{girard_interpretation_1972}.
Dowek \cite{dowek_models_2014} recently constructed such a model
for the embedding of higher-order logic ($\pts{HOL}$) and of the
calculus of constructions ($\pts C$). However, this technique is
still very limited. Indeed, proving such a result is, by definition,
at least as hard as proving the consistency of the original system.
It requires specific knowledge of $\pts S$ and the construction of
such a model can be very involved, such as for the calculus of constructions
with an infinite universe hierarchy ($\lambda C^{\infty}$).

In this paper, we take a different approach and show that $\lpimod S$
is conservative in all cases, even when $\pts S$ is \emph{not} normalizing.
Instead of showing that $\lpimod S$ is strongly normalizing, we show
that it is weakly normalizing \emph{relative to $\pts S$}, meaning
that proofs in the target language can be reduced to proofs in the
source language. That way we prove only what is needed to show conservativity,
without having to prove the consistency of $\pts S$ all over again.
After identifying the main difficulties, we characterize a \emph{PTS
completion} \cite{severi_pure_1994,severi_pure_1995} $\completion S$
containing $S$, and define an inverse translation from $\lpimod S$
to $\pts{\completion S}$. We then prove that $\pts{\completion S}$
is a conservative extension of $\pts S$ using the \emph{reducibility
method} \cite{tait_intensional_1967}.

\paragraph*{Outline}

The rest of the paper is organized as follows. In Section \ref{sec:pure-type-systems},
we recall the theory of pure type systems. In Section \ref{sec:lambda-Pi-calculus-modulo},
we present the framework of the $\lpi$ calculus modulo rewriting.
In Section \ref{sec:embedding}, we introduce Cousineau and Dowek's
embedding of functional pure type systems in the $\lpi$ calculus
modulo rewriting. In Section \ref{sec:conservativity}, we prove the
conservativity of the embedding using the techniques mentioned above.
In Section \ref{sec:conclusion}, we summarize the results and discuss
future work.

\section{\label{sec:pure-type-systems}Pure type systems}

Pure type systems \cite{barendregt_lambda_1992} are a general class
of typed $\lambda$-calculi parametrized by a specification.
\begin{definition}
A PTS \emph{specification} is a triple $S=(\set S,\set A,\set R)$
where
\begin{itemize}
\item $\set S$ is a set of of symbols called \emph{sorts}
\item $\set A\subseteq\set S\times S$ is a set of \emph{axioms} of the
form $(s_{1}:s_{2})$
\item $\set R\subseteq\set S\times\set S\times\set S$ is a set of \emph{rules}
of the form $(s_{1},s_{2},s_{3})$
\end{itemize}
We write $(s_{1},s_{2})$ as a short-hand for the rule $(s_{1},s_{2},s_{2})$.
The specification $S$ is \emph{functional} if the relations $\set A$
and $\set R$ are functional, that is $(s_{1},s_{2})\in\set A$ and
$(s_{1},s_{2}')\in\set A$ imply $s_{2}=s_{2}'$, and $(s_{1},s_{2},s_{3})\in\set R$
and $(s_{1},s_{2},s_{3}')\in\set R$ imply $s_{3}=s_{3}'$. The specification
is \emph{full} if for all $s_{1},s_{2}\in\set S$, there is a sort
$s_{3}$ such that $(s_{1},s_{2},s_{3})\in\set R$.
\end{definition}

\begin{definition}
Given a PTS specification $S=(\set S,\set A,\set R)$ and a countably
infinite set of variables $\set V$, the abstract syntax of $\pts S$
is defined by the following grammar: 
\[
\begin{array}{lrcl}
\mbox{(terms)} & \set T & ::= & \set S\mid\set V\mid\app{\set T}{\set T}\mid\abs{\set V}{\set T}{\set T}\mid\prod{\set V}{\set T}{\set T}\\
\mbox{(contexts)} & \set C & ::= & \cempty\mid\cdecl{\set C}{\set V}{\set T}
\end{array}
\]
We use lower case letters $x,y,\alpha,\beta,\ldots$ to denote variables,
uppercase letters such as $M,N$, $A,B,\ldots$ to denote terms, and
uppercase Greek letters such as $\Gamma,\Delta,\Sigma,\ldots$ to
denote contexts. The set of free variables of a term $M$ is denoted
by $\meta{FV}\left(M\right)$. We write $\arr AB$ for $\prod xAB$
when $x\not\in\freevars B$.

The typing rules of $\pts S$ are presented in Figure \ref{fig:typing-rules-pts}.
We write $\typing{\Gamma}MA$ instead of $\typing[\pts S]{\Gamma}MA$
when the context is unambiguous. We say that $M$ is a \emph{$\Gamma$-term}
when $\wellformed{\Gamma}$ and $\typing{\Gamma}MA$ for some $A$.
We say that $A$ is a \emph{$\Gamma$-type} when $\wellformed{\Gamma}$
and either $\typing{\Gamma}As$ or $A=s$ for some $s\in\set S$.
We write $\typing{\Gamma}M{A:s}$ as a shorthand for $\Gamma\vdash M:A\wedge\Gamma\vdash A:s$.

\begin{figure}
\begin{mathpar}
\infer[Empty]{ }{
	\wellformed[\pts{S}]{\cempty}}

\infer[Declaration]{
	\typing[\pts{S}]{\Gamma}{A}{s} \\
	x\not\in\Gamma}{
	\wellformed[\pts{S}]{\cdecl{\Gamma}{x}{A}}}

\infer[Variable]{
	\wellformed[\pts{S}]{\Gamma} \\
	(x:A)\in\Gamma}{
	\typing[\pts{S}]{\Gamma}{x}{A}}

\infer[Sort]{
	\wellformed[\pts{S}]{\Gamma} \\
	(s_1:s_2)\in{\set{A}}}{
	\typing[\pts{S}]{\Gamma}{s_1}{s_2}}

\infer[Product]{
	\typing[\pts{S}]{\Gamma}{A}{s_1} \\
	\typing[\pts{S}]{\cdecl{\Gamma}{x}{A}}{B}{s_2} \\
	(s_1,s_2,s_3)\in{\set{R}}}{
	\typing[\pts{S}]{\Gamma}{\prod{x}{A}{B}}{s_3}}

\infer[Abstraction]{
	\typing[\pts{S}]{\cdecl{\Gamma}{x}{A}}{M}{B} \\
	\typing[\pts{S}]{\Gamma}{\prod{x}{A}{B}}{s}}{
	\typing[\pts{S}]{\Gamma}{\abs{x}{A}{M}}{\prod{x}{A}{B}}}

\infer[Application]{
	\typing[\pts{S}]{\Gamma}{M}{\prod{x}{A}{B}} \\
	\typing[\pts{S}]{\Gamma}{N}{A}}{
	\typing[\pts{S}]{\Gamma}{\app{M}{N}}{\subst{N}{x}{B}}}

\infer[Conversion]{
	\typing[\pts{S}]{\Gamma}{M}{A} \\
	\typing[\pts{S}]{\Gamma}{B}{s} \\
	A\equiv_\beta B}{
	\typing[\pts{S}]{\Gamma}{M}{B}}
\end{mathpar}

\protect\caption{\label{fig:typing-rules-pts}Typing rules of $\protect\pts S$}
\end{figure}
\end{definition}
\begin{example}
\label{ex:pure-type-systems}The following well-known systems can
all be expressed as functional pure type systems using the same set
of sorts $\set S=\Type,\Kind$ and the same set of axioms $\set A=(\Type:\Kind)$:
\begin{itemize}
\item Simply-typed $\lambda$ calculus ($\pts{\!\rightarrow}$): \\
$\set R=(\Type,\Type)$
\item System F ($\pts 2$): \\
$\set R=(\Type,\Type),(\Kind,\Type)$
\item $\lambda\Pi$ calculus ($\pts P$): \\
$\set R=(\Type,\Type),(\Type,\Kind)$
\item Calculus of constructions ($\pts C$): \\
$\set R=(\Type,\Type),(\Kind,\Type),(\Type,\Kind),(\Kind,\Kind)$
\end{itemize}
\end{example}

\begin{example}
Let $I=\abs{\alpha}{\Type}{\abs x{\alpha}x}$ be the polymorphic identity
function. The term $I$ is not well-typed in the simply typed $\lambda$
calculus but it is well-typed in the calculus of constructions $\pts C$:
\[
\typing[\pts C]{}I{\prod{\alpha}{\Type}{\arr{\alpha}{\alpha}}}
\]

\end{example}
The following properties hold for all pure type systems \cite{barendregt_lambda_1992}.
\begin{theorem}[Correctness of types]
 If $\typing[\pts S]{\Gamma}MA$ then $\wellformed[\pts S]{\Gamma}$
and either $\typing[\pts S]{\Gamma}As$ or $A=s$ for some $s\in\set S$,
i.e. $A$ is a $\Gamma$-type.
\end{theorem}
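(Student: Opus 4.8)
The plan is to proceed by induction on the derivation of $\typing{\Gamma}MA$, inspecting which rule concludes the judgment — one of Variable, Sort, Product, Abstraction, Application, or Conversion, since Empty and Declaration conclude well-formedness rather than a typing judgment. In each case I must establish two things: that $\wellformed{\Gamma}$, and that $A$ is a $\Gamma$-type. The well-formedness part is uniform and easy: in Variable and Sort it is a premise, and in the remaining rules it follows from the induction hypothesis applied to any premise whose context is $\Gamma$.

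For the type part, several cases are immediate. In Sort the type is $s_2$ and in Product the type is $s_3$, both of which are sorts, so the disjunct ``$A=s$ for some $s\in\set S$'' holds directly. In Abstraction the type is $\prod xAB$ and the rule itself supplies the premise $\typing{\Gamma}{\prod xAB}s$; in Conversion the type is $B$ and the rule supplies $\typing{\Gamma}Bs$. In both of these the type is shown well-typed at a sort as a side condition, so it is a $\Gamma$-type with nothing further to prove.

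Two cases require auxiliary lemmas. In Variable we have $(x:A)\in\Gamma$ with $\wellformed{\Gamma}$; writing $\Gamma=\Gamma_1,x:A,\Gamma_2$ and inverting the context-formation derivation (a start lemma) yields $\typing{\Gamma_1}As$ for some sort $s$ at the point where $x:A$ was declared, and a thinning (weakening) lemma then lifts this to $\typing{\Gamma}As$, so $A$ is a $\Gamma$-type. The Application case is the real work. Its type is $\subst NxB$, formed from the premises $\typing{\Gamma}M{\prod xAB}$ and $\typing{\Gamma}NA$. Applying the induction hypothesis to the first premise shows $\prod xAB$ is a $\Gamma$-type; since a product is not a sort, we must have $\typing{\Gamma}{\prod xAB}s$. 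A generation (inversion) lemma for products then extracts sorts $s_1,s_2,s_3$ with $(s_1,s_2,s_3)\in\set R$, $\typing{\Gamma}A{s_1}$, and $\typing{\cdecl{\Gamma}xA}B{s_2}$. Combining the last judgment with $\typing{\Gamma}NA$ via the substitution lemma gives $\typing{\Gamma}{\subst NxB}{\subst Nx{s_2}}$, and since $s_2$ is a sort this is just $\typing{\Gamma}{\subst NxB}{s_2}$, establishing that $\subst NxB$ is a $\Gamma$-type.

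The main obstacle is the supporting infrastructure needed for the Application case rather than the induction itself: the generation lemma for products and, especially, the substitution lemma. In the PTS setting these belong to a tightly interdependent body of metatheory — thinning, substitution, and generation — that is proved prior to (or, in some presentations, by a simultaneous induction with) correctness of types. I would therefore establish these standard lemmas first and then invoke them in the two nontrivial cases above; once they are available, the induction goes through mechanically.
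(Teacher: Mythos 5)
Your proof is correct and coincides with the standard argument behind this theorem: the paper states it without proof, deferring to the PTS literature \cite{barendregt_lambda_1992}, where the proof is exactly your induction on the derivation, with the Variable case discharged by inversion of context formation plus thinning and the Application case by the generation lemma for products together with the substitution lemma. There is no gap or circularity, since thinning, substitution, and generation are indeed established independently before (not by mutual induction with) correctness of types.
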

The reason why we don't always have $\typing[\pts S]{\Gamma}As$ is
that some sorts do not have an associated axiom, such as $\Kind$
in Example \ref{ex:pure-type-systems}, which leads to the following
definition.
\begin{definition}[Top-sorts]
 A sort $s\in\set S$ is called a \emph{top-sort} when there is no
sort $s'\in\set S$ such that $(s:s')\in\set A$.
\end{definition}
The following property is useful for proving properties about systems
with top-sorts.
\begin{theorem}[Top-sort types]
\label{thm:pts-top-sort-types} If $\typing[\pts S]{\Gamma}As$ and
$s$ is a top-sort then either $A=s'$ for some sort $s'\in\set S$
or $A=\prod xBC$ for some terms $B,C$.
\end{theorem}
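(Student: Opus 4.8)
The plan is to argue by case analysis on the syntactic shape of $A$. By the grammar, $A$ is a sort, a variable, an application, an abstraction, or a product. If $A$ is a sort or a product $\prod xBC$, we are immediately in one of the two stated alternatives, so nothing needs to be done there. It remains to rule out the variable, abstraction, and application shapes, and in each the strategy will be the same: I would apply the standard generation (inversion) lemma for the offending construct and then derive that $s$ must carry an axiom $(s:s'')\in\set A$ for some $s''$, contradicting the assumption that $s$ is a top-sort. Throughout I would use the usual toolkit of pure type systems---confluence of $\beta$-reduction, subject reduction, the substitution lemma, and correctness of types---all of which hold for every PTS. The one recurring idea is that a sort is $\beta$-normal, so by confluence any term convertible to $s$ in fact reduces to $s$.

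For the variable case $A=x$, I would first note that $\typing{\Gamma}{x}{s}$ gives $\wellformed{\Gamma}$ by correctness of types, and that generation provides a declaration $(x:B)\in\Gamma$ with $B\equiv_\beta s$. Since $\Gamma$ is well formed, that declaration was introduced by the Declaration rule, so $\typing{\Gamma'}{B}{s'}$ for some prefix $\Gamma'$ of $\Gamma$ and some sort $s'$. Because $s$ is $\beta$-normal, confluence turns $B\equiv_\beta s$ into $B\reduces^{*}s$, and subject reduction then yields $\typing{\Gamma'}{s}{s'}$; inverting this sort judgment produces an axiom $(s:s'')\in\set A$, the desired contradiction. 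For the abstraction case $A=\abs yBM$, generation forces $s\equiv_\beta\prod yBC$ for some $C$; but every reduct of a product is again a product, so a product can share no reduct with the normal sort $s$, which by confluence is impossible.

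The main obstacle is the application case $A=\app MN$, where the type is a substitution instance that must be seen through. Here I would invoke generation to obtain $\typing{\Gamma}{M}{\prod yBC}$, $\typing{\Gamma}{N}{B}$, and $s\equiv_\beta\subst NyC$. Applying correctness of types to the first judgment shows that $\prod yBC$ is a $\Gamma$-type; since a product is not a sort, this gives $\typing{\Gamma}{\prod yBC}{s'}$ for some sort $s'$, and inverting this product judgment yields $\typing{\cdecl{\Gamma}{y}{B}}{C}{s_2}$ for some sort $s_2$. The substitution lemma, using $\typing{\Gamma}{N}{B}$, then gives $\typing{\Gamma}{\subst NyC}{s_2}$. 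Exactly as in the previous cases, $s\equiv_\beta\subst NyC$ together with confluence and subject reduction yields $\typing{\Gamma}{s}{s_2}$, so that $(s:s'')\in\set A$ and $s$ is not a top-sort after all. Having reached a contradiction in each of the three shapes, only the sort and product cases survive, which is exactly what had to be shown.
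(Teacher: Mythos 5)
Your proof is correct: the paper itself states this theorem without proof (deferring to the standard PTS literature, Barendregt), and your argument is exactly the standard one --- case analysis on the shape of $A$, with the generation lemma in the variable, abstraction, and application cases combined with correctness of types, the substitution lemma, subject reduction, and confluence (a sort being $\beta$-normal, and reducts of products being products) to extract an axiom $(s:s'')\in\set A$ that contradicts $s$ being a top-sort. I see no gaps; in particular you correctly route the hard application case through correctness of types and product generation to type the substitution instance $\subst NyC$, which is the one step where a naive induction would stall.
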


\begin{theorem}[Confluence]
 If $M_{1}\reduces_{\beta}^{*}M_{2}$ and $M_{1}\reduces_{\beta}^{*}M_{3}$
then there is a term $M_{4}$ such that $M_{2}\reduces_{\beta}^{*}M_{4}$
and $M_{3}\reduces_{\beta}^{*}M_{4}$.
\end{theorem}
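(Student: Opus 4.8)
The statement is the Church--Rosser property for $\beta$-reduction, and the plan is to prove it by the parallel-reduction method of Tait and Martin-L\"of, in the streamlined form due to Takahashi. First I would define a \emph{parallel reduction} relation $\Rightarrow$ on terms by induction: sorts and variables reduce to themselves; the relation is a congruence for application, abstraction and product, so that $\app M N \Rightarrow \app{M'}{N'}$ whenever $M \Rightarrow M'$ and $N \Rightarrow N'$, and similarly $\abs x A M \Rightarrow \abs x{A'}{M'}$ and $\prod x A B \Rightarrow \prod x{A'}{B'}$; and the key clause contracts a top-level redex in parallel with reductions in its parts, namely $\app{(\abs x A M)}{N} \Rightarrow \subst{N'}{x}{M'}$ whenever $M \Rightarrow M'$ and $N \Rightarrow N'$. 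Note that $\Pi$ and the sort constants are inert constructors that never create redexes, so they only contribute extra congruence cases.

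Second I would verify the sandwich $\reduces_\beta\ \subseteq\ \Rightarrow\ \subseteq\ \reduces_\beta^*$: a single $\beta$-step is a special case of parallel reduction, and conversely each parallel step can be realized by finitely many $\beta$-steps (by induction on the derivation of $\Rightarrow$). Consequently the reflexive--transitive closures coincide, $\Rightarrow^*\ =\ \reduces_\beta^*$, so it suffices to prove that $\Rightarrow$ is confluent. For this it is enough to establish the \emph{diamond property} of $\Rightarrow$, that if $M \Rightarrow M_2$ and $M \Rightarrow M_3$ then there is $N$ with $M_2 \Rightarrow N$ and $M_3 \Rightarrow N$: a routine diagram-tiling induction then lifts the diamond from $\Rightarrow$ to its transitive closure, giving confluence of $\reduces_\beta^*$ and hence the term $M_4$ demanded by the theorem.

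Third, to obtain the diamond property I would use Takahashi's \emph{complete development}. Define $M^{\circ}$, the term obtained by contracting \emph{all} the redexes present in $M$ simultaneously, so that $(\app{(\abs x A M)}{N})^{\circ} = \subst{N^{\circ}}{x}{M^{\circ}}$ with the evident congruence clauses otherwise. The heart of the argument is the \emph{triangle lemma}: whenever $M \Rightarrow M'$ one has $M' \Rightarrow M^{\circ}$. Granting this, the diamond is immediate, since from $M \Rightarrow M_2$ and $M \Rightarrow M_3$ we obtain $M_2 \Rightarrow M^{\circ}$ and $M_3 \Rightarrow M^{\circ}$, so $N = M^{\circ}$ closes the square.

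I expect the triangle lemma to be the main obstacle, and its crux is a \emph{substitution lemma} for parallel reduction: if $M \Rightarrow M'$ and $N \Rightarrow N'$ then $\subst N x M \Rightarrow \subst{N'}{x}{M'}$. This is what makes the $\beta$-redex case go through, where the contracted redex interacts with the substitution being performed; I would prove it by induction on the derivation $M \Rightarrow M'$, treating the substituted variable in the base case. With the substitution lemma in hand, the triangle lemma follows by induction on $M \Rightarrow M'$, the only delicate case being an application $\app{(\abs{x}{A}{M_0})}{N_0}$ that was already a redex in $M$: there the induction hypotheses give $M_0' \Rightarrow M_0^{\circ}$ and $N_0' \Rightarrow N_0^{\circ}$, and the substitution lemma assembles these into the required step to $M^{\circ} = \subst{N_0^{\circ}}{x}{M_0^{\circ}}$. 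The additional constructors $\Pi$ and the sort constants add no difficulty beyond further congruence cases.
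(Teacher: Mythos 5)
The paper does not prove this theorem at all: it is stated as a standard property of pure type systems with a citation to Barendregt, and your Tait--Martin-L\"of parallel-reduction argument with Takahashi's complete development is exactly the textbook proof behind that citation. Your proposal is correct --- in particular you rightly work on raw terms so typing plays no role, and the annotation $A$ in $\app{(\abs{x}{A}{M})}{N}$ causes no trouble since the $\beta$-rule discards it.
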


\begin{theorem}[Product compatibility]
 If $\prod xA{B\equiv_{\beta}\prod x{A'}{B'}}$ then $A\equiv_{\beta}A'$
and $B\equiv_{\beta}B'$.
\end{theorem}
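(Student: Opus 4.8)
The plan is to reduce everything to the Confluence theorem. By Church--Rosser, two terms are $\beta$-convertible precisely when they share a common reduct: $M \equiv_\beta N$ holds if and only if there is a term $P$ with $M \reduces_\beta^* P$ and $N \reduces_\beta^* P$. Applying this to the hypothesis $\prod xAB \equiv_\beta \prod x{A'}{B'}$ produces a single term $P$ to which both products reduce.

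The key observation is that $\beta$-reduction preserves the outermost product constructor. First I would establish, by induction on the length of the reduction sequence, the following shape-preservation lemma: if $\prod xAB \reduces_\beta^* P$, then $P$ is again a product $\prod x{A''}{B''}$ with $A \reduces_\beta^* A''$ and $B \reduces_\beta^* B''$. For a single step this is immediate from inspection of the grammar and the reduction rule: a term of the form $\prod xAB$ is not itself a redex (the sole redex shape is an application of an abstraction, $\app{(\abs xAM)}N$), so any one-step reduct arises from contracting a redex lying entirely inside $A$ or inside $B$, and is therefore again a product with the corresponding component reduced and the other left unchanged. The multi-step case follows by composing single steps and using transitivity of $\reduces_\beta^*$.

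Applying this lemma to both $\prod xAB \reduces_\beta^* P$ and $\prod x{A'}{B'} \reduces_\beta^* P$, the common reduct $P$ must have the form $\prod x{A''}{B''}$, with $A \reduces_\beta^* A''$, $A' \reduces_\beta^* A''$, $B \reduces_\beta^* B''$, and $B' \reduces_\beta^* B''$. Thus $A$ and $A'$ share the common reduct $A''$, and $B$ and $B'$ share the common reduct $B''$, whence $A \equiv_\beta A'$ and $B \equiv_\beta B'$ as required. The argument is almost entirely routine; the only point requiring care is the shape-preservation lemma, and in particular the observation that no reduction can ever destroy a product at its head. I expect no genuine obstacle here, since the calculus has a single redex form and no rule acts on the product constructor itself.
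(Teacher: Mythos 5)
Your proposal is correct and coincides with the standard argument: the paper states this theorem without proof (citing Barendregt), and the proof in the literature is exactly yours, namely that confluence gives the two convertible products a common reduct, while a product is never itself a redex, so every reduct of a product is again a product whose components are reducts of the original components, yielding common reducts for $A,A'$ and for $B,B'$. The only step you gloss over is the routine zig-zag induction showing that confluence implies convertible terms have a common reduct, which is standard.
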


\begin{theorem}[Subject reduction]
 If $\typing[\pts S]{\Gamma}MA$ and $M\reduces_{\beta}^{*}M'$ then
$\typing[\pts S]{\Gamma}{M'}A$.
\end{theorem}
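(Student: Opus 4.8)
The plan is to reduce the statement to a single reduction step, strengthen it to include reduction in the context, and then argue by induction on the typing derivation. Since $\reduces_\beta^{*}$ is the reflexive–transitive closure of one-step $\beta$-reduction, a straightforward induction on the length of the reduction sequence reduces the claim to the one-step case: if $\typing{\Gamma}{M}{A}$ and $M \reduces_\beta M'$ in a single step, then $\typing{\Gamma}{M'}{A}$. I would prove this together with the companion statement for contexts — if $\typing{\Gamma}{M}{A}$ and $\Gamma \reduces_\beta \Gamma'$ (reducing one declared type), then $\wellformed{\Gamma'}$ and $\typing{\Gamma'}{M}{A}$ — since the two statements feed into each other. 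The whole argument then proceeds by induction on the derivation of $\typing{\Gamma}{M}{A}$, analysing the last rule used and the position of the redex. Two auxiliary results are needed throughout: the \emph{substitution lemma}, stating that $\typing{\cdecl{\Gamma}{x}{A}}{M}{B}$ and $\typing{\Gamma}{N}{A}$ imply $\typing{\Gamma}{\subst{N}{x}{M}}{\subst{N}{x}{B}}$, and a \emph{generation (inversion) lemma} that reads off, from the syntactic shape of a term, the premises that must have produced its type.

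For the structural cases the induction hypothesis does almost all the work. In the \textsc{Sort} and \textsc{Variable} cases the term carries no redex, so only the context-reduction statement applies, and it follows by re-deriving the judgement over $\Gamma'$. In the \textsc{Product} and \textsc{Abstraction} cases the redex lies in the domain or in the body; reducing inside the domain $A$ changes the context $\cdecl{\Gamma}{x}{A}$ under which the body is typed, and here the companion context-reduction hypothesis is exactly what lets me retype the body. In the \textsc{Application} case with the redex inside $M$ or $N$ the induction hypothesis applies directly; when the redex is in $N$ the resulting type $\subst{N'}{x}{B}$ differs from $\subst{N}{x}{B}$, but the two are $\beta$-convertible, so a final appeal to \textsc{Conversion} (using correctness of types to know the type is well-sorted) repairs this. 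The \textsc{Conversion} case is immediate from the hypothesis and the same convertibility.

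The heart of the proof — and the step I expect to be the main obstacle — is the head $\beta$-redex in the \textsc{Application} case, where $M = \app{(\abs{x}{A'}{Q})}{N}$ reduces to $M' = \subst{N}{x}{Q}$. From the premises $\typing{\Gamma}{\abs{x}{A'}{Q}}{\prod{x}{A}{B}}$ and $\typing{\Gamma}{N}{A}$, I apply the generation lemma to the abstraction to obtain a body type $B'$ with $\typing{\cdecl{\Gamma}{x}{A'}}{Q}{B'}$ and a sort $s$ with $\typing{\Gamma}{\prod{x}{A'}{B'}}{s}$, together with the convertibility $\prod{x}{A'}{B'} \equiv_\beta \prod{x}{A}{B}$. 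Product compatibility then splits this into $A' \equiv_\beta A$ and $B' \equiv_\beta B$. Using $A' \equiv_\beta A$ and \textsc{Conversion} I retype $N$ at the domain $A'$ (the side condition $\typing{\Gamma}{A'}{s_1}$ comes from applying generation to the well-typed product), apply the substitution lemma to $Q$ and $N$ to get $\typing{\Gamma}{\subst{N}{x}{Q}}{\subst{N}{x}{B'}}$, and finally convert along $\subst{N}{x}{B'} \equiv_\beta \subst{N}{x}{B}$ to reach the required type. The delicate points are the repeated uses of \textsc{Conversion}, each of which demands that the target type be itself typeable by a sort; this well-sortedness is supplied by correctness of types and by the judgements extracted via the generation lemma.

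Finally, it is worth flagging that the generation lemma is itself the subtle prerequisite rather than the induction above. Proving it requires its own induction on derivations in which the \textsc{Conversion} rule forces one to track $\beta$-convertibility of types, and it is precisely there that Confluence and Product compatibility are invoked to prevent a product type from being convertible to a sort or to a term of a different shape. Once the generation and substitution lemmas are in place, the induction sketched above goes through routinely.
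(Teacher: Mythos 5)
Your proposal is correct and is precisely the standard argument: the paper does not prove this theorem itself but states it as a known property of pure type systems with a citation to Barendregt, and your proof (strengthening to one-step reduction in both term and context, induction on the typing derivation, with the substitution and generation lemmas, and confluence plus product compatibility resolving the head $\beta$-redex case) is exactly that standard proof. The delicate points you flag — well-sortedness of the target type in each use of \textsc{Conversion}, obtained via correctness of types, generation on the product, and the substitution lemma — are handled correctly, so nothing is missing.
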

Finally, we state the following property for functional pure type
systems.
\begin{theorem}[Uniqueness of types]
 Let $\set S$ be a functional specification. If $\typing[\pts S]{\Gamma}MA$
and $\typing[\pts S]{\Gamma}MB$ then $A\equiv_{\beta}B$.
\end{theorem}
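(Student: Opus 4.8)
The plan is to prove the statement by induction on the structure of the term $M$. The driving intuition is that a functional specification makes both the axioms $\set A$ and the rules $\set R$ deterministic, so each syntactic constructor of $M$ pins down the type up to $\beta$-conversion. The one real complication is the \emph{Conversion} rule: since it can be appended to the end of any derivation, we cannot simply read off the last rule from the shape of $M$. I would therefore first establish a family of \emph{generation} (inversion) lemmas, each proved by a separate induction on the typing derivation, whose role is precisely to absorb trailing applications of \emph{Conversion}.

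Concretely, the generation lemmas I would state are: if $\typing{\Gamma}{s}{A}$ with $s\in\set S$ then $(s:s_2)\in\set A$ and $A\equiv_\beta s_2$; if $\typing{\Gamma}{x}{A}$ then $(x:C)\in\Gamma$ and $A\equiv_\beta C$; if $\typing{\Gamma}{\app{P}{Q}}{A}$ then $\typing{\Gamma}{P}{\prod{x}{C}{D}}$, $\typing{\Gamma}{Q}{C}$ and $A\equiv_\beta\subst{Q}{x}{D}$; if $\typing{\Gamma}{\abs{x}{C}{P}}{A}$ then $\typing{\cdecl{\Gamma}{x}{C}}{P}{D}$ and $A\equiv_\beta\prod{x}{C}{D}$ for some $D$; and if $\typing{\Gamma}{\prod{x}{C}{D}}{A}$ then $(s_1,s_2,s_3)\in\set R$ with $\typing{\Gamma}{C}{s_1}$, $\typing{\cdecl{\Gamma}{x}{C}}{D}{s_2}$ and $A\equiv_\beta s_3$. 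In the induction proving each of these, the \emph{Conversion} case merely composes the newly introduced conversion with the one supplied by the induction hypothesis, using transitivity of $\equiv_\beta$; every other rule matches the head constructor and is immediate.

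With these lemmas in hand, the induction on $M$ becomes routine. For a sort $s$, the two hypotheses give $A\equiv_\beta s_2$ and $B\equiv_\beta s_2'$ with $(s:s_2),(s:s_2')\in\set A$, so functionality of $\set A$ forces $s_2=s_2'$ and hence $A\equiv_\beta B$; a variable is handled identically via $(x:C)\in\Gamma$. For an application $\app{P}{Q}$, the induction hypothesis applied to $P$ gives $\prod{x}{C}{D}\equiv_\beta\prod{x}{C'}{D'}$, and Product compatibility yields $C\equiv_\beta C'$ and $D\equiv_\beta D'$, whence $\subst{Q}{x}{D}\equiv_\beta\subst{Q}{x}{D'}$. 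For an abstraction $\abs{x}{C}{P}$, the induction hypothesis applied to $P$ in the extended context gives $D\equiv_\beta D'$ and therefore $\prod{x}{C}{D}\equiv_\beta\prod{x}{C}{D'}$. For a product $\prod{x}{C}{D}$, the induction hypotheses applied to $C$ and to $D$ give $s_1\equiv_\beta s_1'$ and $s_2\equiv_\beta s_2'$, and then functionality of $\set R$ gives $s_3=s_3'$.

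I expect the main obstacle to be the \emph{Conversion} rule, and more precisely getting the generation lemmas set up so that the subsequent induction on $M$ never has to re-examine the derivation. A secondary point that must be argued rather than assumed is the step in the product case where $s_1\equiv_\beta s_1'$ is upgraded to $s_1=s_1'$ (and likewise for $s_2$): this is not automatic and relies on Confluence, which guarantees that distinct sorts, each being its own unique $\beta$-normal form, cannot be convertible. Beyond these two points everything reduces to transitivity of $\equiv_\beta$, compatibility of substitution with conversion, and the two functionality hypotheses.
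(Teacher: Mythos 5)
Your proposal is correct: it is the standard textbook proof of uniqueness of types for functional PTSs, which the paper itself does not prove but imports from the literature (Barendregt's survey, the paper's cited source), and your reconstruction matches that standard argument — generation lemmas proved by induction on the derivation to absorb trailing uses of \emph{Conversion}, then induction on the structure of $M$ using functionality of $\set A$ and $\set R$, product compatibility in the application case, and confluence to promote $s_1\equiv_\beta s_1'$ to $s_1=s_1'$ since sorts are normal forms. The only point you leave implicit is the variable case, where $(x:C)\in\Gamma$ and $(x:C')\in\Gamma$ yield $C=C'$ only because well-formed contexts declare each variable at most once (the side condition $x\not\in\Gamma$ in \emph{Declaration}, available here by correctness of types); this is routine and does not affect the validity of the proof.
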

In the rest of the paper, all the pure type systems we will consider
will be functional.

\section{\label{sec:lambda-Pi-calculus-modulo}The $\protect\lpi$ calculus
modulo rewriting}

The\emph{ $\lpi$ calculus}, also known as $LF$ and as $\pts P$,
is one of the simplest forms of $\lambda$ calculus with dependent
types, and corresponds through the Curry-de Bruijn-Howard correspondence
to a minimal first-order logic of higher-order terms. As mentioned
in Example \ref{ex:pure-type-systems}, it can be defined as the functional
pure type system $\pts P$ with the following specification:
\[
\begin{array}{rcl}
\set S & = & \Type,\Kind\\
\set A & = & \Type:\Kind\\
\set R & = & (\Type,\Type),(\Type,\Kind)
\end{array}
\]

The\emph{ $\lpi$ calculus modulo rewriting} extends the $\lpi$ calculus
with rewrite rules. By equating terms modulo a set of rewrite rules
$R$ in addition to $\alpha$ and $\beta$ equivalence, it can type
more terms using the conversion rule, and therefore express theories
that are more complex. The calculus can be seen as a variant of Martin-L\"{o}f's
logical framework \cite{nordstrom_programming_1990,luo_computation_1994}
where equalities are expressed as rewrite rules.

We recall that a rewrite rule is a triple $\rewriterule{\Delta}MN$
where $\Delta$ is a context and $M,N$ are terms such that $\freevars N\subseteq\freevars M$.
A set of rewrite rules $R$ induces a reduction relation on terms,
written $\reduces_{R}$, defined as the smallest contextual closure
such that if $\rewriterule{\Delta}MN\in R$ then $\sigma(M)\reduces_{R}\sigma(N)$
for any substitution $\sigma$ of the variables in $\Delta$. We define
the relation $\reduces_{\beta R}$ as $\reduces_{\beta}\cup\reduces_{R}$,
the relation $\equiv_{R}$ as the smallest congruence containing $\reduces_{R}$,
and the relation $\equiv_{\beta R}$ as the smallest congruence containing
$\reduces_{\beta R}$.
\begin{definition}
A rewrite rule $\rewriterule{\Delta}MN$ is \emph{well-typed in a
context $\Sigma$} when there is a term $A$ such that $\typing[\lpi]{\Sigma,\Delta}MA$
and $\typing[\lpi]{\cconcat{\Sigma}{\Delta}}NA$.
\end{definition}

\begin{definition}
Let $\Sigma$ be a well-formed $\lpi$ context and $R$ a set of rewrite
rules that are well-typed in $\Sigma$. The \emph{$\lambda\Pi$ calculus
modulo $(\Sigma,R)$}, written $\lpimod{(\Sigma,R)}$, is defined
with the same syntax as the $\lambda\Pi$ calculus, but with the typing
rules of Figure \ref{fig:typing-rules-lpimod}. We write $\lpi/$
instead of $\lpimod{(\Sigma,R)}$ when the context is unambiguous.
\end{definition}
\begin{figure}
\begin{mathpar}
\infer[Empty]{ }{
	\wellformed[\lpimod{}]{\cempty}}

\infer[Declaration]{
	\typing[\lpimod{}]{\Gamma}{A}{s} \\
	x\not\in\cconcat{\Sigma}{\Gamma}}{
	\wellformed[\lpimod{}]{\cdecl{\Gamma}{x}{A}}}

\infer[Variable]{
	\wellformed[\lpimod{}]{\Gamma} \\
	(x:A)\in\cconcat{\Sigma}{\Gamma}}{
	\typing[\lpimod{}]{\Gamma}{x}{A}}

\infer[Sort]{
	\wellformed[\lpimod{}]{\Gamma} \\
	(s_1:s_2)\in{\set{A}}}{
	\typing[\lpimod{}]{\Gamma}{s_1}{s_2}}

\infer[Product]{
	\typing[\lpimod{}]{\Gamma}{A}{s_1} \\
	\typing[\lpimod{}]{\cdecl{\Gamma}{x}{A}}{B}{s_2} \\
	(s_1,s_2,s_3)\in{\set{R}}}{
	\typing[\lpimod{}]{\Gamma}{\prod{x}{A}{B}}{s_3}}

\infer[Abstraction]{
	\typing[\lpimod{}]{\cdecl{\Gamma}{x}{A}}{M}{B} \\
	\typing[\lpimod{}]{\Gamma}{\prod{x}{A}{B}}{s}}{
	\typing[\lpimod{}]{\Gamma}{\abs{x}{A}{M}}{\prod{x}{A}{B}}}

\infer[Application]{
	\typing[\lpimod{}]{\Gamma}{M}{\prod{x}{A}{B}} \\
	\typing[\lpimod{}]{\Gamma}{N}{A}}{
	\typing[\lpimod{}]{\Gamma}{\app{M}{N}}{\subst{N}{x}{B}}}

\infer[Conversion]{
	\typing[\lpimod{}]{\Gamma}{M}{A} \\
	\typing[\lpimod{}]{\Gamma}{B}{s} \\
	A\equiv_{\beta R}B}{
	\typing[\lpimod{}]{\Gamma}{M}{B}}
\end{mathpar}

\protect\caption{\label{fig:typing-rules-lpimod}Typing rules of $\protect\lpimod{(\Sigma,R)}$}
\end{figure}

\begin{example}
Let $\Sigma$ be the context 
\[
\alpha:\Type,c:\alpha,f:\arr{\alpha}{\Type}
\]
 and $R$ be the following rewrite rule 
\[
\rewriterule{\cempty}{\app fc}{\arr{\prod y{\alpha}{\app fy}}{\app fy}}
\]
 Then the term 
\[
\delta=\abs x{\app fc}{\app{\app xc}x}
\]
 is well-typed in $\lpimod{(\Sigma,R)}$:
\[
\typing[\lpimod{(\Sigma,R)}]{}{\delta}{\arr{\app fc}{\app fc}}
\]
Note that the term $\delta$ would not be well-typed without the rewrite
rule, even if we replace all the occurrences of $\app fc$ in $\delta$
by $\arr{\prod y{\alpha}{\app fy}}{\app fy}$.
\end{example}
The system $\lpi$ is a pure type system and therefore enjoys all
the properties mentioned in Section \ref{sec:pure-type-systems}.
The behavior of $\lpimod{\left(\Sigma,R\right)}$ however depends
on the choice of $\left(\Sigma,R\right)$. In particular, some properties
analogous to those of pure type systems depend on the confluence of
the relation $\reduces_{\beta R}$.
\begin{theorem}[Correctness of types]
 If $\typing[\lpimod{}]{\Gamma}MA$ then $\wellformed[\lpimod{}]{\Gamma}$
and either $\typing[\lpimod{}]{\Gamma}As$ for some $s\in\left\{ \Type,\Kind\right\} $
or $A=\Kind$.
\end{theorem}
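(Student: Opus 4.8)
The plan is to argue by induction on the derivation of $\typing[\lpimod{}]{\Gamma}{M}{A}$, performing a case analysis on the last typing rule used. In every case the well-formedness conclusion $\wellformed[\lpimod{}]{\Gamma}$ is immediate: it is either a premise of the rule (Variable, Sort) or it follows from the induction hypothesis applied to a premise whose context is already $\Gamma$ (Product, Abstraction, Application, Conversion). So throughout I concentrate on showing that $A$ is a $\Gamma$-type, i.e.\ that $\typing[\lpimod{}]{\Gamma}{A}{s}$ for some $s$ or $A=\Kind$.

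Two cases are immediate because the relevant rule already carries the required information. For Conversion, the rule demands $\typing[\lpimod{}]{\Gamma}{B}{s}$ where $B$ is exactly the type in the conclusion, so the first disjunct holds verbatim. For Abstraction the conclusion type is a product $\prod{x}{A'}{B'}$ and the second premise is precisely $\typing[\lpimod{}]{\Gamma}{\prod{x}{A'}{B'}}{s}$, again giving the first disjunct. The Sort and Product cases are then settled by inspecting the small specification of $\lpi$: for Sort the only axiom is $(\Type:\Kind)$, so the type is $\Kind$ and the second disjunct holds; for Product every rule of $\set R$ has third component in $\{\Type,\Kind\}$, and when that component is $\Type$ I retype it through the Sort rule as $\typing[\lpimod{}]{\Gamma}{\Type}{\Kind}$ (using $\wellformed[\lpimod{}]{\Gamma}$ from the induction hypothesis), while when it is $\Kind$ the second disjunct holds.

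The two cases requiring auxiliary lemmas are Variable and Application. For Variable, where $(x:A)\in\cconcat{\Sigma}{\Gamma}$, I would invoke a context-validity lemma — every declaration occurring in the well-formed context $\cconcat{\Sigma}{\Gamma}$ is well-typed in its prefix — together with weakening (thinning) to conclude $\typing[\lpimod{}]{\Gamma}{A}{s}$. The Application case is the crux. There the conclusion type is $\subst{N}{x}{B}$, obtained from $\typing[\lpimod{}]{\Gamma}{M}{\prod{x}{A'}{B}}$ and $\typing[\lpimod{}]{\Gamma}{N}{A'}$. Applying the induction hypothesis to the first premise shows that $\prod{x}{A'}{B}$ is a $\Gamma$-type; since a product is not a sort, this forces $\typing[\lpimod{}]{\Gamma}{\prod{x}{A'}{B}}{s}$ for some $s$. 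A generation (inversion) lemma for products then extracts $\typing[\lpimod{}]{\cdecl{\Gamma}{x}{A'}}{B}{s_2}$ for some sort $s_2$, and the substitution lemma yields $\typing[\lpimod{}]{\Gamma}{\subst{N}{x}{B}}{s_2}$ (substituting into $s_2$ leaves it unchanged), which is the first disjunct.

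The main obstacle is therefore not the induction itself but assembling the supporting lemmas — weakening, context validity, product generation, and substitution — for $\lpimod{}$ directly, since $\lpimod{}$ is not itself a pure type system and these cannot simply be imported from Section \ref{sec:pure-type-systems}. Each is nonetheless a routine induction on derivations whose Conversion case only manipulates the congruence $\equiv_{\beta R}$ through closure under substitution and transitivity, so none of them needs confluence of $\reduces_{\beta R}$; in particular the weak form of product generation I use holds because only the Product and Conversion rules can ever assign a type to a syntactic product. Confluence becomes indispensable only for the sharper properties analogous to those of pure type systems, such as product compatibility and subject reduction.
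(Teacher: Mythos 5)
The paper does not actually prove this theorem---it is recalled as a standard property of $\lpimod{(\Sigma,R)}$, stated deliberately before the confluence hypothesis is introduced---and your argument is exactly the standard one: induction on the derivation, with well-formedness of $\Gamma$ read off the premises, and with weakening, context validity, weak product generation, and substitution as the routine auxiliary lemmas (for the $\Sigma$-part of the Variable case one also silently lifts the $\lpi$ typing of $\Sigma$'s declarations into $\lpimod{}$, which is immediate since every $\lpi$ rule is subsumed). Your key observation is also correct and consistent with the paper's ordering of its theorems: none of these steps needs confluence of $\reduces_{\beta R}$, since the weak generation lemma only peels Conversion steps off a derivation whose subject is a syntactic product until it reaches a Product rule, and never has to compare convertible types.
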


\begin{theorem}[Top-sort types]
 If $\typing[\lpimod{}]{\Gamma}A{\Kind}$ then either $A=\Type$
or $A=\prod xBC$ for some terms $B,C$ such that $\typing[\lpimod{}]{\Gamma,x:B}C{\Kind}$.
\end{theorem}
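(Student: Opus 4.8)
The plan is to prove this sharpened analogue of the PTS top-sort property (Theorem~\ref{thm:pts-top-sort-types}) by induction on the derivation of $\typing[\lpimod{}]{\Gamma}{A}{\Kind}$, reading off the shape of $A$ from the last rule. Before starting, I would isolate the one auxiliary fact that does all the real work: the top-sort $\Kind$ is \emph{not typable}, i.e.\ there are no $\Gamma,B$ with $\typing[\lpimod{}]{\Gamma}{\Kind}{B}$. This is a short separate induction on derivations: among the structural rules only Sort can have the constant $\Kind$ as its subject (the others yield a variable, a product, an abstraction, or an application), and Sort would demand an axiom $(\Kind:B)\in\set{A}$, of which there is none since $\set{A}=(\Type:\Kind)$; the Conversion rule leaves the subject unchanged and so is dispatched by the induction hypothesis. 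An immediate corollary, used repeatedly, is that $\Kind$ can never be the declared type of a variable in $\cconcat{\Sigma}{\Gamma}$, since forming any declaration requires typing its type.

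With this in hand, the main induction collapses to two surviving cases. The Sort rule with conclusion type $\Kind$ forces $(A:\Kind)\in\set{A}$, hence $A=\Type$, the first alternative. The Product rule typing $A=\prod xBC$ by $\Kind$ must use a rule $(s_1,s_2,\Kind)\in\set{R}$, and since $\set{R}=(\Type,\Type),(\Type,\Kind)$ the only possibility is $(\Type,\Kind,\Kind)$; thus $s_2=\Kind$ and the matching premise is exactly $\typing[\lpimod{}]{\cdecl{\Gamma}{x}{B}}{C}{\Kind}$, which is the second alternative. All remaining rules are impossible: Abstraction assigns a product type rather than the constant $\Kind$; Variable would require $\Kind$ as a declared type, excluded by the corollary; and Conversion would carry the premise $\typing[\lpimod{}]{\Gamma}{\Kind}{s}$, contradicting the auxiliary fact.

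The one case that needs genuine work is Application, where $A=\app MN$ and the assigned type is $\subst{N}{x}{E}$ with $\typing[\lpimod{}]{\Gamma}{M}{\prod xDE}$ and $\typing[\lpimod{}]{\Gamma}{N}{D}$; here I must show $\subst{N}{x}{E}=\Kind$ is impossible. Since $\Kind$ is atomic, the substitution can equal $\Kind$ only if $N=\Kind$ — contradicting the auxiliary fact via $\typing[\lpimod{}]{\Gamma}{N}{D}$ — or if $E=\Kind$. In the latter case Correctness of types makes $\prod xD{\Kind}$ a $\Gamma$-type, and being a product it is typed by some sort, so $\typing[\lpimod{}]{\Gamma}{\prod xD{\Kind}}{s}$; a generation (inversion) step on this product typing then produces $\typing[\lpimod{}]{\cdecl{\Gamma}{x}{D}}{\Kind}{s_2}$, again contradicting the auxiliary fact.

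The point I would flag as the main obstacle is methodological rather than computational: the whole argument must avoid any appeal to confluence of $\reduces_{\beta R}$, which is \emph{not} assumed at this stage. This is exactly why I route the Conversion and Application cases through the ``$\Kind$ is not typable'' lemma and through a subject-preserving generation step, instead of comparing convertible types. That generation step is itself confluence-free precisely because the Conversion rule never changes the subject: unwinding a chain of conversions on $\prod xD{\Kind}$ keeps the product fixed until a Product rule is reached, which directly exposes the typing of the codomain $\Kind$.
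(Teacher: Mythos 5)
Your proof is correct. The paper itself states this theorem without proof (it appears among the standard properties of $\lpimod{(\Sigma,R)}$, stated \emph{before} the confluence assumption, unlike product compatibility, subject reduction, and uniqueness of types), so there is no proof in the paper to diverge from; your argument supplies exactly the standard one. The key points are all handled soundly: the auxiliary lemma that $\Kind$ is untypable disposes of the Variable and Conversion cases (for Conversion, the premise $\typing[\lpimod{}]{\Gamma}{B}{s}$ with $B=\Kind$ is indeed impossible, so no conversion can ever assign the type $\Kind$); the case analysis on $\set R=\{(\Type,\Type,\Type),(\Type,\Kind,\Kind)\}$ forces $s_2=\Kind$ in the Product case, which is precisely the sharpened conclusion $\typing[\lpimod{}]{\cdecl{\Gamma}{x}{B}}{C}{\Kind}$; and in the Application case your split of $\subst{N}{x}{E}=\Kind$ into $N=\Kind$ versus $E=\Kind$, with the latter dispatched by correctness of types plus a subject-preserving generation step on $\prod{x}{D}{\Kind}$, is valid. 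Your methodological remark is also exactly right: since conversion never changes the subject, peeling conversions off the derivation of $\typing[\lpimod{}]{\Gamma}{\prod{x}{D}{\Kind}}{T}$ reaches a Product rule without ever comparing convertible types, so no appeal to confluence of $\reduces_{\beta R}$ is needed, consistent with where the paper places this theorem.
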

Assuming $\reduces_{\beta R}$ is confluent, the following properties
hold \cite{blanqui_definitions_2005}.
\begin{theorem}[Product compatibility]
 If $\prod xA{B\equiv_{\beta R}\prod x{A'}{B'}}$ then $A\equiv_{\beta R}A'$
and $B\equiv_{\beta R}B'$.
\end{theorem}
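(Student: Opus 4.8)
The plan is to deduce product compatibility from a single stability property of products under $\reduces_{\beta R}$, using confluence exactly as one does in the corresponding proof for ordinary pure type systems. Since $\reduces_{\beta R}$ is assumed confluent, the induced conversion $\equiv_{\beta R}$ has the Church--Rosser property, so any two convertible terms share a common reduct. From $\prod xAB \equiv_{\beta R} \prod x{A'}{B'}$ I therefore obtain a term $C$ with $\prod xAB \reduces_{\beta R}^{*} C$ and $\prod x{A'}{B'} \reduces_{\beta R}^{*} C$. The whole argument then rests on controlling the shape of the reducts of a product.

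The key lemma I would isolate is the following: if $\prod xAB \reduces_{\beta R}^{*} C$, then $C$ is again a product, say $C = \prod x{\hat A}{\hat B}$, with $A \reduces_{\beta R}^{*} \hat A$ and $B \reduces_{\beta R}^{*} \hat B$. I would prove it by induction on the length of the reduction, the only real content being the one-step case. A step $\prod xAB \reduces_{\beta R} D$ cannot fire at the root on the $\beta$ side, since $\prod xAB$ is not a $\beta$-redex (its head is the product constructor, not an application); hence any $\beta$-step reduces exactly one of $A$, $B$ and leaves the outer product intact. It then remains to exclude a rewrite step $\reduces_{R}$ acting at the root, after which one concludes $D = \prod x{A_1}{B_1}$ with $A \reduces_{\beta R}^{*} A_1$ and $B \reduces_{\beta R}^{*} B_1$; transitivity closes the induction.

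The step I expect to be the main obstacle is precisely ruling out a root rewrite of a product, and this is genuinely where a hypothesis beyond confluence is needed: confluence alone does not imply product compatibility if a product can be rewritten away at its head. The argument uses the admissible form of the rewrite rules, namely that every left-hand side $M$ is a pattern headed by a symbol declared in $\Sigma$ rather than by the product constructor; consequently no instance $\sigma(M)$ has $\Pi$ as its head symbol, so a product is never an $R$-redex at the root and every $\reduces_{R}$ step likewise takes place strictly inside $A$ or $B$. Granting this, the product constructor is preserved by every $\reduces_{\beta R}$ step and the stability lemma follows.

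Finally I would assemble the pieces. Applying the stability lemma to $\prod xAB \reduces_{\beta R}^{*} C$ yields $C = \prod x{\hat A}{\hat B}$ with $A \reduces_{\beta R}^{*} \hat A$ and $B \reduces_{\beta R}^{*} \hat B$; applying it to $\prod x{A'}{B'} \reduces_{\beta R}^{*} C$ yields, for this same $C$, that $A' \reduces_{\beta R}^{*} \hat A$ and $B' \reduces_{\beta R}^{*} \hat B$. Hence $A \equiv_{\beta R} \hat A \equiv_{\beta R} A'$ and $B \equiv_{\beta R} \hat B \equiv_{\beta R} B'$, which is exactly the claim.
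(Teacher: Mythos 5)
Your proof is correct, but you should know that the paper contains no proof to compare it against: product compatibility is stated among the properties that hold ``assuming $\reduces_{\beta R}$ is confluent'' and is delegated wholesale to the cited work of Blanqui, so yours is a self-contained reconstruction. The reconstruction follows the standard route (Church--Rosser for $\equiv_{\beta R}$ from confluence, then stability of the product constructor under $\reduces_{\beta R}$), and the step you single out as the crux really is the crux: confluence alone does not give the theorem. Indeed, as literally stated for an arbitrary $(\Sigma,R)$ the theorem is false, since the paper's definition of a rewrite rule only demands well-typedness and $\freevars N\subseteq\freevars M$, so nothing forbids a product as a left-hand side. Taking $\Sigma=(a:\Type,\,b:\Type,\,a':\Type,\,b':\Type,\,c:\Type)$ with the two rules $\rewriterule{\cempty}{\arr ab}c$ and $\rewriterule{\cempty}{\arr{a'}{b'}}c$ yields a confluent $\reduces_{\beta R}$ (ground left-hand sides in normal form, no critical pairs with $\beta$ or with each other) in which $\arr ab\equiv_{\beta R}\arr{a'}{b'}$ but $a\not\equiv_{\beta R}a'$. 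The hypothesis you import --- every left-hand side is headed by a declared constant, so no instance is a product --- is exactly the pattern format assumed in Blanqui's setting, and it is immediate to verify for the one rewrite system the paper actually needs it for: in $R_{S}$ both left-hand sides $\app{\tterm{s_{2}}}{\tsort{s_{1}}}$ and $\app{\tterm{s_{3}}}{(\app{\app{\tprod{s_{1}}{s_{2}}{s_{3}}}A}B)}$ are applications, a substitution instance of an application is again an application, and a $\beta$-redex is an application too, so a product is never a root redex and every step inside $\prod xAB$ reduces $A$ or $B$ in place, which is your stability lemma; the assembly through a common reduct of the two products is then routine. What your approach buys over the paper's bare citation is precisely that it makes visible the otherwise implicit restriction on rewrite rules without which the theorem, in the generality in which the paper states it, would fail.
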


\begin{theorem}[Subject reduction]
 If $\typing[\lpimod{}]{\Gamma}MA$ and $M\reduces_{\beta R}^{*}M'$
then $\typing[\lpimod{}]{\Gamma}{M'}A$.
\end{theorem}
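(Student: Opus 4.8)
The plan is to prove the claim first for a single step $M \reduces_{\beta R} M'$ and then lift it to $\reduces_{\beta R}^{*}$ by induction on the length of the reduction sequence, the empty sequence being handled by reflexivity. For the single-step case I would argue by induction on the derivation of $\typing[\lpimod{}]{\Gamma}{M}{A}$, splitting on the last typing rule and on the position of the contracted redex. Beforehand I would establish the standard auxiliary results, all of which rely on the confluence of $\reduces_{\beta R}$: a substitution lemma (including its simultaneous form, substituting for a whole context), a context-conversion lemma (a judgement is preserved when a type declared in the context is replaced by a well-formed, $\equiv_{\beta R}$-convertible one), generation/inversion lemmas for abstractions and applications, and uniqueness of types up to $\equiv_{\beta R}$. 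Product compatibility and Correctness of types are available from the preceding theorems.

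The congruence cases, where the redex lies strictly inside a proper subterm, follow from the induction hypothesis applied to the corresponding premise. When the redex sits in a type annotation or in the domain of a product or abstraction, the local context passes from $\cdecl \Gamma x A$ to $\cdecl \Gamma x {A'}$ with $A \equiv_{\beta R} A'$, so I would re-derive the remaining premises under the new context by the context-conversion lemma before re-applying the rule. When the redex lies in the argument of an application, $\app P Q \reduces_{\beta R} \app P {Q'}$, the induction hypothesis gives $\typing[\lpimod{}]{\Gamma}{Q'}{A_1}$ (where $A_1$ is the domain of $P$'s type) and hence $\typing[\lpimod{}]{\Gamma}{\app P {Q'}}{\subst {Q'} x B}$; since $\subst Q x B \equiv_{\beta R} \subst {Q'} x B$ and the former is a type by Correctness of types, the Conversion rule recovers the original type $\subst Q x B$. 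The Conversion case of the induction is immediate by re-applying Conversion after the induction hypothesis.

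The genuinely new cases are the root reductions. For a $\beta$-redex $\app{(\abs x {A'} P)}{Q} \reduces_{\beta} \subst Q x P$, I would invert the typing of the abstraction to obtain $\typing[\lpimod{}]{\cdecl \Gamma x {A'}}{P}{B'}$ with $\prod x {A'}{B'} \equiv_{\beta R} \prod x {A_1}{B}$, where $\prod x {A_1}{B}$ is read off from the application; Product compatibility then yields $A_1 \equiv_{\beta R} A'$ and $B \equiv_{\beta R} B'$. Re-typing $Q$ at $A'$ by Conversion and applying the substitution lemma gives $\typing[\lpimod{}]{\Gamma}{\subst Q x P}{\subst Q x {B'}}$, and a final Conversion (using $\subst Q x {B'} \equiv_{\beta R} \subst Q x B$) delivers the required type. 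For a rewrite step at the root, $M = \sigma(L)$ with $\rewriterule \Delta L {L'} \in R$ and $M \reduces_R \sigma(L')$, the well-typedness of the rule provides a type $C$ with $\typing[\lpi]{\cconcat \Sigma \Delta}{L}{C}$ and $\typing[\lpi]{\cconcat \Sigma \Delta}{L'}{C}$, which I would transport along $\sigma$ into the ambient context and then close by Conversion.

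This rewrite case is where I expect the main difficulty. From $\typing[\lpimod{}]{\Gamma}{\sigma(L)}{A}$ one must first read off that $\sigma$ is a well-typed substitution for $\Delta$ relative to $\Gamma$, i.e. $\typing[\lpimod{}]{\Gamma}{\sigma(y)}{\sigma(D)}$ for each $(y:D)$ in $\Delta$; only then does the simultaneous substitution lemma, applied to the two rule judgements (read as $\lpimod{}$ derivations over the signature $\Sigma$), give $\typing[\lpimod{}]{\Gamma}{\sigma(L)}{\sigma(C)}$ and $\typing[\lpimod{}]{\Gamma}{\sigma(L')}{\sigma(C)}$. Uniqueness of types up to $\equiv_{\beta R}$ forces $A \equiv_{\beta R} \sigma(C)$, and Conversion concludes $\typing[\lpimod{}]{\Gamma}{\sigma(L')}{A}$. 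The extraction of a well-typed $\sigma$ is the delicate point: typing is not syntax-directed enough to recover $\sigma$ from the pattern $L$, the types matched against $\sigma(D)$ agree only up to $\equiv_{\beta R}$, and the rule is originally typed in $\lpi$ rather than in $\lpimod{}$. Reconciling these convertibilities and the matching is exactly where confluence, Product compatibility and the well-typedness hypothesis on the rewrite rules are indispensable, and it is for this step that I would lean on Blanqui's analysis.
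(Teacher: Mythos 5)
First, a point of reference: the paper itself does not prove this theorem. It is stated under the standing hypothesis that $\reduces_{\beta R}$ is confluent and is imported wholesale from Blanqui \cite{blanqui_definitions_2005}, so there is no in-paper proof to compare yours against; your attempt has to be judged against the cited literature. Measured that way, your overall architecture is the standard one and most of it is sound: single step plus induction on the reduction sequence, induction on the typing derivation for the single step, congruence cases via context conversion, the $\beta$-root case via inversion, product compatibility (which indeed needs confluence) and the substitution lemma, and uniqueness of types is legitimately available inside the argument since its proof uses only product compatibility, not subject reduction.

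The genuine gap is exactly where you placed your unease, but it is worse than a ``delicate point'': the claim that $\typing[\lpimod{}]{\Gamma}{\sigma(L)}{A}$ lets one ``read off'' that $\sigma$ is a well-typed substitution for $\Delta$ is \emph{not provable} for arbitrary well-typed rules, even assuming confluence. Well-typedness of $L$ and $L'$ in $\cconcat{\Sigma}{\Delta}$ constrains the rule, not its instances: for a general left-hand side (variables under binders, inside type annotations, or non-linear occurrences), a typable instance of $L$ need not arise from a substitution that is itself typable, and subject reduction can genuinely fail; this is precisely why Blanqui's theorem carries shape conditions on left-hand sides (algebraic patterns headed by a declared constant, with variables in accessible argument positions) and why a later line of work on type safety of rewrite rules in dependent type theory exists at all. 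So ``leaning on Blanqui's analysis'' is not a finishing touch — it silently imports hypotheses that your plan, and for that matter the theorem as baldly stated, do not mention. The repair is to note that the rules of $R_{S}$ satisfy such a pattern condition: $\app{\tterm{s_{2}}}{\tsort{s_{1}}}$ is closed, and in $\app{\tterm{s_{3}}}{(\app{\app{\tprod{s_{1}}{s_{2}}{s_{3}}}{A}}{B})}$ the variables $A$ and $B$ occur as immediate arguments of declared constants, so inversion of the application typings, together with confluence and product compatibility, recovers $\typing[\lpimod{}]{\Gamma}{\sigma(A)}{\ttype{s_{1}}}$ and $\typing[\lpimod{}]{\Gamma}{\sigma(B)}{\arr{\app{\tterm{s_{1}}}{\sigma(A)}}{\ttype{s_{2}}}}$ up to conversion; from there your substitution-plus-uniqueness-plus-conversion ending closes the rewrite case as you describe.
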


\begin{theorem}[Uniqueness of types]
 If $\typing[\lpimod{}]{\Gamma}MA$ and $\typing[\lpimod{}]{\Gamma}MB$
then $A\equiv_{\beta R}B$.
\end{theorem}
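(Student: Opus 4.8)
The plan is to prove the statement by structural induction on the term $M$, mirroring the classical proof of uniqueness of types for functional pure type systems and adapting each case to the rewriting setting. The engine of the argument is a family of \emph{generation} (inversion) lemmas which, for each syntactic form of $M$, extract from any derivation of $\typing[\lpimod{}]{\Gamma}{M}{A}$ a canonical shape for $A$ up to $\equiv_{\beta R}$, absorbing every use of the Conversion rule into a convertibility side-condition. I would establish these lemmas first, by induction on derivations, exactly as for ordinary pure type systems; the only differences are that the relevant congruence is now $\equiv_{\beta R}$ rather than $\equiv_{\beta}$ and that the Variable and Declaration rules look $x$ up in $\cconcat{\Sigma}{\Gamma}$ rather than in $\Gamma$, neither of which disturbs the inductive structure. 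I will also use, as auxiliary facts, that the specification of $\lpi$ is functional and that $\equiv_{\beta R}$ is preserved by substitution.

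Granting the generation lemmas, I split on the form of $M$. If $M$ is a sort $s$, generation gives $A\equiv_{\beta R}s_{2}$ and $B\equiv_{\beta R}s_{2}'$ with $(s:s_{2}),(s:s_{2}')\in\set A$; functionality of $\set A$ forces $s_{2}=s_{2}'$, so $A\equiv_{\beta R}B$. If $M$ is a variable $x$, then both $A$ and $B$ are convertible to the unique type declared for $x$ in $\cconcat{\Sigma}{\Gamma}$, and the claim is immediate. If $M=\abs{x}{C}{M_{1}}$, generation yields $\typing[\lpimod{}]{\cdecl{\Gamma}{x}{C}}{M_{1}}{B_{1}}$ and $\typing[\lpimod{}]{\cdecl{\Gamma}{x}{C}}{M_{1}}{B_{2}}$ with $A\equiv_{\beta R}\prod{x}{C}{B_{1}}$ and $B\equiv_{\beta R}\prod{x}{C}{B_{2}}$; the induction hypothesis applied to $M_{1}$ gives $B_{1}\equiv_{\beta R}B_{2}$, and closing the product under the congruence gives $A\equiv_{\beta R}B$.

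The two delicate cases are application and product. For $M=\app{M_{1}}{M_{2}}$, generation provides $\typing[\lpimod{}]{\Gamma}{M_{1}}{\prod{x}{A_{1}}{B_{1}}}$ and $\typing[\lpimod{}]{\Gamma}{M_{1}}{\prod{x}{A_{2}}{B_{2}}}$ with $A\equiv_{\beta R}\subst{M_{2}}{x}{B_{1}}$ and $B\equiv_{\beta R}\subst{M_{2}}{x}{B_{2}}$. Here I invoke the induction hypothesis on $M_{1}$ to get $\prod{x}{A_{1}}{B_{1}}\equiv_{\beta R}\prod{x}{A_{2}}{B_{2}}$, then apply Product compatibility (available under the standing confluence hypothesis) to obtain $B_{1}\equiv_{\beta R}B_{2}$, and finally use that substitution preserves $\equiv_{\beta R}$ to conclude $A\equiv_{\beta R}B$. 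For $M=\prod{x}{A_{1}}{B_{1}}$, generation supplies rules $(s_{1},s_{2},s_{3}),(s_{1}',s_{2}',s_{3}')\in\set R$ with $A\equiv_{\beta R}s_{3}$ and $B\equiv_{\beta R}s_{3}'$, together with derivations typing $A_{1}$ at $s_{1},s_{1}'$ and $B_{1}$ at $s_{2},s_{2}'$; the induction hypothesis then gives $s_{1}\equiv_{\beta R}s_{1}'$ and $s_{2}\equiv_{\beta R}s_{2}'$.

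The crux, and the step I expect to be the main obstacle, is turning these sort convertibilities into sort equalities: from $s_{1}\equiv_{\beta R}s_{1}'$ I must conclude $s_{1}=s_{1}'$. This is exactly where confluence of $\reduces_{\beta R}$ is indispensable. Provided the rewrite rules never fire on a bare sort, each of $\Type$ and $\Kind$ is a $\reduces_{\beta R}$-irreducible normal form, so by confluence two convertible sorts share a common reduct that must equal both, forcing them to coincide syntactically. Verifying this sort-irreducibility is therefore the one genuinely rewriting-specific obligation of the whole proof. Once $s_{1}=s_{1}'$ and $s_{2}=s_{2}'$, functionality of $\set R$ gives $s_{3}=s_{3}'$, whence $A\equiv_{\beta R}s_{3}=s_{3}'\equiv_{\beta R}B$, completing the induction.
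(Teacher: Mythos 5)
The paper does not actually prove this theorem: Uniqueness of types is one of the properties it imports from Blanqui's work (the citation preceding the theorem block), stated under the standing hypothesis that $\reduces_{\beta R}$ is confluent. So there is no in-paper proof to match your argument against; judged on its own, your proof is the standard generation-lemma induction, and it is essentially correct. The ingredients you rely on are all legitimately available in this setting: the generation lemmas go through by induction on derivations exactly as for pure type systems, the specification of $\lpi$ is functional, product compatibility is explicitly provided by the paper under the same confluence hypothesis (which you correctly invoke in the application case), and substitution preserves $\equiv_{\beta R}$.

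The one fine point you flag --- turning $s_{2}\equiv_{\beta R}s_{2}'$ into $s_{2}=s_{2}'$ --- deserves the scrutiny you give it, and two remarks are in order. First, under the paper's liberal definition of rewrite rules, sort-irreducibility is not automatic: $\Type$ is typeable (of type $\Kind$), so nothing in the definition of a well-typed rule forbids a left-hand side equal to $\Type$; it does hold for the systems $\lpimod S$ of interest, whose rule left-hand sides are headed by the constants $\tterm s$, but in the generic statement it is an assumption, as you say. Second, the obligation can in fact be dispensed with entirely by a small case analysis, making the theorem hold without it: in $\lpi$ the first component of every product rule is $\Type$, so $s_{1}=s_{1}'$ for free; and if $s_{2}\not=s_{2}'$ while $s_{2}\equiv_{\beta R}s_{2}'$, then $\Type\equiv_{\beta R}\Kind$, whence any two sorts are convertible and $A\equiv_{\beta R}B$ holds trivially (note $\Kind$ itself can never occur as a rule side, being untypeable). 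With that observation your induction closes under no hypothesis beyond confluence, matching the generality of the paper's statement.
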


\section{\label{sec:embedding}Embedding FPTS's in the $\lambda\Pi$ calculus
modulo}

In this section, we present the embedding of functional pure type
systems in the $\lambda\Pi$ calculus modulo rewriting as introduced
by Cousineau and Dowek \cite{cousineau_embedding_2007}. In this embedding,
sorts are represented as \emph{universes \`{a} la Tarski}, as introduced
by Martin-L\"{o}f \cite{martin-lof_intuitionistic_1984} and later
developed by Luo \cite{luo_computation_1994} and Palmgren \cite{palmgren_universes_1998}.
The embedding is done in two steps. First, given a pure type system
$\pts S$, we construct $\lpimod S$ by giving an appropriate signature
and rewrite system. Second, we define a translation from the terms
and types of $\pts S$ to the terms and types of $\lpimod S$. The
proofs of the theorems in this section can be found in the original
paper \cite{cousineau_embedding_2007}.
\begin{definition}[The system $\lpimod S$]
 Consider a functional pure type system specified by $S=(\set S,\set A,\set R)$.
Define $\Sigma_{S}$ to be the well-formed context containing the
declarations:
\[
\begin{array}{ll}
\ttype s:\Type & \forall s\in\set S\\
\tterm s:\arr{\ttype s}{\Type} & \forall s\in\set S\\
\tsort{s_{1}}:\ttype{s_{2}} & \forall s_{1}:s_{2}\in\set A\\
\tprod{s_{1}}{s_{2}}{s_{3}}:\prod{\alpha}{\ttype{s_{1}}}{\arr{(\arr{\app{\tterm{s_{1}}}{\alpha}}{\ttype{s_{2}}})}{\ttype{s_{3}}}} & \forall(s_{1},s_{2},s_{3})\in\set R
\end{array}
\]
Let $R_{S}$ be the well-typed rewrite system containing the rules
\[
\rewriterule{\cempty}{\app{\tterm{s_{2}}}{\tsort{s_{1}}}}{\ttype{s_{1}}}
\]
 for all $s_{1}:s_{2}\in\set A$, and 
\[
\rewriterule{\Delta_{s_{1}s_{2}s_{3}}}{\app{\tterm{s_{3}}}{(\app{\app{\tprod{s_{1}}{s_{2}}{s_{3}}}A}B)}}{\prod x{(\app{\tterm{s_{1}}}A)}{\app{\tterm{s_{2}}}{(\app Bx)}}}
\]
 for all $(s_{1},s_{2},s_{3})\in\set R$, where $\Delta_{s_{1}s_{2}s_{3}}=(A:\ttype{s_{1}},B:(\arr{\app{\tterm{s_{1}}}{\alpha}}{\ttype{s_{2}}}))$.
The system $\lpimod S$ is defined as the $\lambda\Pi$ calculus modulo
$(\Sigma_{S},R_{S})$, that is, $\lpimod{(\Sigma_{S},R_{S}})$.\end{definition}
\begin{theorem}[Confluence]
 The relation $\reduces_{\beta R}$ is confluent.
\end{theorem}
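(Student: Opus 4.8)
The plan is to prove confluence of $\reduces_{\beta R}$ by the Hindley--Rosen lemma, which reduces the problem to three independent facts: that $\reduces_\beta$ is confluent, that $\reduces_R$ is confluent, and that $\reduces_\beta$ and $\reduces_R$ commute. Confluence of $\reduces_\beta$ on the raw syntax of $\lpi$ is the classical Church--Rosser theorem (the same one already invoked for $\pts S$ in Section~\ref{sec:pure-type-systems}), so I would only have to establish the remaining two facts, both of which ultimately rest on the observation that $R = R_S$ is \emph{left-linear} and has \emph{no critical pairs}, either internally or against $\beta$.

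First, confluence of $\reduces_R$. I would verify orthogonality of $R_S$ directly from the two rule schemes. Each instance of the first scheme $\rewriterule{\cempty}{\app{\tterm{s_2}}{\tsort{s_1}}}{\ttype{s_1}}$ has a ground left-hand side, and each instance of the second scheme $\rewriterule{\Delta_{s_1 s_2 s_3}}{\app{\tterm{s_3}}{(\app{\app{\tprod{s_1}{s_2}{s_3}}{A}}{B})}}{\cdots}$ uses the two variables $A,B$ exactly once on the left, so $R_S$ is left-linear. For critical pairs, the head constant of a second-scheme redex is the symbol $\tprod{s_1}{s_2}{s_3}$, uniquely determined by the triple $(s_1,s_2,s_3)$; hence two distinct second-scheme rules cannot overlap at the root, and a first-scheme left-hand side $\app{\tterm{s_2}}{\tsort{s_1}}$ cannot unify with $\app{\tterm{s_3}}{(\app{\app{\tprod{s_1}{s_2}{s_3}}{A}}{B})}$ because $\tsort{s_1}$ is a constant while the corresponding argument is a $\tprod{}{}{}$-application. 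Every remaining superposition places one left-hand side inside a variable position ($A$ or $B$) of another, which for a left-linear system is not a proper overlap. Thus $R_S$ is orthogonal and therefore confluent.

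Second, commutation of $\reduces_\beta$ and $\reduces_R$. The crucial structural fact is that no left-hand side of $R_S$ contains a $\lambda$-abstraction, so an $R$-redex can never superpose with the ``abstraction applied to an argument'' shape of a $\beta$-redex: there are no critical pairs \emph{between} the two relations. To turn the absence of overlaps into commutation of the reflexive--transitive closures I would introduce the Tait--Martin-L\"of \emph{parallel} reductions $\Rightarrow_\beta$ and $\Rightarrow_R$ (simultaneous contraction of a family of redexes), prove the substitution lemma that each is stable under substitution, and then establish the commuting square: whenever $M \Rightarrow_\beta M_1$ and $M \Rightarrow_R M_2$ there is a term $M_3$ with $M_1 \Rightarrow_R M_3$ and $M_2 \Rightarrow_\beta M_3$. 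Taking reflexive--transitive closures then yields commutation of $\reduces_\beta$ and $\reduces_R$, and the Hindley--Rosen lemma delivers confluence of $\reduces_{\beta R}$.

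The main obstacle is this commuting square, and within it the two ways the relations interfere without overlapping. When a $\beta$-redex $\app{(\abs{x}{T}{P})}{Q}$ is contracted and the argument $Q$ carries an $R$-redex, the substitution $\subst{Q}{x}{P}$ duplicates that redex once per occurrence of $x$, so the $R$-side of the diagram must be closed by a \emph{parallel} $R$-step rather than a single step; this is exactly what the substitution lemma for $\Rightarrow_R$ is for. Symmetrically, contracting a second-scheme $R$-redex produces right-hand side subterms $\app Bx$, so if the matched term $B$ is itself an abstraction the $R$-step \emph{creates} new $\beta$-redexes absent from $M$; the square still closes because these created redexes occupy fresh, pairwise-disjoint positions and are absorbed by a single parallel $\beta$-step. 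Once the bookkeeping of duplicated and created redexes is handled through the parallel relations, the remaining cases---disjoint redexes, and an $R$-redex nested strictly inside a $\beta$-redex or vice versa---are routine, and confluence follows.
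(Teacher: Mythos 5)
Your proof is correct, but it takes a more hands-on route than the paper. The paper itself gives no argument: it defers to Cousineau and Dowek's original paper, where confluence of $\reduces_{\beta R}$ is obtained in essentially one step by observing that $\beta$ together with $R_{S}$ forms an \emph{orthogonal combinatory reduction system} (left-linear, no critical pairs, including none against $\beta$) and invoking the meta-theorem that all orthogonal CRSs are confluent. You verify exactly the same two structural facts --- left-linearity of $R_{S}$ and absence of overlaps, both internal (the head constants $\tterm{s}$, $\tsort{s_{1}}$, $\tprod{s_{1}}{s_{2}}{s_{3}}$ cannot superpose at non-variable positions) and against $\beta$ (no left-hand side contains an abstraction) --- but instead of citing the meta-theorem you decompose via Hindley--Rosen and prove commutation of $\reduces_{\beta}$ and $\reduces_{R}$ by Tait--Martin-L\"{o}f parallel reductions, in effect reproving a special case of M\"{u}ller's theorem on combining $\beta$ with left-linear algebraic rewriting. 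The trade-off: your argument is elementary and self-contained, and you correctly isolate the two genuine interferences (duplication of $R$-redexes by $\beta$-substitution, and creation of $\beta$-redexes by the $\app Bx$ subterm of the product rule's right-hand side, when $B$ is instantiated by an abstraction), both absorbed by a single parallel step on the closing side; the CRS route is shorter, and handles uniformly the binder $\prod x{(\app{\tterm{s_{1}}}A)}{\app{\tterm{s_{2}}}{(\app Bx)}}$ introduced on a right-hand side, which makes $R_{S}$ not purely first-order --- harmless in your setting too, since orthogonality and your parallel-move lemmas only constrain left-hand sides, but worth stating explicitly.
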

The translation is composed of two functions, one from the terms of
$\pts S$ to the terms of $\lpimod S$, the other from the types of
$\pts S$ to the types of $\lpimod S$.
\begin{definition}
The translation $\trans M{\Gamma}$ of $\Gamma$-terms and the translation
$\ttrans A{\Gamma}$ of $\Gamma$-types are mutually defined as follows.
\[
\begin{array}{rcl}
\trans s{\Gamma} & = & \tsort s\\
\trans x{\Gamma} & = & x\\
\trans{\app MN}{\Gamma} & = & \app{\trans M{\Gamma}}{\trans N{\Gamma}}\\
\trans{\abs xAM}{\Gamma} & = & \abs x{\ttrans A{\Gamma}}{\trans M{\cdecl{\Gamma}xA}}\\
\trans{\prod xAB}{\Gamma} & = & \app{\app{\tprod{s_{1}}{s_{2}}{s_{3}}}{\trans A{\Gamma}}}{(\abs x{\ttrans A{\Gamma}}{\trans B{\cdecl{\Gamma}xA}})}\\
 &  & \mbox{where \ensuremath{\typing{\Gamma}A{s_{1}}}}\\
 &  & \mbox{and \ensuremath{\typing{\cdecl{\Gamma}xA}B{s_{2}}}}\\
 &  & \mbox{and \ensuremath{(s_{1},s_{2},s_{3})\in\set R}}
\end{array}
\]
\[
\begin{array}{rcl}
\ttrans s{\Gamma} & = & \ttype s\\
\ttrans{\prod xAB}{\Gamma} & = & \prod x{\ttrans A{\Gamma}}{\ttrans B{\cdecl{\Gamma}xA}}\\
\ttrans A{\Gamma} & = & \app{\tterm s}{\trans A{\Gamma}}\mbox{ where \ensuremath{\typing{\Gamma}As}}
\end{array}
\]
Note that this definition is redundant but it is well-defined up to
$\convertible_{\beta R}$. In particular, because some $\Gamma$-types
are also $\Gamma$-terms, there are two ways to translate them, but
they are equivalent:
\[
\begin{array}{rcl}
\app{\tterm{s_{2}}}{\tsort{s_{1}}} & \convertible_{\beta R} & \ttype{s_{1}}\\
\app{\tterm{s_{3}}}{\trans{\prod xAB}{\Gamma}} & \convertible_{\beta R} & \prod x{\ttrans A{\Gamma}}{\ttrans B{\cdecl{\Gamma}xA}}
\end{array}
\]
This definition is naturally extended to well-formed contexts as follows.
\[
\begin{array}{rcl}
\ttrans{\cempty}{} & = & \cempty\\
\ttrans{\cdecl{\Gamma}xA}{} & = & \cdecl{\ttrans{\Gamma}{}}x{\ttrans A{\Gamma}}
\end{array}
\]
\end{definition}
\begin{example}
The polymorphic identity function of the Calculus of constructions
$\pts C$ is translated as
\[
\trans I{}=\abs{\alpha}{\ttype{\Type}}{\abs x{\app{\tterm{\Type}}{\alpha}}x}
\]
and its type $A=\prod{\alpha}{\Type}{\arr{\alpha}{\alpha}}$ is translated
as: 
\[
\trans A{}=\app{\app{\tprod{\Kind}{,\Type,}{\Type}}{\tsort{\Type}}}{(\abs{\alpha}{\ttype{\Type}}{\trans{A_{\alpha}}{}})}
\]
 where $A_{\alpha}=\arr{\alpha}{\alpha}$ and
\[
\trans{A_{\alpha}}{}=\app{\app{\tprod{\Type}{,\Type,}{\Type}}{\alpha}}{(\abs x{\app{\tterm{Type}}{\alpha}}{\app{\tterm{Type}}{\alpha}})}
\]
The identity function applied to itself is translated as:
\[
\trans{\app{\app IA}I}{}=\app{\app{\trans I{}}{\trans A{}}}{\trans I{}}
\]

\end{example}
The embedding is complete, in the sense that all the typing relations
of $\pts S$ are preserved by the translation.
\begin{theorem}[Completeness]
\label{thm:completeness} For any context $\Gamma$ and terms $M$
and $A$, if $\typing[\pts S]{\Gamma}MA$ then $\typing[\lpimod S]{\ttrans{\Gamma}{}}{\trans M{\Gamma}}{\ttrans A{\Gamma}}$.
\end{theorem}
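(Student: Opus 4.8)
\section*{Proof proposal}

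The plan is to proceed by induction on the derivation of $\typing[\pts S]{\Gamma}MA$, proving \emph{simultaneously} the two statements that (i) if $\wellformed[\pts S]{\Gamma}$ then $\wellformed[\lpimod S]{\ttrans{\Gamma}{}}$, and (ii) the conclusion of the theorem itself. The two must be interleaved because the \emph{Declaration} rule appeals to a typing judgment while the typing rules appeal to well-formedness. Throughout the induction I will repeatedly use the derived observation that (ii), applied to a subderivation $\typing[\pts S]{\Gamma}As$, yields $\typing[\lpimod S]{\ttrans{\Gamma}{}}{\ttrans A{\Gamma}}{\Type}$: indeed (ii) gives $\typing[\lpimod S]{\ttrans{\Gamma}{}}{\trans A{\Gamma}}{\ttype s}$, and since $\tterm s:\arr{\ttype s}{\Type}$ belongs to $\Sigma_S$, applying $\tterm s$ produces a term of type $\Type$ which is exactly $\ttrans A{\Gamma}=\app{\tterm s}{\trans A{\Gamma}}$. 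This ``types translate to well-sorted types'' fact is precisely what discharges the side premises of the $\lpimod S$ rules for \emph{Declaration}, \emph{Abstraction}, and \emph{Conversion}.

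Before the induction I would establish three auxiliary lemmas by structural induction. First, a \emph{coherence/weakening} lemma stating that the translation is independent, up to $\convertible_{\beta R}$, of the ambient context beyond $\freevars{\cdot}$, so that $\ttrans A{\Gamma'}\convertible_{\beta R}\ttrans A{\Gamma}$ whenever $\Gamma$ extends $\Gamma'$ and $\freevars A\subseteq\Gamma'$; this is needed in the \emph{Variable} case, where the declared type occurs in $\ttrans{\Gamma}{}$ translated in a shorter context. Second, a \emph{substitution} lemma $\trans{\subst NxM}{\Gamma}\convertible_{\beta R}\subst{\trans N{\Gamma}}{x}{\trans M{\cdecl{\Gamma}xA}}$ and its analogue for $\ttrans{\cdot}{}$; this reconciles the type $\subst{\trans N{\Gamma}}{x}{\ttrans B{\cdecl{\Gamma}xA}}$ delivered by the $\lpimod S$ \emph{Application} rule with the intended $\ttrans{\subst NxB}{\Gamma}$. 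Third, a \emph{conversion-preservation} lemma, that $A\convertible_\beta B$ implies $\ttrans A{\Gamma}\convertible_{\beta R}\ttrans B{\Gamma}$, which I obtain from the known preservation of reduction ($M\reduces_\beta M'$ implies $\trans M{}\reduces_{\beta R}^+\trans{M'}{}$) together with confluence of $\reduces_\beta$ and of $\reduces_{\beta R}$.

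With these in hand the case analysis is mostly mechanical. \emph{Sort} reads off the declaration $\tsort{s_1}:\ttype{s_2}$ from $\Sigma_S$; \emph{Variable} combines the coherence lemma with the $\lpimod S$ \emph{Variable} rule; \emph{Empty} and \emph{Declaration} follow from (i) and the derived typing of $\ttrans A{\Gamma}$. The \emph{Product} case is the crux of the signature design: from the hypotheses $\typing[\lpimod S]{\ttrans{\Gamma}{}}{\trans A{\Gamma}}{\ttype{s_1}}$ and $\typing[\lpimod S]{\ttrans{\cdecl{\Gamma}xA}{}}{\trans B{\cdecl{\Gamma}xA}}{\ttype{s_2}}$, the identity $\ttrans A{\Gamma}=\app{\tterm{s_1}}{\trans A{\Gamma}}$ shows $\abs x{\ttrans A{\Gamma}}{\trans B{\cdecl{\Gamma}xA}}$ has type $\arr{(\app{\tterm{s_1}}{\trans A{\Gamma}})}{\ttype{s_2}}$, exactly the argument that $\tprod{s_1}{s_2}{s_3}$ expects after being applied to $\trans A{\Gamma}$, so the whole term lands in $\ttype{s_3}=\ttrans{s_3}{\Gamma}$. \emph{Abstraction} feeds the hypothesis into the $\lpimod S$ \emph{Abstraction} rule, discharging its product premise via the derived typing of $\prod x{\ttrans A{\Gamma}}{\ttype{s_2}}$ using the $\lpi$ rule $(\Type,\Type,\Type)$. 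Finally \emph{Application} and \emph{Conversion} each close with one use of the $\lpimod S$ \emph{Conversion} rule, rewriting the computed type via the substitution lemma and via conversion-preservation respectively, the ``$B:s$'' premise being supplied by the derived typing of the target type.

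The main obstacle is the interaction between the \emph{type-directedness} of the translation and its behaviour under substitution and reduction. Because the product clause of $\trans{\cdot}{}$ selects the annotation $\tprod{s_1}{s_2}{s_3}$ from a typing derivation, the translation is canonical only up to $\convertible_{\beta R}$, so the substitution and conversion-preservation lemmas must be formulated and proved modulo $\convertible_{\beta R}$ rather than as syntactic identities. Ensuring that the chosen sorts remain coherent along substitutions and reductions relies on uniqueness of types and subject reduction in $\pts S$ (to know that two translations of the same type use the same sort) and on confluence of $\reduces_{\beta R}$ in $\lpimod S$; carrying out this coherence check in the product case of the substitution lemma is the delicate point where I expect to spend the most care.
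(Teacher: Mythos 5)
Your proposal is correct and takes essentially the same route as the proof this paper relies on: the paper itself does not prove Theorem \ref{thm:completeness} but defers to Cousineau and Dowek \cite{cousineau_embedding_2007}, and your argument---mutual induction on the derivation with well-formedness of $\ttrans{\Gamma}{}$, together with context-coherence, substitution, and conversion-preservation lemmas stated modulo $\convertible_{\beta R}$, the Product case assembled from $\tprod{s_{1}}{s_{2}}{s_{3}}$ and the rewrite rules of $R_{S}$---is precisely that standard proof, including the correct identification of the delicate point (sort coherence under substitution via uniqueness of types and subject reduction). One small slip: in the Abstraction case the product premise to discharge is $\typing[\lpimod S]{\ttrans{\Gamma}{}}{\prod x{\ttrans A{\Gamma}}{\ttrans B{\cdecl{\Gamma}xA}}}{\Type}$, not $\prod x{\ttrans A{\Gamma}}{\ttype{s_{2}}}$ (that product belongs to the Product case); it is obtained by applying your derived observation to the subderivation of $\typing[\pts S]{\Gamma}{\prod xAB}s$ and then passing between the two representatives of $\ttrans{\prod xAB}{\Gamma}$ via the $\tterm{s_{3}}$ rewrite rule and subject reduction in $\lpimod S$.
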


\section{\label{sec:conservativity}Conservativity}

In this section, we prove the converse of the completeness property.
One could attempt to prove that if $\typing[\lpimod S]{\ttrans{\Gamma}{}}{\trans M{\Gamma}}{\ttrans A{\Gamma}}$
then $\typing[\pts S]{\Gamma}MA$. However, that would be too weak
because the translation $\trans M{\Gamma}$ is only defined for well-typed
terms. A second attempt would be to define inverse translations $\btrans M$
and $\bttrans A$ and prove that if $\typing[\lpimod S]{\Gamma}MA$
then $\typing[\pts S]{\bttrans{\Gamma}}{\btrans M}{\bttrans A}$,
but that would not work either because not all terms and types of
$\lpimod S$ correspond to valid terms and types of $\pts S$, as
was shown in Example \ref{ex:polymorphic-identity}. Therefore the
property that we want to prove is: if there is a term $N$ such that
$\typing[\lpimod S]{\ttrans{\Gamma}{}}N{\ttrans A{\Gamma}}$ then
there is a term $M$ such that $\typing[\pts S]{\Gamma}MA$.

The main difficulty is that some of these \emph{external} terms can
be involved in witnessing valid $\pts S$ types, as illustrated by
the following example.
\begin{example}
Consider the context $nat:\Type$. Even though the polymorphic identity
function $I$ and its type are not well-typed in $\pts{HOL}$, they
can be used in $\lpimod{HOL}$ to construct a witness for $\arr{nat}{nat}$.
\[
\typing[\lpimod{HOL}]{nat:\ttype{\Type}}{(\app{\trans I{}}{nat})}{(\arr{\app{\tterm{\Type}}{nat}}{\app{\tterm{\Type}}{nat}})}
\]
We can normalize the term $\app{\trans I{}}{nat}$ to $\abs x{\app{\tterm{\Type}}{nat}}x$
which is a term that corresponds to a valid $\pts{HOL}$ term: it
is the translation of the term $\abs x{nat}x$. However, as discussed
previously, we cannot restrict ourselves to normal terms because we
do not know if $\lpimod S$ is normalizing.
\end{example}
To prove conservativity, we will therefore need to address the following
issues:
\begin{enumerate}
\item The system $\lpimod S$ can type more terms than $\pts S$.
\item These terms can be used to construct proofs for the translation of
$\pts S$ types.
\item The $\lpimod S$ terms that inhabit the translation of $\pts S$ types
can be reduced to the translation of $\pts S$ terms.
\end{enumerate}
We will proceed as follows. First, we will eliminate $\beta$-redexes
at the level of $\Kind$ by reducing $\lpimod S$ to a subset $\lpimmod S$.
Then, we will extend $\pts S$ to a \emph{minimal completion} $\pts{S^{*}}$
that can type more terms than $\pts S$, and show that $\lpimmod S$
corresponds to $\pts{S^{*}}$ using inverse translations $\btrans M$
and $\bttrans A$. Finally, we will show that $\pts{S^{*}}$ terms
inhabiting $\pts S$ types can be reduced to $\pts S$ terms. The
procedure is summarized in the following diagram.

\[\xymatrix@R=4pc@C=4pc{
 \lpimod{S}\ar[r]^{\text{(Lemma \ref{lem:kind-redex-elimination})}}_{\beta^*} & \lpimmod{S}\ar[d]_{\btrans{M}}^{\bttrans{A}\ \text{(Lemma \ref{lem:inverse-completion})}} \\
 \pts{S} \ar@{..>}[u]_{\ttrans{A}{}}^{\text{(Theorem \ref{thm:completeness})}\ \trans{M}{}} & \pts{S^*} \ar[l]^{\beta^*}_{\text{(Lemma \ref{lem:completion-reducible})}} \\
}\]

\subsection{Eliminating $\beta$-redexes at the level of $\protect\Kind$}

In $\lpimod S$, we can have $\beta$-redexes at the level of $\Kind$
such as $\app{(\abs xA{\ttype s})}M$. These redexes are artificial
and are never generated by the forward translation of any PTS. We
show here that they can always be safely eliminated.
\begin{definition}
A $\Gamma$-term $M$ of type $C$ is at the level of $\Kind$ (resp.
$\Type$) if $\typing{\Gamma}C{\Kind}$ (resp. $\typing{\Gamma}C{\Type}$).
We define $\lpimmod S$ terms as the subset of well-typed $\lpimod S$
terms that do not contain any $\Kind$-level $\beta$-redexes.\end{definition}
\begin{lemma}
\label{lem:kind-redex-elimination} For any $\lpimod S$ context $\Gamma$
and $\Gamma$-term $M$, there is a $\lpimmod S$ term $M^{-}$ such
that $M\reduces_{\beta}^{*}M^{-}$.\end{lemma}
\begin{proof}
Reducing a $\Kind$-level $\beta$-redex $\app{(\abs xAB)}N$ does
not create other $\Kind$-level $\beta$-redexes because $N$ is at
the level of $\Type$. Indeed, in the $\lambda\Pi$ calculus modulo
rewriting the only $\Kind$ rule is $\left(\Type,\Kind,\Kind\right)$.
Therefore $N:A:\Type$. If $N$ reduces to a $\lambda$-abstraction
then the only redexes it can create are at the level of $\Type$.
Therefore, the number of $\Kind$-level $\beta$-redexes strictly
decreases, so any $\Kind$-level $\beta$-reduction strategy will
terminate.\end{proof}
\begin{example}
The term
\[
I_{1}=\abs{\alpha}{\ttype{\Type}}{\abs x{\app{\tterm{\Type}}{(\app{(\abs{\beta}{\ttype{\Type}}{\beta})}{\alpha})}}x}
\]
is in $\lpimmod{HOL}$. The term 
\[
I_{2}=\abs{\alpha}{\ttype{\Type}}{\abs x{(\app{(\abs{\beta}{\ttype{\Type}}{\app{\tterm{\Type}}{\beta}})}{\alpha})}x}
\]
 is not in $\lpimmod{HOL}$ but 
\[
I_{2}\reduces_{\beta}\abs{\alpha}{\ttype{\Type}}{\abs x{\tterm{Type}\alpha}x}
\]
 which is in $\lpimmod{HOL}$.
\end{example}

\subsection{Minimal completion}

To simplify our reducibility proof in the next section, we will translate
$\lpimod S$ back to a pure type system, but since it cannot be $\pts S$
we will define a slightly larger PTS called $\pts{S^{*}}$ that contains
$\pts S$ and that will be easier to manipulate than $\lpimod S$.

The reason we need a larger PTS is that we have types that do not
have a type, such as top-sorts because there is no associated axiom.
Similarly, we can sometimes prove $\typing[\pts S]{\cdecl{\Gamma}xA}MB$
but cannot abstract over $x$ because there is no associated product
rule. Completions of pure type systems were originally introduced
by Severi \cite{severi_pure_1994,severi_pure_1995} to address these
issues by injecting $\pts S$ into a larger pure type system.
\begin{definition}[Completion \cite{severi_pure_1995}]
\label{def:completion}  A specification $S'=(\set S',\set A',\mbox{\ensuremath{\set R}}')$
is a \emph{completion of} $S$ if
\begin{enumerate}
\item $\set S\subseteq\set S'$,$\set A\subseteq\set A'$, $\set R\subseteq\set R'$,
and
\item for all sorts $s_{1}\in\set S$, there is a sort $s_{2}\in\set S'$
such that $(s_{1}:s_{2})\in\set A'$, and
\item for all sorts $s_{1},s_{2}\in\set S'$, there is a sort $s_{3}\in\set S'$
such that $(s_{1},s_{2},s_{3})\in\set R'$.
\end{enumerate}
\end{definition}
Notice that all the top-sorts of $\pts S$ are typable in $\pts{S'}$
and that $\pts{S'}$ is full, meaning that all products are typable.
These two properties reflect exactly the discrepancy between $\pts S$
and $\lpimmod S$. Not all completions are conservative though, so
we define the following completion.
\begin{definition}[Minimal completion]
\label{def:minimal-completion}  We define the \emph{minimal completion
of} $S$, written $\completion S$, to be the following specification:
\begin{eqnarray*}
\completion{\set S} & = & \set S\cup\{\tau\}\\
\completion{\set A} & = & \set A\cup\{(s_{1}:\tau)\mid s_{1}\in\set S,\nexists s2,(s_{1}:s_{2})\in\set A\}\\
\completion{\set R} & = & \set R\cup\{(s_{1},s_{2},\tau)\mid s_{1},s_{2}\in\completion{\set S},\nexists s_{3},(s_{1},s_{2},s_{3})\in\set R\}
\end{eqnarray*}
 where $\tau\not\in\set S$.
\end{definition}
We add a new top-sort $\tau$ and axioms $s:\tau$ for all previous
top-sorts $s$, and complete the rules to obtain a PTS full. The new
system is a completion by Definition \ref{def:completion} and it
is minimal in the sense that we generically added the smallest number
of sorts, axioms, and rules so that the result is guaranteed to be
conservative. Any well-typed term of $\pts S$ is also well-typed
in $\pts{\completion S}$, but just like $\lpimmod S$, this system
allows more functions than $\pts S$.
\begin{example}
The polymorphic identity function is well-typed in $\pts{\completion{HOL}}$.
\[
\vdash_{\pts{\completion{HOL}}}I:\prod{\alpha}{\Type}{\arr{\alpha}{\alpha}}
\]
\[
\vdash_{\pts{\completion{HOL}}}\prod{\alpha}{\Type}{\arr{\alpha}{\alpha}}:\tau
\]

\end{example}
Next, we define inverse translations that translate the terms and
types of $\lpimmod S$ to the terms and types of $\pts{\completion S}$.
\begin{definition}[Inverse translations]
\label{def:inverse-translation}  The inverse translation of terms
$\btrans M$ and the inverse translation of types $\bttrans A$ are
mutually defined as follows.
\[
\begin{array}{rcl}
\btrans{\tsort s} & = & s\\
\btrans{\tprod{s_{1}}{s_{2}}{s_{3}}} & = & \abs{\alpha}{s_{1}}{\abs{\beta}{(\arr{\alpha}{s_{2}})}{\prod x{\alpha}{\app{\beta}x}}}\\
\btrans x & = & x\\
\btrans{\app MN} & = & \app{\btrans M}{\btrans N}\\
\btrans{\abs xAM} & = & \abs x{\bttrans A}{\btrans M}
\end{array}
\]
\[
\begin{array}{rcl}
\bttrans{\ttype s} & = & s\\
\bttrans{\app{\tterm s}M} & = & \btrans M\\
\bttrans{\prod xAB} & = & \prod x{\bttrans A}{\bttrans B}
\end{array}
\]

\end{definition}
Note that this is only a partial definition, but it is total for $\lpimmod S$
terms. In particular, it is an inverse of the forward translation
in the following sense.
\begin{lemma}
\label{lem:inverse-translation} For any $\Gamma$-term $M$ and $\Gamma$-type
$A$,
\begin{enumerate}
\item $\btrans{\trans M{\Gamma}}\equiv_{\beta}M$,
\item $\bttrans{\ttrans A{\Gamma}}\equiv_{\beta}A$.
\end{enumerate}
\end{lemma}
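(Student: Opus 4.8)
The plan is to prove statements (1) and (2) simultaneously by induction on the structure of the underlying $\pts S$ term, exploiting that $\equiv_{\beta}$ is a congruence so that $\btrans{\cdot}$ and $\bttrans{\cdot}$ commute with the term constructors up to $\equiv_{\beta}$. Every clause of the forward translations $\trans{\cdot}{\Gamma}$ and $\ttrans{\cdot}{\Gamma}$ recurses on \emph{strict} subterms, with one exception: the generic type clause $\ttrans{A}{\Gamma}=\app{\tterm s}{\trans{A}{\Gamma}}$ reduces statement (2) for $A$ to statement (1) for the \emph{same} $A$. The induction is therefore well founded provided we establish (1) before (2) at each term; this is legitimate because (1) only ever appeals to the inductive hypotheses on strict subterms, whereas (2) may additionally appeal to (1) at the same term, so no circularity arises. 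Throughout, the well-typedness of subterms needed both to select the applicable forward clause and to apply the inductive hypotheses follows by inversion on the $\pts S$ typing rules.

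For statement (1) I would run through the term clauses. The cases $M=s$ and $M=x$ are immediate, since $\btrans{\trans{s}{\Gamma}}=\btrans{\tsort s}=s$ and $\btrans{\trans{x}{\Gamma}}=\btrans{x}=x$. The application case $M=\app{M_1}{M_2}$ and the abstraction case $M=\abs{x}{A}{M_1}$ are pure congruence cases: unfolding $\btrans{\cdot}$ yields $\app{\btrans{\trans{M_1}{\Gamma}}}{\btrans{\trans{M_2}{\Gamma}}}$ and $\abs{x}{\bttrans{\ttrans{A}{\Gamma}}}{\btrans{\trans{M_1}{\cdecl{\Gamma}{x}{A}}}}$, which are $\equiv_{\beta}$-equal to $\app{M_1}{M_2}$ and $\abs{x}{A}{M_1}$ by the inductive hypotheses (using (2) on the domain $A$ in the abstraction case).

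The crux of (1) is the product-as-term case $M=\prod{x}{A}{B}$, where $\btrans{\cdot}$ must \emph{compute} rather than merely recurse. Unfolding the definitions and using the hypotheses $\btrans{\trans{A}{\Gamma}}\equiv_{\beta}A$, $\bttrans{\ttrans{A}{\Gamma}}\equiv_{\beta}A$, and $\btrans{\trans{B}{\cdecl{\Gamma}{x}{A}}}\equiv_{\beta}B$ gives
\[
\btrans{\trans{\prod{x}{A}{B}}{\Gamma}}\equiv_{\beta}\app{\app{(\abs{\alpha}{s_1}{\abs{\beta}{(\arr{\alpha}{s_2})}{\prod{z}{\alpha}{\app{\beta}{z}}}})}{A}}{(\abs{x}{A}{B})},
\]
where I rename the bound product variable of $\btrans{\tprod{s_1}{s_2}{s_3}}$ to $z$ to avoid capture. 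Two outer $\beta$-steps followed by one inner $\beta$-step contract this to $\prod{z}{A}{\subst{z}{x}{B}}$, which is $\prod{x}{A}{B}$ up to $\alpha$-renaming, so the whole expression is $\equiv_{\beta}M$. The main thing to get right here is precisely this capture-avoidance bookkeeping and the check that the redex $\app{(\abs{x}{A}{B})}{z}$ contracts back to $B$ with the product binder restored — this is where the apparently ad hoc definition of $\btrans{\tprod{s_1}{s_2}{s_3}}$ pays off, and is the step I expect to be the most delicate.

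For statement (2) I would split on the form of the $\Gamma$-type $A$. If $A=s$ then $\bttrans{\ttrans{s}{\Gamma}}=\bttrans{\ttype s}=s$; if $A=\prod{x}{A_1}{A_2}$ is a product then the type clause yields a congruence case discharged by (2) on $A_1,A_2$; otherwise $A$ is a variable or an application (an abstraction cannot be a type, since its type is convertible to a product, never to a sort), and the generic clause gives $\bttrans{\app{\tterm s}{\trans{A}{\Gamma}}}=\btrans{\trans{A}{\Gamma}}\equiv_{\beta}A$ directly by (1). The one remaining subtlety is that the forward translation is only defined \emph{up to} $\equiv_{\beta R}$: for a sort carrying an axiom and for a product, both a structural clause and the generic clause apply. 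I would verify that $\bttrans{\cdot}$ sends the two representatives to $\equiv_{\beta}$-equal terms — for $s:s'\in\set A$ the generic representative $\app{\tterm{s'}}{\tsort{s}}$ also inverts to $s$, and for a product the generic representative inverts, via the product-as-term computation above, to $\equiv_{\beta}\prod{x}{A_1}{A_2}$ — so the choice of representative is immaterial and the statement holds in either case.
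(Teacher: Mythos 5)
Your proposal is correct and follows essentially the same route as the paper: a mutual structural induction on the $\pts S$ term whose only genuinely computational case is the product, discharged by the same three $\beta$-steps through $\btrans{\tprod{s_{1}}{s_{2}}{s_{3}}}$ and the induction hypotheses. You are in fact more explicit than the paper's one-case proof about the well-foundedness of the mutual induction (statement (2) invoking (1) at the same term) and about the independence of the result from the choice of representative in the $\equiv_{\beta R}$-redundant forward translation, both of which the paper leaves implicit.
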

\begin{proof}
By induction on $M$ or $A$. We show the product case where $M=\prod xAB$.
By induction hypothesis, $\btrans{\trans A{}}\equiv_{\beta}A$ and
$\btrans{\trans B{}}\equiv_{\beta}B$. Therefore 
\[
\begin{array}{cll}
\btrans{\trans M{}} & = & \app{\app{\left(\lambda\alpha.\,\lambda\beta.\,\prod x{\alpha}{\app{\beta}x}\right)}{\btrans{\trans A{}}}}{(\lambda x.\,\btrans{\trans B{}})}\\
 & \reduces_{\beta}^{*} & \prod x{\btrans{\trans A{}}}{\btrans{\trans B{}}}\\
 & \equiv_{\beta} & \prod xAB
\end{array}
\]

\end{proof}
Next we show that the inverse translations preserve typing.
\begin{lemma}
\label{lem:inverse-substitution} $ $
\begin{enumerate}
\item $\btrans{\subst NxM}=\subst{\btrans N}x{\btrans M}$
\item $\bttrans{\subst NxA}=\subst{\btrans N}x{\bttrans A}$
\end{enumerate}
\end{lemma}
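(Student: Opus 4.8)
The plan is to prove the two equations simultaneously by mutual structural induction, on $M$ for part (1) and on $A$ for part (2). The mutual induction is forced by the mutual definition of the inverse translations: the abstraction clause of $\btrans{M}$ invokes $\bttrans{A}$ on the domain annotation, while the clause $\bttrans{\app{\tterm s}M} = \btrans M$ invokes $\btrans{M}$. Throughout I would adopt the usual variable convention, assuming that every bound name is distinct from $x$ and does not occur free in $N$, so that substitution commutes with the binders without capture.

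The atomic cases are the signature symbols and the variables. The sort symbol $\tsort s$ (a term) and the universe symbol $\ttype s$ (a type) are constants of $\Sigma_S$, unaffected by substitution of a term variable, and they map under $\btrans{\cdot}$, resp. $\bttrans{\cdot}$, to the sort $s$, which contains no free term variable; hence each side equals $s$. The delicate atomic case is the product symbol $\tprod{s_{1}}{s_{2}}{s_{3}}$: it too is a constant, so $\subst Nx{\tprod{s_{1}}{s_{2}}{s_{3}}} = \tprod{s_{1}}{s_{2}}{s_{3}}$, and its image $\abs{\alpha}{s_{1}}{\abs{\beta}{(\arr{\alpha}{s_{2}})}{\prod y{\alpha}{\app{\beta}y}}}$ mentions only the sorts $s_{1},s_{2}$ and no free term variable, so substituting $\btrans N$ for $x$ acts as the identity on it; both sides coincide. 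For a variable $y$ I split on whether $y = x$: if so both sides equal $\btrans N$, and otherwise both equal $y$.

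The inductive cases follow the remaining clauses. For an application $M = \app{M_{1}}{M_{2}}$, the translation $\btrans{\cdot}$ is a homomorphism, so the claim is immediate from the two induction hypotheses. For the type $A = \app{\tterm s}M$, the clause $\bttrans{\app{\tterm s}M} = \btrans M$ reduces the goal directly to part (1) applied to $M$. The two binder cases, abstraction $M = \abs yAM'$ and dependent product $A = \prod yBC$, are where the variable convention is used: I would push the substitution under the binder, apply the clauses $\btrans{\abs yAM'} = \abs y{\bttrans A}{\btrans{M'}}$ and $\bttrans{\prod yBC} = \prod y{\bttrans B}{\bttrans C}$, invoke the induction hypotheses (part (2) on the domain annotation $A$, resp. $B$, and part (1) on the body $M'$, resp. part (2) on the body $C$), and finally pull the substitution back out.

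The proof is entirely routine; the only two points deserving attention are the bookkeeping of bound variables in the binder cases, discharged by the variable convention that keeps $x$ and $\freevars N$ disjoint from bound names, and the observation that the images of the signature constants — most notably the $\lambda$-term assigned to $\tprod{s_{1}}{s_{2}}{s_{3}}$ — contain no free term variable, so that substitution genuinely passes through them unchanged. No typing, confluence, or normalization property is invoked, since the statement is purely syntactic.
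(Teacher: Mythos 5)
Your proof is correct and takes essentially the same route as the paper's: a structural (mutual) induction on $M$ or $A$ under the variable convention, the paper displaying only the product case $A=\prod yBC$ with exactly your side conditions $y\neq x$, $y\notin\freevars{N}$, $y\notin\freevars{\btrans{N}}$. Your explicit observation that the images of the signature constants are closed terms --- in particular $\btrans{\tprod{s_{1}}{s_{2}}{s_{3}}}$, so substitution passes through them unchanged --- is the right justification for the atomic cases the paper leaves implicit.
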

\begin{proof}
By induction on $M$ or $A$. We show the product case $A=\prod yBC$.
Without loss of generality, $y\not=x$ and $y\not\in N$ and $y\not\in\btrans N$.
Then $\subst Nx{\prod yBC}=\prod y{\subst NxB}{\subst NxC}$. By induction
hypothesis, $\bttrans{\subst NxB}=\subst{\btrans N}x{\bttrans B}$
and $\bttrans{\subst NxC}=\subst{\btrans N}x{\bttrans C}$. Therefore
\[
\begin{array}{rcl}
\bttrans{\subst NxA} & = & \prod y{\subst{\btrans N}x{\bttrans B}}{\subst{\btrans N}x{\bttrans C}}\\
 & = & \subst{\btrans N}x{\prod x{\bttrans B}{\bttrans C}}\\
 & = & \subst{\btrans N}x{\bttrans{\prod xBC}}
\end{array}
\]
\end{proof}
\begin{lemma}
\label{lem:inverse-reduction} $ $
\begin{enumerate}
\item If $M\reduces_{\beta R}N$ then $\btrans M\reduces_{\beta}^{*}\btrans N$
\item If $A\reduces_{\beta R}B$ then $\bttrans A\reduces_{\beta}^{*}\bttrans B$
\end{enumerate}
\end{lemma}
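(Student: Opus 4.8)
The plan is to prove both statements simultaneously by induction on the derivation of the single reduction step, viewing $\reduces_{\beta R}$ as the contextual closure of $\reduces_{\beta}$ together with the two rewrite rules of $R_S$. This separates into \emph{base cases}, in which the contracted redex sits at the head, and \emph{congruence cases}, in which the reduction happens strictly inside a subterm. Since $\btrans{}$ and $\bttrans{}$ are mutually recursive (the clause for $\btrans{\abs xAM}$ calls $\bttrans A$, and the clause for $\bttrans{\app{\tterm s}M}$ calls $\btrans M$), the two statements genuinely depend on each other, and the congruence cases cross freely between the term-level and type-level hypotheses; a simultaneous induction handles this cleanly.

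For the congruence cases I would simply appeal to the induction hypothesis and to the compatibility of $\reduces_{\beta}^{*}$ with the constructors of $\pts{\completion S}$: a reduction in the domain or body of $\prod xAB$ is discharged via $\bttrans{\prod xAB}=\prod x{\bttrans A}{\bttrans B}$, a reduction inside $\app{\tterm s}M$ via $\bttrans{\app{\tterm s}M}=\btrans M$, and reductions inside $\app MN$ or $\abs xAM$ via the homomorphic clauses of $\btrans{}$. In each case the translation sends the affected constructor to a single $\pts{\completion S}$ constructor, so the case reduces to one use of the hypothesis.

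The base cases are direct computations. For a head $\beta$-redex $M=\app{(\abs x{A_0}{M_0})}{N_0}$ contracting to $\subst{N_0}{x}{M_0}$, we have $\btrans M=\app{(\abs x{\bttrans{A_0}}{\btrans{M_0}})}{\btrans{N_0}}\reduces_{\beta}\subst{\btrans{N_0}}{x}{\btrans{M_0}}$, which by Lemma \ref{lem:inverse-substitution} equals $\btrans{\subst{N_0}{x}{M_0}}$. For the axiom rule $\app{\tterm{s_{2}}}{\tsort{s_{1}}}\rewrites\ttype{s_{1}}$ the two translations coincide, since $\bttrans{\app{\tterm{s_{2}}}{\tsort{s_{1}}}}=\btrans{\tsort{s_{1}}}=s_{1}=\bttrans{\ttype{s_{1}}}$, so no $\beta$-step is needed. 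For the product rule, unfolding $\bttrans{}$ on the left-hand side yields $\app{\app{(\abs{\alpha}{s_{1}}{\abs{\beta}{(\arr{\alpha}{s_{2}})}{\prod x{\alpha}{\app{\beta}x}}})}{\btrans A}}{\btrans B}$, which reduces in two $\beta$-steps to $\prod x{\btrans A}{\app{\btrans B}x}$; this is exactly $\bttrans{\prod x{(\app{\tterm{s_{1}}}A)}{\app{\tterm{s_{2}}}{(\app Bx)}}}$, the translation of the right-hand side.

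The hard part will be keeping the case analysis coherent despite $\btrans{}$ and $\bttrans{}$ being partial and defined by pattern matching. The key observation is that the two families of redexes are separated by level: both rewrite redexes are object-\emph{types}, hence are only ever translated by $\bttrans{}$, whose three clauses $\ttype s$, $\app{\tterm s}M$, and $\prod xAB$ never expose a head $\beta$-redex, whereas a $\beta$-redex contracted at the head must be an object-\emph{term}, translated by $\btrans{}$. This is precisely what the reduction to $\lpimmod S$ secures: a $\beta$-redex that is itself a type has type $\Type$, which in turn has type $\Kind$, so it is a $\Kind$-level $\beta$-redex and is excluded from $\lpimmod S$ by construction. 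Consequently the head $\beta$-case can never arise under $\bttrans{}$ and the two rewrite cases never arise under $\btrans{}$, making the case analysis exhaustive. The only residual bookkeeping is that the domain of the inverse translation is closed under $\reduces_{\beta R}$, which holds because no reduction step introduces a $\Kind$-level $\beta$-redex, as in the proof of Lemma \ref{lem:kind-redex-elimination}.
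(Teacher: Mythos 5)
Your proposal is correct and follows essentially the same route as the paper: the same induction with the congruence cases dispatched through the homomorphic clauses of the translations, and the same three head-redex computations — the $\beta$-case via Lemma \ref{lem:inverse-substitution}, the axiom rule collapsing to $s_{1}$ on both sides, and the product rule unfolding $\btrans{\tprod{s_{1}}{s_{2}}{s_{3}}}$ and contracting two $\beta$-redexes. Your closing discussion of exhaustiveness — rewrite redexes are $\Type$-level and hence fall under $\bttrans{}$, head $\beta$-redexes under $\bttrans{}$ would be $\Kind$-level and are excluded from $\lpimmod S$, and the domain is closed under $\reduces_{\beta R}$ — is a correct elaboration of what the paper leaves implicit in its remark that the partial definition is total on $\lpimmod S$ terms.
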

\begin{proof}
By induction on $M$ or $A$. We show the base cases.
\begin{itemize}
\item Case $M=\app{(\abs x{A_{1}}{M_{1}})}{N_{1}}$, $N=\subst{N_{1}}x{M_{1}}$.
Then $\btrans M=\app{(\abs x{\bttrans{A_{1}}}{\btrans{M_{1}}})}{\btrans{N_{1}}}$.
Therefore $\btrans M\reduces_{\beta}\subst{\btrans{N_{1}}}x{\btrans{M_{1}}}$
which is equal to $\btrans{\subst{N_{1}}x{M_{1}}}$ by Lemma \ref{lem:inverse-substitution}.
\item Case $A=\app{\tterm s}{\tsort s}$, $B=\ttype s$. Then $\bttrans A=s=\bttrans B$.
\item Case $A=\app{\tterm{s_{1}}}{\left(\app{\app{\tprod{s_{1}}{s_{2}}{s_{3}}}{A_{1}}}{B_{1}}\right)}$,
$B=\prod x{\app{\tterm{s_{1}}}{A_{1}}}{\app{\tterm{s_{2}}}{\left(\app{B_{1}}x\right)}}$.
Then
\[
\begin{array}{lll}
\bttrans A & = & \app{\app{\left(\lambda\alpha.\,\lambda\beta.\,\prod x{\alpha}{\app{\beta}x}\right)}{\btrans{A_{1}}}}{\btrans{B_{1}}}\\
 & \reduces_{\beta}^{*} & \prod x{\btrans{A_{1}}}{\app{\btrans{B_{1}}}x}\\
 & = & \bttrans{\prod x{A_{1}}{\app{B_{1}}x}}
\end{array}
\]

\end{itemize}
\end{proof}
\begin{lemma}
\label{lem:inverse-equivalence} $ $
\begin{enumerate}
\item If $M\equiv_{\beta R}N$ then $\btrans M\equiv_{\beta}\btrans N$
\item If $A\equiv_{\beta R}B$ then $\bttrans A\equiv_{\beta}\bttrans B$
\end{enumerate}
\end{lemma}
\begin{proof}
Follows from Lemma \ref{lem:inverse-reduction}.
\end{proof}
Because the forward translation of contexts does not introduce any
type variable, we define the following restriction on contexts.
\begin{definition}[Object context]
\label{def:abstraction-context}  We say that $\Gamma$ is an \emph{object
context} if $\typing[\lpimod S]{\Gamma}A{\Type}$ for all $x:A\in\Gamma$.
If $\Gamma=(x_{1}:A_{1},\ldots,x_{n}:A_{n})$ is an object context,
we define $\bttrans{\Gamma}$ as $(x_{1}:\bttrans{A_{1}},\ldots,x_{n}:\bttrans{A_{n}})$.\end{definition}
\begin{lemma}
\label{lem:inverse-completion} For any $\lpimmod S$ object context
$\Gamma$ and terms $M,A$:
\begin{enumerate}
\item If $\wellformed[\lpimod S]{\Gamma}$ then $\wellformed[\pts{\completion S}]{\bttrans{\Gamma}}$.
\item If $\typing[\lpimod S]{\Gamma}MA:\Type$ then $\typing[\pts{\completion S}]{\bttrans{\Gamma}}{\btrans M}{\bttrans A}$.
\item If $\typing[\lpimod S]{\Gamma}A{\Type}$ then $\typing[\pts{\completion S}]{\bttrans{\Gamma}}{\bttrans A}s$
for some sort $s\in\completion{\set S}$.
\end{enumerate}
\end{lemma}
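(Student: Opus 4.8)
The plan is to prove statements (1), (2), and (3) simultaneously by induction on the $\lpimod S$ derivation, reading each derivation of $\typing[\lpimod S]{\Gamma}{M}{A}$ as an instance of (2) when $A$ is at the level of $\Type$ and as an instance of (3) when $A=\Type$, so that the three properties supply each other's induction hypotheses on strictly smaller subderivations. Since the only product rules of $\lpi$ are $(\Type,\Type,\Type)$ and $(\Type,\Kind,\Kind)$, every binder has a $\Type$-level domain, hence the contexts occurring in subderivations stay object contexts and the hypotheses of the lemma remain available throughout. Statement (1) is then immediate: the empty case is trivial, and for $\wellformed[\lpimod S]{\cdecl{\Gamma}{x}{A}}$ the premise is $\typing[\lpimod S]{\Gamma}{A}{\Type}$ (the sort is $\Type$ rather than $\Kind$ precisely because $\cdecl{\Gamma}{x}{A}$ is an object context), so statement (3) gives $\typing[\pts{\completion S}]{\bttrans{\Gamma}}{\bttrans{A}}{s}$ and the \textsc{Declaration} rule of $\pts{\completion S}$ concludes.

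For the routine typing cases, a context variable $x$ is its own image and is typed by (1) and \textsc{Variable}; the constant $\ttype s$, the only $\Sigma_S$-constant of type $\Type$, maps to $s$, which is typable because the completion adds an axiom $(s:s_0)$ for every sort. In the \textsc{Product} case of (3), where $\bttrans{\prod xBC}=\prod x{\bttrans{B}}{\bttrans{C}}$, I would apply (3) to $B$ and $C$ and then invoke fullness of $\completion S$ to obtain a rule $(s_B,s_C,s_0)$. The \textsc{Abstraction} and \textsc{Application} cases commute with the corresponding clauses of the inverse translation; here the substitution lemma (Lemma \ref{lem:inverse-substitution}) is used to identify $\bttrans{\subst{N}{x}{B}}$ with $\subst{\btrans{N}}{x}{\bttrans{B}}$ so that the reconstructed application is typed correctly, after first checking by uniqueness of types that the relevant product's domain and codomain sit at the level of $\Type$, so that the induction hypotheses apply. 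The \textsc{Conversion} case is where Lemma \ref{lem:inverse-equivalence} enters: from $A\equiv_{\beta R}B$ we get $\bttrans{A}\equiv_\beta\bttrans{B}$, and $\btrans{M}$ is retyped along this equivalence by \textsc{Conversion} in $\pts{\completion S}$, its target type being well-typed by (3).

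The genuinely interesting typing cases are the remaining two constants. The constant $\tsort{s_1}$, of type $\ttype{s_2}$ with $(s_1:s_2)\in\set A$, maps to $s_1$, typed by \textsc{Sort} using $(s_1:s_2)\in\completion{\set A}$; the constant $\tprod{s_1}{s_2}{s_3}$, present exactly when $(s_1,s_2,s_3)\in\set R$, maps to the polymorphic term $\abs{\alpha}{s_1}{\abs{\beta}{(\arr{\alpha}{s_2})}{\prod x{\alpha}{\app{\beta}{x}}}}$, and showing that it inhabits $\prod{\alpha}{s_1}{\arr{(\arr{\alpha}{s_2})}{s_3}}$ requires forming $\prod x{\alpha}{\app{\beta}{x}}$ by the rule $(s_1,s_2,s_3)$ and forming the two outer products by fullness of $\completion S$. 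This constant is the crux of the whole lemma: it is precisely the term whose typability forces the completion to be full and to equip every sort with an axiom, which is exactly how $\completion S$ was designed in Definition \ref{def:minimal-completion}.

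The point I expect to be the main obstacle is not typing preservation itself but totality of $\bttrans{\cdot}$: one must show that every subterm encountered falls into exactly one clause of the inverse translation, in particular that the clause $\bttrans{\app{\tterm s}{M}}=\btrans{M}$ always applies whenever an application is used as a type. Here I would exploit both restrictions built into the statement. Being in $\lpimmod S$ forbids $\Kind$-level $\beta$-redexes, so a term of type $\Type$ cannot have a $\lambda$-abstraction at its head (such a head redex would itself be $\Kind$-level, since its type reduces to $\Type$ and $\Type:\Kind$). The object-context restriction forbids type variables, so the head of an application that is itself of type $\Type$ can only be the constant $\tterm s$: any variable-headed or $\tprod{s_1}{s_2}{s_3}$-headed application instead lands at the level of $\Type$ and is therefore governed by (2) rather than (3). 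A short head-form analysis of this kind, carried out within the same induction, shows each term lies in a unique clause of the inverse translation and closes the argument.
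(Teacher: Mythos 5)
Your proposal is correct and takes essentially the same route as the paper's own proof: a simultaneous induction on the derivation, with the constants $\tsort{s}$ and $\tprod{s_{1}}{s_{2}}{s_{3}}$ typed via the completion's axioms and fullness, Lemmas \ref{lem:inverse-substitution} and \ref{lem:inverse-equivalence} in the Application and Conversion cases, and the $\Type$-level application case resolved by the same head analysis (the object-context restriction rules out variable heads and the absence of $\Kind$-level $\beta$-redexes rules out abstraction heads, forcing the head to be $\tterm{s}$). Your remarks on totality of the inverse translation and on object contexts being preserved under binders simply make explicit steps the paper leaves implicit.
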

\begin{proof}
By induction on the derivation. The details of the proof can be found
in the Appendix.
\end{proof}

\subsection{\label{sub:completion-conservativity}Reduction to $\protect\pts S$}

In order to show that $\pts{\completion S}$ is a conservative extension
of $\pts S$, we prove that $\beta$-reduction at the level of $\tau$
terminates. A straightforward proof by induction would fail because
contracting a $\tau$-level $\beta$-redex can create other such redexes.
To solve this, we adapt Tait's \emph{reducibility method} \cite{tait_intensional_1967}.
The idea is to strengthen the induction hypothesis of the proof by
defining a predicate by induction on the type of the term.

\medskip{}

\begin{definition}
\label{def:reducibility}The predicate $\reducible{\Gamma}MA$ is
defined as $\wellformed[\pts S]{\Gamma}$ and $\typing[\pts{\completion S}]{\Gamma}{M:A}s$
for some sort $s$ and:
\begin{itemize}
\item if $\ensuremath{s\not=\tau}$ or $A=s'$ for some $s'\in\set S$ then
$\reducible{\Gamma}MA$ iff $M\reduces_{\beta}^{*}M'$ and $A\reduces_{\beta}^{*}A'$
for some $M',A'$ such that $\typing[\pts S]{\Gamma}{M'}{A'}$,
\item if $s=\tau$ and $A=\prod xBC$ for some $B,C$ then $\reducible{\Gamma}MA$
iff for all $N$ such that $\reducible{\Gamma}NB$, $\reducible{\Gamma}{\app MN}{\subst NxC}$.
\end{itemize}
\end{definition}
Note that recursive definition covers all cases thanks to Theorem
\ref{thm:pts-top-sort-types}. To show that it is well-founded, we
define the following measure of $A$.
\begin{definition}
\label{def:height}If $\wellformed[\pts S]{\Gamma}$ and $\typing[\pts{\completion S}]{\Gamma}As$
then $\set H_{\tau}(A)$ is defined as:
\[
\begin{array}{rclc}
\set H_{\tau}(A) & = & 0 & \mbox{if \ensuremath{s\not=\tau}}\\
\set H_{\tau}(s') & = & 0 & \mbox{if \ensuremath{s=\tau}}\\
\set H_{\tau}(\prod xBC) & = & 1+max(\set H_{\tau}(B)+\set H_{\tau}(C)) & \mbox{if \ensuremath{s=\tau}}
\end{array}
\]
\end{definition}
\begin{lemma}
If $\typing[\pts{\completion S}]{\Gamma,x:B}C{\tau}$ and $\typing[\pts{\completion S}]{\Gamma}NB$
then $\set H_{\tau}(\subst NxC)=\set H_{\tau}(C)$.\end{lemma}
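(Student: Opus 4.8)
The plan is to prove the statement by induction on the structure of the type $C$, using the standard substitution property of pure type systems to control the sorts that the subterms inhabit. The crucial observation is that $\set H_{\tau}$ depends only on the \emph{product skeleton} of a type together with the information of which of its nested components are typed at sort $\tau$, and that substituting a term for a variable alters neither. Since $\tau$ is a top-sort of $\pts{\completion S}$, Theorem \ref{thm:pts-top-sort-types} applied to $\typing[\pts{\completion S}]{\Gamma,x:B}C{\tau}$ tells us that $C$ is either a sort $s'$ or a product $\prod yDE$, which yields the two induction cases.

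In the sort case $C=s'$, the variable $x$ does not occur in $C$, so $\subst NxC=s'$; by the substitution property $\typing[\pts{\completion S}]{\Gamma}{\subst NxC}{\tau}$, and both sides reduce to $\set H_{\tau}(s')=0$. In the product case $C=\prod yDE$, we may assume $y\neq x$ and $y\notin\freevars N$, so that $\subst NxC=\prod y{\subst NxD}{\subst NxE}$ is again a product. Inversion of the Product rule on $\typing[\pts{\completion S}]{\Gamma,x:B}C{\tau}$ yields $\typing[\pts{\completion S}]{\Gamma,x:B}D{s_{1}}$ and $\typing[\pts{\completion S}]{\cdecl{\Gamma,x:B}yD}E{s_{2}}$ with $(s_{1},s_{2},\tau)\in\completion{\set R}$, and the substitution property transports these along $N$ to give $\typing[\pts{\completion S}]{\Gamma}{\subst NxD}{s_{1}}$ and $\typing[\pts{\completion S}]{\cconcat{\Gamma}{y:\subst NxD}}{\subst NxE}{s_{2}}$; in particular the inhabiting sorts $s_{1},s_{2}$ are unchanged because $\subst Nx{s_{i}}=s_{i}$.

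It then remains to match $\set H_{\tau}(\subst NxD)$ with $\set H_{\tau}(D)$ and $\set H_{\tau}(\subst NxE)$ with $\set H_{\tau}(E)$. For each component, if its sort is not $\tau$ then both heights are $0$ by definition, and the preservation of the sort under substitution guarantees the substituted component also has height $0$. If the sort is $\tau$, I invoke the induction hypothesis on the (strictly smaller) subterm. To make this legal for the codomain $E$, whose typing lives in the extended context $\Gamma,x:B,y:D$, I would state the induction hypothesis uniformly for an arbitrary context suffix, i.e.\ for judgements of the form $\typing[\pts{\completion S}]{\cconcat{\Gamma,x:B}{\Delta}}C{\tau}$ with $\typing[\pts{\completion S}]{\Gamma}NB$; the substitution lemma rewrites the tail $\Delta$ as well, so the hypothesis applies to $E$ in context $\cconcat{\Gamma}{y:\subst NxD}$. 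Combining the equalities through $\set H_{\tau}(\prod yDE)=1+\max(\set H_{\tau}(D),\set H_{\tau}(E))$ gives the result.

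The main obstacle is precisely this bookkeeping of contexts: the codomain of a product is typed under a context extended by the bound variable, so a naive induction on $C$ alone does not supply a usable hypothesis for $E$. Generalizing over an arbitrary context suffix $\Delta$ (equivalently, appealing to the full form of the substitution lemma that also rewrites the context tail) removes this friction. Once the statement is generalized, every remaining step is a routine appeal to the substitution property and to functionality of $\pts{\completion S}$, the latter making the inhabiting sort---and hence the relevant case split in the definition of $\set H_{\tau}$---unique and thereby ensuring $\set H_{\tau}$ is well-defined throughout.
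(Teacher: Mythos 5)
Your proof is correct and follows essentially the same route as the paper, whose entire proof is the one-line ``By induction on $C$'': you carry out that structural induction, with Theorem \ref{thm:pts-top-sort-types} supplying the sort/product case split and the substitution lemma preserving the inhabiting sorts. The one refinement you add---generalizing the statement over an arbitrary context suffix $\Delta$ so the induction hypothesis applies to the codomain $E$ typed under $\Gamma,x:B,y:D$---is exactly the bookkeeping the paper's terse proof leaves implicit, and your version is the more rigorous rendering of it.
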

\begin{proof}
By induction on $C$.\end{proof}
\begin{corollary}
Definition \ref{def:reducibility} is well-founded.\end{corollary}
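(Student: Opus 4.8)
The plan is to read Definition \ref{def:reducibility} as a recursion on the measure $\set H_\tau(A)$ of the type index, and to check that every recursive appeal is to an instance whose type has strictly smaller $\tau$-height. Concretely, to each instance $\reducible{\Gamma}{M}{A}$ I assign the natural number $\set H_\tau(A)$ from Definition \ref{def:height}; in the first clause of the definition (when $s \ne \tau$, or $s = \tau$ with $A = s'$) there is no appeal to $\models_S$, and in the second clause ($s = \tau$ and $A = \prod x B C$) the predicate is spelled out through $\reducible{\Gamma}{N}{B}$ and $\reducible{\Gamma}{\app M N}{\subst N x C}$. So it suffices to show $\set H_\tau(B) < \set H_\tau(\prod x B C)$ and $\set H_\tau(\subst N x C) < \set H_\tau(\prod x B C)$.

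First, though, I would make sure that $\set H_\tau$ is a genuine total function on the types that occur as indices. The three clauses of Definition \ref{def:height} are \emph{exhaustive}: since $\tau \notin \set S$ and the minimal completion adds only axioms of the shape $(s_1 : \tau)$, the sort $\tau$ carries no axiom, i.e.\ it is a top-sort of $\pts{\completion S}$; hence by Theorem \ref{thm:pts-top-sort-types} any $A$ with $\typing[\pts{\completion S}]{\Gamma}{A}{\tau}$ is either a sort $s'$ or a product $\prod x B C$. They are also \emph{deterministic}: $\pts{\completion S}$ is again functional, so by uniqueness of types the sort $s$ with $\typing[\pts{\completion S}]{\Gamma}{A}{s}$ is unique up to conversion, and as distinct sorts are not $\beta$-convertible the test ``$s = \tau$'' is unambiguous. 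The recursion defining $\set H_\tau$ descends on the structure of $A$, so it terminates; and the typing premises needed to form $\set H_\tau(B)$, $\set H_\tau(C)$, and $\set H_\tau(\subst N x C)$ follow by inversion on $\typing[\pts{\completion S}]{\Gamma}{\prod x B C}{\tau}$ together with the substitution property of $\pts{\completion S}$.

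The strict decrease is then immediate for the first call: by Definition \ref{def:height}, $\set H_\tau(\prod x B C)$ strictly exceeds both $\set H_\tau(B)$ and $\set H_\tau(C)$, so $\set H_\tau(B) < \set H_\tau(\prod x B C)$. For the second call I would invoke the preceding lemma, which gives $\set H_\tau(\subst N x C) = \set H_\tau(C)$ whenever $\typing[\pts{\completion S}]{\Gamma, x:B}{C}{\tau}$; inversion on the product supplies $(s_1, s_2, \tau) \in \completion{\set R}$ with $\typing[\pts{\completion S}]{\Gamma, x:B}{C}{s_2}$, and in the remaining case $s_2 \ne \tau$ the substitution property keeps $\subst N x C$ at the same sort $s_2$, so $\set H_\tau(\subst N x C) = 0 = \set H_\tau(C)$ trivially. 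Either way $\set H_\tau(\subst N x C) = \set H_\tau(C) < \set H_\tau(\prod x B C)$. Since both recursive appeals strictly decrease the measure, the definition is well-founded.

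The one point that requires real care, and the reason the preceding lemma is indispensable, is the substituted index $\subst N x C$: a naive measure such as the syntactic size of the type would break, because $\subst N x C$ may be far larger than $C$. What rescues the argument is that $\set H_\tau$ records only the nesting of products sitting at sort $\tau$, and Theorem \ref{thm:pts-top-sort-types} forces such a $\tau$-level type to be a sort or a product---never a variable or an application---so substituting an object term $N$ cannot alter this $\tau$-level skeleton. That invariance is exactly what the preceding lemma establishes, by a routine induction on $C$.
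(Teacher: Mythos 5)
Your proof is correct and takes essentially the same route as the paper, whose one-line argument is precisely that the measure $\set H_{\tau}(A)$ of Definition \ref{def:height} strictly decreases at each recursive call, with the preceding lemma supplying $\set H_{\tau}(\subst NxC)=\set H_{\tau}(C)$ for the substituted index. Your extra checks---exhaustiveness of the clauses via Theorem \ref{thm:pts-top-sort-types}, unambiguity of the test $s=\tau$ via functionality of $\completion S$, and the case $s_{2}\not=\tau$ where the lemma does not literally apply but the height is $0$ on both sides---are sound elaborations of details the paper leaves implicit.
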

\begin{proof}
The measure $\set H_{\tau}(A)$ strictly decreases in the definition.
\end{proof}
The predicate we defined is compatible with $\beta$-equivalence.
\begin{lemma}
\label{lem:reducibility-term-equiv} If $\reducible{\Gamma}MA$ and
$\typing[\pts{\completion S}]{\Gamma}{M'}A$ and $M\equiv_{\beta}M'$
then $\reducible{\Gamma}{M'}A$.\end{lemma}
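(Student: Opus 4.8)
The plan is to proceed by well-founded induction on the measure $\set H_{\tau}(A)$, mirroring the two clauses of Definition \ref{def:reducibility}. Before splitting into cases, observe that both clauses of the target predicate $\reducible{\Gamma}{M'}{A}$ begin with the same side conditions, and these are immediate here: $\wellformed[\pts S]{\Gamma}$ is inherited from $\reducible{\Gamma}MA$, the typing $\typing[\pts{\completion S}]{\Gamma}{M'}{A}$ is assumed outright, and $\typing[\pts{\completion S}]{\Gamma}{A}{s}$ (which also fixes the sort $s$ governing the case split) is extracted from $\reducible{\Gamma}MA$. So the genuine work is to transport the second, clause-specific part of the predicate from $M$ to $M'$.

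In the base case ($s\not=\tau$, or $A=s'$ for some $s'\in\set S$), reducibility of $M$ supplies $M\reduces_{\beta}^{*}M_{0}$ and $A\reduces_{\beta}^{*}A_{0}$ with $\typing[\pts S]{\Gamma}{M_{0}}{A_{0}}$. I would first apply confluence of $\reduces_{\beta}$ to $M\equiv_{\beta}M'$ to obtain a common reduct $P$ with $M\reduces_{\beta}^{*}P$ and $M'\reduces_{\beta}^{*}P$, and then apply confluence again to the two reductions $M\reduces_{\beta}^{*}M_{0}$ and $M\reduces_{\beta}^{*}P$ to get a term $Q$ with $M_{0}\reduces_{\beta}^{*}Q$ and $P\reduces_{\beta}^{*}Q$; hence $M'\reduces_{\beta}^{*}Q$. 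Subject reduction in $\pts S$ upgrades $\typing[\pts S]{\Gamma}{M_{0}}{A_{0}}$ to $\typing[\pts S]{\Gamma}{Q}{A_{0}}$, so $Q$ and $A_{0}$ witness the second clause for $M'$, giving $\reducible{\Gamma}{M'}{A}$.

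In the inductive case ($s=\tau$ and $A=\prod xBC$), I would fix an arbitrary $N$ with $\reducible{\Gamma}NB$ and show $\reducible{\Gamma}{\app{M'}N}{\subst NxC}$. From $\reducible{\Gamma}MA$ we get $\reducible{\Gamma}{\app MN}{\subst NxC}$, and $M\equiv_{\beta}M'$ gives $\app MN\equiv_{\beta}\app{M'}N$. Since $\reducible{\Gamma}NB$ entails $\typing[\pts{\completion S}]{\Gamma}{N}{B}$ and $\typing[\pts{\completion S}]{\Gamma}{M'}{\prod xBC}$ is assumed, the Application rule yields $\typing[\pts{\completion S}]{\Gamma}{\app{M'}N}{\subst NxC}$. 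The lemma that $\set H_{\tau}(\subst NxC)=\set H_{\tau}(C)$ gives $\set H_{\tau}(\subst NxC)=\set H_{\tau}(C)<\set H_{\tau}(\prod xBC)=\set H_{\tau}(A)$, so the induction hypothesis applies at the smaller type $\subst NxC$ and turns $\reducible{\Gamma}{\app MN}{\subst NxC}$ into $\reducible{\Gamma}{\app{M'}N}{\subst NxC}$. As $N$ was arbitrary, this establishes $\reducible{\Gamma}{M'}{A}$.

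The bulk of the argument is routine bookkeeping of the $\pts{\completion S}$-typing side conditions. The only step that requires real care is the base case: it is there that we must merge the reduction $M\reduces_{\beta}^{*}M_{0}$ with the conversion $M\equiv_{\beta}M'$ into a single reduct of $M'$, which is exactly what confluence of $\reduces_{\beta}$ provides, and we must invoke subject reduction to carry the $\pts S$ typing along the newly introduced reduction steps.
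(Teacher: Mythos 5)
Your proof is correct and takes essentially the same route as the paper's: induction on the measure $\set H_{\tau}(A)$, with the base case settled by confluence of $\reduces_{\beta}$ together with subject reduction in $\pts S$, and the product case by applying the induction hypothesis to $\app MN\equiv_{\beta}\app{M'}N$ at the type $\subst NxC$, whose height is smaller by the substitution lemma for $\set H_{\tau}$. You merely make explicit the bookkeeping the paper compresses into ``by confluence and subject reduction'' and ``by induction hypothesis'' (the common-reduct construction, the typing of $\app{M'}N$ via the Application rule, and the height-preservation argument justifying the induction).
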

\begin{proof}
By induction on the height of $A$.
\begin{itemize}
\item If $\ensuremath{s\not=\tau}$ or $A=s'$ for some $s'\in\set S$ then
$M\reduces_{\beta}^{*}M''$ and $A\reduces_{\beta}^{*}A'$ for some
$M'',A'$ such that $\typing[\pts S]{\Gamma}{M''}{A'}$. By confluence
and subject reduction, $M'\reduces_{\beta}^{*}M'''$ such that $\typing[\pts S]{\Gamma}{M'''}{A'}$.
\item If $s=\tau$ and $A=\prod xBC$ for some $B,C$ then for all $N$
such that $\reducible{\Gamma}NB$, $\reducible{\Gamma}{\app MN}{\subst NxC}$.
By induction hypothesis, $\reducible{\Gamma}{\app{M'}N}{\subst NxC}$.
Therefore $\reducible{\Gamma}{M'}{\prod xBC}$.\hfill{}\qedhere
\end{itemize}
\end{proof}
\medskip{}

\begin{lemma}
\label{lem:reducibility-type-equiv} If $\reducible{\Gamma}MA$ and
$\typing[\pts{\completion S}]{\Gamma}{A'}s$ and $A\equiv_{\beta}A'$
then $\reducible{\Gamma}M{A'}$.\end{lemma}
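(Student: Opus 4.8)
The plan is to prove Lemma \ref{lem:reducibility-type-equiv} in parallel with the term-level version (Lemma \ref{lem:reducibility-term-equiv}), by induction on the height $\set H_{\tau}(A)$ of the type, mirroring the structure of Definition \ref{def:reducibility}. The key observation is that because $A \equiv_{\beta} A'$, confluence of $\beta$-reduction in $\pts{\completion S}$ gives a common reduct, and uniqueness of types together with subject reduction ensures that $A$ and $A'$ are classified by the same sort $s$ and have the same height, so the induction is well-founded and the case split in Definition \ref{def:reducibility} applies uniformly to both.

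First I would handle the base case, where $s \neq \tau$ or $A = s'$ for some $s' \in \set S$. Here $\reducible{\Gamma}MA$ unfolds to the existence of reducts $M \reduces_{\beta}^{*} M'$ and $A \reduces_{\beta}^{*} A''$ with $\typing[\pts S]{\Gamma}{M'}{A''}$. Since $A \equiv_{\beta} A'$, confluence yields a common reduct $A''$ and $A'$ of $A$, and by confluence again $A'$ and $A''$ share a further common reduct $A'''$. Subject reduction in $\pts S$ then transports the witness $\typing[\pts S]{\Gamma}{M'}{A''}$ along $A'' \reduces_{\beta}^{*} A'''$, giving $\typing[\pts S]{\Gamma}{M'}{A'''}$, which is exactly what is needed to conclude $\reducible{\Gamma}M{A'}$ since $A' \reduces_{\beta}^{*} A'''$.

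Next I would treat the product case, where $s = \tau$ and $A = \prod xBC$. By product compatibility applied to $A \equiv_{\beta} A'$ together with confluence, $A'$ must $\beta$-reduce to a product $\prod x{B'}{C'}$ with $B \equiv_{\beta} B'$ and $C \equiv_{\beta} C'$. The hypothesis $\reducible{\Gamma}MA$ means that for all $N$ with $\reducible{\Gamma}NB$ we have $\reducible{\Gamma}{\app MN}{\subst NxC}$. To establish $\reducible{\Gamma}M{A'}$, I would take an arbitrary $N$ with $\reducible{\Gamma}N{B'}$; since $B \equiv_{\beta} B'$ and $B$ has strictly smaller height, the induction hypothesis of the term-level lemma (or an appeal to Lemma \ref{lem:reducibility-term-equiv} applied at type $B'$, using that reducibility at a type is insensitive to $\beta$-equivalent types of smaller height) converts this to $\reducible{\Gamma}NB$, from which $\reducible{\Gamma}{\app MN}{\subst NxC}$ follows, and then $\subst NxC \equiv_{\beta} \subst Nx{C'}$ with smaller height lets the induction hypothesis conclude $\reducible{\Gamma}{\app MN}{\subst Nx{C'}}$.

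The main obstacle will be managing the mutual dependence between the type-equivalence and term-equivalence lemmas in the product case: shifting reducibility along the codomain $C \equiv_{\beta} C'$ and along the domain $B \equiv_{\beta} B'$ both require the inductive hypothesis at a strictly smaller type, so I must verify that the height measure $\set H_{\tau}$ genuinely drops when passing to $B$, $C$, or $\subst NxC$. The preservation of height under substitution proven above is exactly what guarantees $\set H_{\tau}(\subst NxC) = \set H_{\tau}(C) < \set H_{\tau}(\prod xBC)$, so the induction is well-founded; the delicate bookkeeping is ensuring that every invocation of reducibility at a $\beta$-equivalent type is legitimately at smaller height, and that product compatibility and confluence are applied in $\pts{\completion S}$ where they are available.
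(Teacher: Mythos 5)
Your proof is correct and essentially identical to the paper's: induction on $\set H_{\tau}(A)$, with the base case discharged by confluence, subject reduction, and conversion, and the product case by product compatibility plus the induction hypothesis — and you are in fact more explicit than the paper about the domain step (converting $\reducible{\Gamma}N{B'}$ into $\reducible{\Gamma}NB$), which the paper's proof leaves implicit. Two small corrections only: that domain step is an instance of the induction hypothesis of \emph{this} lemma applied at the smaller type $B'$, not of Lemma \ref{lem:reducibility-term-equiv} (which changes the term, never the type); and $A'$ is \emph{literally} a product $\prod x{B'}{C'}$ — by Theorem \ref{thm:pts-top-sort-types}, since $\typing[\pts{\completion S}]{\Gamma}{A'}{\tau}$ and $\tau$ is a top-sort, and the sort case is excluded by confluence — not merely $\beta$-reducible to one, which is what licenses unfolding the product clause of Definition \ref{def:reducibility} at $A'$.
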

\begin{proof}
By induction on the height of $A$.
\begin{itemize}
\item If $\ensuremath{s\not=\tau}$ or $A=s'$ for some $s'\in\set S$ then
$M\reduces_{\beta}^{*}M'$ and $A\reduces_{\beta}^{*}A''$ for some
$M',A''$ such that $\typing[\pts S]{\Gamma}{M'}{A''}$. By conversion,
$\typing[\pts{\completion S}]{\Gamma}M{A'}$, so by subject reduction
$\typing[\pts{\completion S}]{\Gamma}{M'}{A'}$. By confluence, subject
reduction, and conversion, $A'\reduces_{\beta}^{*}A'''$ such that
$\typing[\pts S]{\Gamma}{M'}{A'''}$.
\item If $s=\tau$ and $A=\prod xBC$ for some $B,C$ then for all $N$
such that $\reducible{\Gamma}NB$, $\reducible{\Gamma}{\app MN}{\subst NxC}$.
By product compatibility, $A'=\prod x{B'}{C'}$ such that $B\equiv_{\beta}B'$
and $C\equiv_{\beta}C'$. By induction hypothesis, $\reducible{\Gamma}{\app MN}{\subst Nx{C'}}$.
Therefore $\reducible{\Gamma}M{\prod x{B'}{C'}}$.\hfill{}\qedhere
\end{itemize}
\end{proof}
\medskip{}
We extend the definition of the inductive predicate to contexts and
substitutions before proving the main general lemma.
\begin{definition}
If $\wellformed[\pts{\completion S}]{\Gamma}$, $\wellformed[\pts S]{\Gamma'}$,
and $\sigma$ is a substitution for the variables of $\Gamma$, then
$\reducible{\Gamma'}{\sigma}{\Gamma}$ when $\reducible{\Gamma'}{\sigma(x)}{\sigma(A)}$
for all $(x:A)\in\Gamma$.\end{definition}
\begin{lemma}
\label{lem:completion-reducible} If $\typing[\pts{\completion S}]{\Gamma}MA:s$
then for any context $\Gamma'$ and substitution $\sigma$ such that
$\wellformed[\pts S]{\Gamma'}$ and $\reducible{\Gamma'}{\sigma}{\Gamma}$,
$\reducible{\Gamma'}{\sigma(M)}{\sigma(A)}$.\end{lemma}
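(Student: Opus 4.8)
The statement is the fundamental lemma of the reducibility method: every well-typed term of $\pts{\completion S}$ is reducible under any reducible substitution. I would prove this by induction on the typing derivation $\typing[\pts{\completion S}]{\Gamma}MA:s$. The induction needs to be general enough to handle all the typing rules of a PTS, so the inductive hypothesis applies not just to $M$ but to all the premises of each rule. For each of the seven typing rules (Sort, Variable, Product, Abstraction, Application, Conversion), I would show that if the premises' subterms are reducible under $\sigma$, then so is the conclusion's term. The two equivalence lemmas just proved (Lemma~\ref{lem:reducibility-term-equiv} and Lemma~\ref{lem:reducibility-type-equiv}) will let me transport reducibility across $\beta$-conversions, which is essential for the Application and Conversion cases.

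Let me go through the plan for each case. First, I would note that $\reducible{\Gamma'}{\sigma}{\Gamma}$ together with $\typing[\pts{\completion S}]{\Gamma}MA:s$ gives $\typing[\pts{\completion S}]{\Gamma'}{\sigma(M)}{\sigma(A)}$ by the usual substitution lemma for PTSs, so the typing side-condition in $\reducible{\Gamma'}{\sigma(M)}{\sigma(A)}$ is automatic throughout. For the Variable case, reducibility of $\sigma(x)$ is exactly the hypothesis $\reducible{\Gamma'}{\sigma}{\Gamma}$. For the Sort case, $\sigma(s_1)=s_1$ reduces to itself and is already a $\pts S$-term (or falls under the base clause), so it is reducible. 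The Application case is where I expect to use the definition directly: if $\reducible{\Gamma'}{\sigma(M)}{\sigma(\prod xAB)}$ and $\reducible{\Gamma'}{\sigma(N)}{\sigma(A)}$, then when $\sigma(\prod xAB)$ is at level $\tau$ with a product shape, the second clause of Definition~\ref{def:reducibility} immediately yields $\reducible{\Gamma'}{\app{\sigma(M)}{\sigma(N)}}{\subst{\sigma(N)}x{\sigma(B)}}$; when it is at a non-$\tau$ level or reduces to a non-product, I would fall back to the base clause, reducing both sides to $\pts S$-terms and applying subject reduction and confluence.

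\textbf{The hard cases: Abstraction and Product.} The Abstraction case is the crux. To show $\reducible{\Gamma'}{\abs x{\sigma(A)}{\sigma(M)}}{\prod x{\sigma(A)}{\sigma(B)}}$ when the product type lives at level $\tau$, I must verify the second clause: for every $N$ with $\reducible{\Gamma'}N{\sigma(A)}$, the application $\app{(\abs x{\sigma(A)}{\sigma(M)})}N$ is reducible at type $\subst Nx{\sigma(B)}$. The natural move is to extend the substitution to $\sigma' = \sigma[x \backslash N]$, check that $\reducible{\Gamma'}{\sigma'}{(\Gamma,x:A)}$ holds (using reducibility of $N$), apply the induction hypothesis to the premise $\typing[\pts{\completion S}]{\Gamma,x:A}MB$ to get $\reducible{\Gamma'}{\sigma'(M)}{\sigma'(B)}$, and then transport this back along the $\beta$-redex $\app{(\abs x{\sigma(A)}{\sigma(M)})}N \reduces_\beta \sigma'(M)$ using Lemma~\ref{lem:reducibility-term-equiv}. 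I expect the main obstacle to be the bookkeeping around levels here: the definition of reducibility splits on whether the type is at level $\tau$, and I must confirm that the measure $\set H_\tau$ genuinely drops when passing from $\prod x{\sigma(A)}{\sigma(B)}$ to $\subst Nx{\sigma(B)}$ so that the appeal to the inductive structure of the predicate is well-founded — this is exactly what the preceding height lemma guarantees, namely $\set H_\tau(\subst Nx{\sigma(B)}) = \set H_\tau(\sigma(B)) < \set H_\tau(\prod x{\sigma(A)}{\sigma(B)})$. Finally, the Product and Conversion cases are comparatively routine: for Product, the type formed is a sort, which falls under the base clause (the term reduces to a $\pts S$-product after reducing its constituents, or more simply it lands at a type that is itself a sort so we only need a reduct living in $\pts S$); for Conversion, I transport reducibility across $\sigma(A) \equiv_\beta \sigma(B)$ using Lemma~\ref{lem:reducibility-type-equiv}, which is precisely the statement that reducibility is stable under convertible type annotations.
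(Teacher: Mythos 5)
Your proposal is correct and follows essentially the same route as the paper's proof: induction on the typing derivation, splitting the Application and Abstraction cases on whether the product type lives at sort $\tau$, using the extended substitution $(\sigma,N/x)$ together with Lemma~\ref{lem:reducibility-term-equiv} to absorb the created $\beta$-redex in the $\tau$-level Abstraction case, falling back to the base clause (with confluence and subject reduction) at non-$\tau$ levels, and closing Conversion via Lemma~\ref{lem:reducibility-type-equiv}. The only details you leave implicit that the paper makes explicit are the symmetric $s\not=\tau$ subcase of Abstraction (where one must extend $\Gamma'$ with $x:A'$ and note that the variable itself is reducible) and the observation, in the Sort and Product cases, that the hypothesis $A:s$ forces $s_{2}$ (resp.\ $s_{3}$) to differ from $\tau$, so the axiom (resp.\ rule) already lies in $\set A$ (resp.\ $\set R$) and the base clause yields a $\pts S$-typable reduct — precisely the checks your parenthetical remarks gesture at.
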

\begin{proof}
By induction on the derivation of $\typing[\pts{\completion S}]{\Gamma}MA$.
The details of the proof can be found in the Appendix.\end{proof}
\begin{corollary}
\label{cor:completion-conservativity} Suppose $\wellformed[\pts S]{\Gamma}$
and either $\typing[\pts S]{\Gamma}As$ or $A=s$ for some $s\in\set S$.
If $\typing[\pts{\completion S}]{\Gamma}MA$ then $M\reduces_{\beta}^{*}M'$
such that $\typing[\pts S]{\Gamma}{M'}A$.\end{corollary}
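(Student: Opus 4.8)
The plan is to apply Lemma \ref{lem:completion-reducible} with $\Gamma'=\Gamma$ and with $\sigma$ the identity substitution, and then to read off the conclusion from the base clause of the reducibility predicate (Definition \ref{def:reducibility}). First I would check that the hypotheses of Lemma \ref{lem:completion-reducible} are met, i.e. that $\typing[\pts{\completion S}]{\Gamma}MA:s$ for some sort $s$. We are given $\typing[\pts{\completion S}]{\Gamma}MA$, so it remains to produce a sort $s$ with $\typing[\pts{\completion S}]{\Gamma}As$. If $\typing[\pts S]{\Gamma}As$ with $s\in\set S$, this lifts to $\pts{\completion S}$ via the inclusions $\set A\subseteq\completion{\set A}$ and $\set R\subseteq\completion{\set R}$. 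If instead $A=s$ is a sort, then since $\completion S$ is a completion (Definition \ref{def:completion}, clause 2) there is an axiom $(s:s')\in\completion{\set A}$, so $\typing[\pts{\completion S}]{\Gamma}{s}{s'}$. Either way the hypothesis is satisfied.

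Next comes the crux: verifying $\reducible{\Gamma}{\sigma}{\Gamma}$ for the identity substitution. By definition this amounts to showing $\reducible{\Gamma}{x}{A'}$ for every $(x:A')\in\Gamma$. Since $\wellformed[\pts S]{\Gamma}$, each such $A'$ was introduced by the \emph{Declaration} rule, so there is a prefix $\Delta$ of $\Gamma$ with $\typing[\pts S]{\Delta}{A'}{s'}$ for some $s'\in\set S$; by weakening $\typing[\pts S]{\Gamma}{A'}{s'}$, which lifts to $\pts{\completion S}$. As $s'\in\set S$ we have $s'\neq\tau$, so we are in the first clause of Definition \ref{def:reducibility}, and since $\typing[\pts S]{\Gamma}{x}{A'}$ holds by the \emph{Variable} rule with $x\reduces_\beta^* x$ and $A'\reduces_\beta^* A'$, we conclude $\reducible{\Gamma}{x}{A'}$. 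The side condition $\wellformed[\pts{\completion S}]{\Gamma}$ follows from $\wellformed[\pts S]{\Gamma}$ by inclusion. Applying Lemma \ref{lem:completion-reducible} then yields $\reducible{\Gamma}{M}{A}$, since $\sigma(M)=M$ and $\sigma(A)=A$.

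Finally I would unfold $\reducible{\Gamma}{M}{A}$ to extract the reduction, arguing that we always land in the first clause of Definition \ref{def:reducibility}. If $A=s$ is a sort of $\set S$, the disjunct ``$A=s'$ for some $s'\in\set S$'' applies directly. Otherwise $\typing[\pts S]{\Gamma}As$ with $s\in\set S$; since the minimal completion only adjoins axioms and rules where none existed, $\pts{\completion S}$ is again functional and enjoys uniqueness of types, so the sort witnessing the predicate is $\equiv_\beta$ to $s$ and hence, sorts being $\beta$-normal, equal to $s\neq\tau$, so the disjunct ``$s\neq\tau$'' applies. In the first clause we obtain $M\reduces_\beta^* M'$ and $A\reduces_\beta^* A'$ with $\typing[\pts S]{\Gamma}{M'}{A'}$. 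When $A$ is a sort, $A'=A$ and we are done; otherwise $A\equiv_\beta A'$ and, since $\typing[\pts S]{\Gamma}As$, the \emph{Conversion} rule gives $\typing[\pts S]{\Gamma}{M'}{A}$, as required.

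The main obstacle is the middle step, namely arranging that the identity substitution is reducible. Everything turns on the declared types of $\Gamma$ being $\pts S$-types classified by sorts in $\set S$ (hence distinct from $\tau$): this forces the base clause of the predicate and makes the variables trivially reducible, which is exactly what lets the heavy reducibility machinery of Lemma \ref{lem:completion-reducible} fire on the identity and collapse to the desired $\beta$-reduction.
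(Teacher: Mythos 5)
Your proposal is correct and follows essentially the same route as the paper: instantiate Lemma \ref{lem:completion-reducible} with $\Gamma'=\Gamma$ and the identity substitution, observe that one lands in the first clause of Definition \ref{def:reducibility} because the sort classifying $A$ lies in $\set S$ and so differs from $\tau$, extract $M\reduces_{\beta}^{*}M'$ and $A\reduces_{\beta}^{*}A'$ with $\typing[\pts S]{\Gamma}{M'}{A'}$, and conclude directly when $A$ is a sort and by conversion otherwise. The paper's proof is merely terser, leaving implicit the routine checks you spell out, namely that $A$ is typed by a sort in $\pts{\completion S}$, that the identity substitution is reducible over a $\pts S$-well-formed context, and that functionality of $\completion S$ together with uniqueness of types pins down the relevant sort; all of these are handled correctly in your write-up.
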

\begin{proof}
Taking $\sigma$ as the identity substitution, there are terms $M'$
and $A'$ such that $M\reduces_{\beta}^{*}M'$ and $A\reduces_{\beta}^{*}A'$
and $\typing[\pts S]{\Gamma}{M'}{A'}$. If $A=s\in S$ then $A'=s$
and we are done. Otherwise by conversion we get $\typing[\pts S]{\Gamma}{M'}A$.
\end{proof}
We now have all the tools to prove the main theorem.
\begin{theorem}[Conservativity]
\label{thm:conservativity}  For any $\Gamma$-type $A$ of $\pts S$,
if there is a term $N$ such that $\typing[\lpimod S]{\ttrans{\Gamma}{}}N{\ttrans A{\Gamma}}$
then there is a term $M$ such that $\typing[\pts S]{\Gamma}MA$.\end{theorem}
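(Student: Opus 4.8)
The plan is to read the proof straight off the diagram at the head of this section: compose the $\Kind$-redex elimination, the inverse translation into the minimal completion, and the reduction of $\pts{\completion S}$ back to $\pts S$, so that the hypothetical inhabitant $N$ of $\ttrans A{\Gamma}$ is transported step by step to a genuine inhabitant $M$ of $A$ in $\pts S$.

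First I would pass to $\lpimmod S$. By Lemma \ref{lem:kind-redex-elimination} there is a $\lpimmod S$ term $N^{-}$ with $N\reduces_{\beta}^{*}N^{-}$, and by subject reduction $\typing[\lpimod S]{\ttrans{\Gamma}{}}{N^{-}}{\ttrans A{\Gamma}}$. Before invoking the inverse translation I must discharge the two side conditions of Lemma \ref{lem:inverse-completion}. The context $\ttrans{\Gamma}{}$ is an object context: every declaration comes from some type $B$ with $\typing{\Delta}Bs$, and its translation $\ttrans B{\Delta}$ (which is $\ttype s$, a product, or $\app{\tterm s}{\trans B{\Delta}}$) always has type $\Type$. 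Moreover $\ttrans{\Gamma}{}$ contains no $\Kind$-level $\beta$-redex, hence is a bona fide $\lpimmod S$ context: the only subterms the forward translation ever places at the level of $\Kind$ are sorts $\ttype s$, products, and applications $\app{\tterm s}{(\cdots)}$, none of which is a redex, whereas all redexes it introduces sit inside the $\Type$-level subterms $\trans{\cdot}{\cdot}$. Finally, since $A$ is a $\Gamma$-type, $\ttrans A{\Gamma}$ likewise has type $\Type$, so $N^{-}$ sits at the level of $\Type$.

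These checks let me apply Lemma \ref{lem:inverse-completion}(2) and obtain $\typing[\pts{\completion S}]{\bttrans{\ttrans{\Gamma}{}}}{\btrans{N^{-}}}{\bttrans{\ttrans A{\Gamma}}}$. By Lemma \ref{lem:inverse-translation}(2) we have $\bttrans{\ttrans A{\Gamma}}\equiv_{\beta}A$, and the same applied declaration by declaration gives $\bttrans{\ttrans{\Gamma}{}}\equiv_{\beta}\Gamma$. Since $\Gamma$ is well-formed in $\pts S$ (hence in $\pts{\completion S}$) and $\bttrans{\ttrans{\Gamma}{}}$ is well-formed in $\pts{\completion S}$ by Lemma \ref{lem:inverse-completion}(1), the standard context-conversion property of pure type systems lets me replace the context by $\Gamma$; and because $A$ is a $\Gamma$-type --- so either $A=s$, in which case $\bttrans{\ttrans A{\Gamma}}=A$ on the nose, or $\typing{\Gamma}As$, in which case the Conversion rule applies --- I obtain $\typing[\pts{\completion S}]{\Gamma}{\btrans{N^{-}}}A$. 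Now all the hypotheses of Corollary \ref{cor:completion-conservativity} hold: $\wellformed[\pts S]{\Gamma}$, and $A$ is a $\pts S$ $\Gamma$-type. It therefore yields a reduct $\btrans{N^{-}}\reduces_{\beta}^{*}M$ with $\typing[\pts S]{\Gamma}MA$, which is the required witness.

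The genuine difficulty is not in this gluing but in the two lemmas it consumes. Lemma \ref{lem:inverse-completion} must make the inverse translation respect typing while crossing into the completion, where the top-sorts of $\pts S$ become typable and every product becomes formable --- precisely the features by which $\lpimmod S$ outruns $\pts S$. Lemma \ref{lem:completion-reducible}, through the reducibility predicate $\reducible{\Gamma}MA$, must show that $\tau$-level $\beta$-reduction terminates and carries any $\pts{\completion S}$ inhabitant of a genuine $\pts S$ type back inside $\pts S$; this is where the possible non-normalization of $\pts S$ is sidestepped, and it is the step I expect to be the real obstacle. Within the final argument itself the only delicate bookkeeping is the tracking of levels: the inverse translation must be fed a term that is simultaneously free of $\Kind$-redexes and at the level of $\Type$, and the conversion step is legitimate only because $A$ itself --- and not merely $\bttrans{\ttrans A{\Gamma}}$ --- is typable in $\pts{\completion S}$.
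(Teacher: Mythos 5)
Your proof is correct and takes essentially the same route as the paper's: eliminate $\Kind$-level $\beta$-redexes via Lemma \ref{lem:kind-redex-elimination} and subject reduction, translate back into $\pts{\completion S}$ via Lemmas \ref{lem:inverse-completion} and \ref{lem:inverse-translation}, and conclude with Corollary \ref{cor:completion-conservativity}. The only difference is that you make explicit the side conditions the paper leaves implicit (that $\ttrans{\Gamma}{}$ is an object context free of $\Kind$-level redexes, and the context/type conversion steps justifying $\typing[\pts{\completion S}]{\Gamma}{\btrans{N^{-}}}A$), which is a faithful expansion rather than a different argument.
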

\begin{proof}
By Lemma \ref{lem:kind-redex-elimination}, there is a $\lpimmod S$
term $N^{-}$ such that $N\reduces_{\beta}^{*}N^{-}$. By subject
reduction, $\typing[\lpimod S]{\ttrans{\Gamma}{}}{N^{-}}{\ttrans A{\Gamma}}$.
By Lemmas \ref{lem:inverse-completion} and \ref{lem:inverse-translation},
$\typing[\pts{\completion S}]{\Gamma}{\btrans{N^{-}}}A$. By Corollary
\ref{cor:completion-conservativity}, there is a term $M$ such that
$\btrans{N^{-}}\reduces_{\beta}^{*}M$ and $\typing[\pts S]{\Gamma}MA$.
\end{proof}

\section{\label{sec:conclusion}Conclusion}

We have shown that $\lpimod S$ is conservative even when $\pts S$
is not normalizing. Even though $\lpimod S$ can construct more functions
than $\pts S$, it preserves the semantics of $\pts S$. This effect
is similar to various conservative extensions of pure type systems
such as \emph{pure type systems with definitions} \cite{severi_pure_1994},
\emph{pure type systems without the $\Pi$-condition} \cite{severi_pure_1995},
or \emph{predicative (ML) polymorphism} \cite{roux_structural_2014}.
Inconsistency in pure type systems usually does not come from the
ability to type more functions, but from the possible impredicativity
caused by assigning a sort to the type of these functions. It is clear
that no such effect arises in $\lpimod S$ because there is no constant
$\tprod{s_{1}}{s_{2}}{s_{3}}$ associated to the type of illegal abstractions.

One could ask whether the techniques we used are adequate. While the
construction of $\pts{S^{*}}$ is not absolutely necessary, we feel
that it simplifies the proof and that it helps us better understand
the behavior of $\lpimod S$ by reflecting it back into a pure type
system. The relative normalization steps of Section \ref{sub:completion-conservativity}
correspond to the normalization of a simply typed $\lambda$ calculus.
Therefore, it is not surprising that we had to use Tait's reducibility
method. However, our proof can be simplified in some cases. A PTS
is \emph{complete }when it is a completion of itself. In that case,
the construction of $\completion S$ is unnecessary. The translations
$\btrans M$ and $\bttrans A$ translate directly into $\pts S$,
and Section \ref{sub:completion-conservativity} can be omitted. This
is the case for example for the calculus of constructions with \emph{infinite
type hierarchy} ($\pts{C^{\infty}}$) \cite{severi_pure_1994}, which
is the basis for proof assistants such as Coq and Matita.

The results of this paper can be extended in several directions. They
could be adapted to show the conservativity of other embeddings, such
as that of the \emph{calculus of inductive constructions} (CIC) \cite{boespflug_coqine:_2012}.
They also indirectly imply that $\lpimod S$ is weakly normalizing
when $\pts S$ is weakly normalizing because the image of a $\pts S$
term is normalizing \cite{cousineau_embedding_2007}. The strong normalization
of $\lpimod S$ when $\pts S$ is strongly normalizing is still an
open problem. The Barendregt-Geuvers-Klop conjecture states that any
weakly normalizing PTS is also strongly normalizing \cite{geuvers_logics_1993}.
There is evidence that this conjecture is true \cite{barthe_weak_2001},
in which case we hope that its proof could be adapted to prove the
strong normalization of $\lpimod S$. Weak normalization could also
be used as an intermediary step for constructing models by induction
on types in order to prove strong normalization.

\subsection*{Acknowledgments}

We thank Gilles Dowek and Guillaume Burel for their support and feedback,
as well as Fr\'{e}d\'{e}ric Blanqui, Rapha\"{e}l Cauderlier, and the
various anonymous referees for their comments and suggestions on previous
versions of this paper.

\bibliographystyle{plain}
\bibliography{conservativity-embeddings-long}
\newpage{}

\appendix

\part*{Appendix}

\section*{Proof details}
\begin{lemma*}[\ref{lem:inverse-completion}]
 For any $\lpimmod S$ object context $\Gamma$ and terms $M,A$:
\begin{enumerate}
\item If $\wellformed[\lpimod S]{\Gamma}$ then $\wellformed[\pts{\completion S}]{\bttrans{\Gamma}}$.
\item If $\typing[\lpimod S]{\Gamma}MA:\Type$ then $\typing[\pts{\completion S}]{\bttrans{\Gamma}}{\btrans M}{\bttrans A}$.
\item If $\typing[\lpimod S]{\Gamma}A{\Type}$ then $\typing[\pts{\completion S}]{\bttrans{\Gamma}}{\bttrans A}s$
for some sort $s\in\completion{\set S}$.
\end{enumerate}
\end{lemma*}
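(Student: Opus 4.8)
The plan is to prove all three statements simultaneously by induction on the typing derivation in $\lpimod S$, since they are mutually dependent: the well-formedness of contexts (part 1) relies on the typing of declarations (part 3), the term translation (part 2) passes through object contexts whose well-formedness comes from part 1, and the type translation (part 3) unfolds products whose components are handled by parts 2 and 3. I would set up the induction over the rules of Figure \ref{fig:typing-rules-lpimod}, carefully tracking which of the three conclusions each rule feeds into.

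\medskip

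First I would dispatch the structural rules. For \textsc{Empty} and \textsc{Declaration}, part 1 follows directly: the hypothesis $\typing[\lpimod S]{\Gamma}A{\Type}$ for a declared variable is exactly the object-context condition, so part 3 applies to give $\typing[\pts{\completion S}]{\bttrans{\Gamma}}{\bttrans A}s$, which justifies the declaration in $\pts{\completion S}$. For \textsc{Variable} and \textsc{Application}, part 2 is routine using the definitions $\btrans x = x$ and $\btrans{\app MN} = \app{\btrans M}{\btrans N}$, with Lemma \ref{lem:inverse-substitution} converting $\bttrans{\subst NxB}$ into $\subst{\btrans N}x{\bttrans B}$ so the application rule of $\pts{\completion S}$ closes the case. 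The \textsc{Abstraction} case uses $\btrans{\abs xAM}=\abs x{\bttrans A}{\btrans M}$; here the completion is essential, since the product rule needed to type the abstraction's type may only exist in $\completion{\set R}$, and fullness of $\pts{\completion S}$ (Definition \ref{def:completion}) guarantees it.

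\medskip

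The delicate cases are those that determine how terms and types cross between the levels of $\Type$ and $\Kind$, and these are where the restriction to $\lpimmod S$ and to object contexts pays off. For part 3, a type $A$ with $\typing[\lpimod S]{\Gamma}A{\Type}$ is, by the top-sort and correctness-of-types theorems, either of the form $\app{\tterm s}M$, or $\ttype s$, or a product $\prod xBC$. In the first case $\bttrans{\app{\tterm s}M}=\btrans M$ and I would invoke part 2 on $M$ (whose type is at the level of $\Type$) together with Lemma \ref{lem:inverse-reduction}, which tells me the inverse translation respects the rewrite rules of $R_S$; in the product case I recurse via parts 2 and 3 on $B$ and $C$. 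The sort assignments $s\in\completion{\set S}$ are supplied precisely by the axioms in $\completion{\set A}$. The constant $\tprod{s_1}{s_2}{s_3}$ translates to $\abs{\alpha}{s_1}{\abs{\beta}{(\arr{\alpha}{s_2})}{\prod x{\alpha}{\app{\beta}x}}}$, and I must check this is well-typed in $\pts{\completion S}$, again relying on fullness so that the inner product $\prod x{\alpha}{\app{\beta}x}$ receives a sort.

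\medskip

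The main obstacle will be the \textsc{Conversion} rule. When $\lpimod S$ uses $A\equiv_{\beta R}B$ to retype $M$, I need the corresponding $\pts{\completion S}$ conversion, but only $\beta$-conversion is available there. The key is Lemma \ref{lem:inverse-equivalence}: from $A\equiv_{\beta R}B$ it yields $\bttrans A\equiv_{\beta}\bttrans B$, so the $R$-steps collapse under the inverse translation and the residual $\beta$-conversion suffices. The subtlety is that the conversion rule of $\pts{\completion S}$ also requires $\typing[\pts{\completion S}]{\bttrans{\Gamma}}{\bttrans B}{s}$ as a side condition, so I must ensure the target type is itself well-typed in the completion; this I obtain by applying part 3 to the $\lpimod S$ typing of $B$, after checking that $B$ lives at the level of $\Type$ and is therefore inverse-translatable. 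A secondary point of care throughout is that the object-context hypothesis is preserved when extending $\Gamma$ with $x:A$ in the abstraction and product cases, which holds because the premises guarantee $\typing[\lpimod S]{\Gamma}A{\Type}$, matching Definition \ref{def:abstraction-context}.
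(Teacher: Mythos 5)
Your proposal matches the paper's proof in all essentials: a simultaneous induction on the $\lpimod S$ typing derivation, with Lemma \ref{lem:inverse-substitution} closing the application cases, Lemma \ref{lem:inverse-equivalence} collapsing $\equiv_{\beta R}$ to $\equiv_{\beta}$ in the conversion cases, the completion's added axioms and fullness typing the signature constants (in particular the inverse image of $\tprod{s_{1}}{s_{2}}{s_{3}}$) and the product case, and the object-context plus no-$\Kind$-level-redex restrictions forcing every $\Type$-level term to be $\ttype s$, $\app{\tterm s}M$, or a product. The one small slip is attributing that classification to the top-sort and correctness-of-types theorems: in the paper it falls out of the case analysis on the last rule (Variable, Application) under exactly the two restrictions you name, but since your argument invokes those restrictions at the right spot, the content is the same.
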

\begin{proof}
By induction on the derivation.
\begin{enumerate}
\item There are 2 cases.

\begin{itemize}
\item $\infer[Empty]{ }{
	\wellformed{\cempty}}$\\
Then $\wellformed{\cempty}$ trivially.\\
\medskip{}

\item $\infer[Declaration]{
	\wellformed{\Gamma} \\
	\typing{\Gamma}{A}{\Type} \\
	x \not\in \cconcat{\Sigma}{\Gamma}}{
	\wellformed{\cdecl{\Gamma}{x}{A}}}$\\
Then $x\not\in\bttrans{\Gamma}$. By induction hypothesis, $\wellformed{\bttrans{\Gamma}}$
and $\typing{\bttrans{\Gamma}}{\bttrans A}s$ for some sort $s\in\completion{\set S}$.
Therefore $\wellformed{\cdecl{\bttrans{\Gamma}}x{\bttrans A}}$.
\end{itemize}
\item There are 4 cases.

\begin{itemize}
\item $\infer[Variable]{
	\wellformed{\Gamma} \\
	(x:A) \in \cconcat{\Sigma}{\Gamma}}{
	\typing{\Gamma}{x}{A}}$\\
By induction hypothesis, $\wellformed{\bttrans{\Gamma}}$.

\begin{enumerate}
\item If $x=\tsort{s_{1}}$ then $A=\ttype{s_{2}}$ and $(s_{1}:s_{2})\in\set A$.
Therefore $\typing{\bttrans{\Gamma}}{s_{1}}{s_{2}}$.
\item If $x=\tprod{s_{1}}{s_{2}}{s_{3}}$ then $A=\prod{\alpha}{\ttype{s_{1}}}{\arr{(\arr{\app{\tterm{s_{1}}}{\alpha}}{\ttype{s_{2}}})}{\ttype{s_{3}}}}$
and $(s_{1},s_{2},s_{3})\in\set R$. Therefore $\typing{\cdecl{\cdecl{\bttrans{\Gamma}}{\alpha}{s_{1}}}{\beta}{\arr{\alpha}{s_{2}}}}{\prod x{\alpha}{\app{\beta}x}}{s_{3}}$,
which implies $\typing{\bttrans{\Gamma}}{(\abs{\alpha}{s_{1}}{\abs{\beta}{(\arr{\alpha}{s_{2}})}{\prod x{\alpha}{\app{\beta}x}}})}{\prod{\alpha}{s_{1}}{\arr{(\arr{\alpha}{s_{2}})}{s_{3}}}}$.
\item Otherwise $(x:A)\in\Gamma$, so $(x:\bttrans A)\in\bttrans{\Gamma}$.
By induction hypothesis, $\wellformed{\bttrans{\Gamma}}$. Therefore
$\typing{\bttrans{\Gamma}}x{\bttrans A}$.\\
\medskip{}

\end{enumerate}
\item $\infer[Application]{
	\typing{\Gamma}{M}{\prod{x}{A}{B}} \\
	\typing{\Gamma}{N}{A}}{
	\typing{\Gamma}{\app{M}{N}}{\subst{N}{x}{B}}}$\\
By induction hypothesis, $\typing{\bttrans{\Gamma}}{\btrans M}{\prod x{\bttrans A}{\bttrans B}}$
and $\typing{\bttrans{\Gamma}}{\btrans N}{\bttrans A}$. Therefore
$\typing{\bttrans{\Gamma}}{\app{\btrans M}{\btrans N}}{\subst{\btrans N}x{\bttrans B}}$.
By Lemma \ref{lem:inverse-substitution}, $\typing{\bttrans{\Gamma}}{\app{\btrans M}{\btrans N}}{\bttrans{\subst NxB}}$\\
\medskip{}

\item $\infer[Abstraction]{
	\typing{\Gamma}{\prod{x}{A}{B}}{\Type} \\
	\typing{\cdecl{\Gamma}{x}{A}}{M}{B}}{
	\typing{\Gamma}{\abs{x}{A}{M}}{\prod{x}{A}{B}}}$\\
By induction hypothesis, $\typing{\bttrans{\Gamma}}{\prod x{\bttrans A}{\bttrans B}}s$
and $\typing{\cdecl{\bttrans{\Gamma}}x{\bttrans A}}{\btrans M}{\bttrans B}$
for some sort $s\in\completion{\set S}$. Therefore $\typing{\bttrans{\Gamma}}{(\abs x{\bttrans A}{\btrans M})}{\prod x{\bttrans A}{\bttrans B}}$.\\
\medskip{}

\item $\infer[Conversion]{
	\typing{\Gamma}{M}{A} \\
	\typing{\Gamma}{B}{\Type} \\
	A\equiv_{\beta R}B}{
	\typing{\Gamma}{M}{B}}$\\
By induction hypothesis, $\typing{\bttrans{\Gamma}}{\btrans M}{\bttrans A}$
and $\typing{\bttrans{\Gamma}}{\bttrans B}s$ for some sort $s\in\completion{\set S}$.
By Lemma \ref{lem:inverse-substitution}, $\bttrans A\equiv_{\beta}\bttrans B$.
Therefore $\typing{\bttrans{\Gamma}}{\btrans M}{\bttrans B}$.
\end{itemize}
\item There are 4 cases.

\begin{itemize}
\item $\infer[Variable]{
	\wellformed{\Gamma} \\
	(x:\Type) \in \cconcat{\Sigma}{\Gamma}}{
	\typing{\Gamma}{x}{\Type}}$\\
Since $\Gamma$ is an object context we must have $x\in\Sigma$, so
$x=\ttype{s_{1}}$ for some $s_{1}\in\set S$. By induction hypothesis,
$\wellformed{\bttrans{\Gamma}}$. By definition, there is a sort $s_{2}\in\completion{\set S}$
such that $(s_{1}:s_{2})\in\completion{\set A}$. Therefore $\typing{\bttrans{\Gamma}}{s_{1}}{s_{2}}$.\\
\medskip{}

\item $\infer[Application]{
	\typing{\Gamma}{M}{\prod{x}{A}{B}} \\
	\typing{\Gamma}{N}{A}}{
	\typing{\Gamma}{\app{M}{N}}{\subst{N}{x}{B}}}$\\
Since $\Gamma$ is an object context and $\app MN$ is not a $\beta$-redex,
we must have $M=\tterm{s_{1}}$ and $\prod xA{B=\arr{\ttype{s_{1}}}{\Type}}$
and $N:\ttype{s_{1}}$ for some $s_{1}\in\set S$. By induction hypothesis,
$\typing{\bttrans{\Gamma}}{\btrans N}{s_{1}}$.\\
\medskip{}

\item $\infer[Product]{
	\typing{\Gamma}{A}{\Type} \\
	\typing{\cdecl{\Gamma}{x}{A}}{B}{\Type}}{
	\typing{\Gamma}{\prod{x}{A}{B}}{\Type}}$\\
By induction hypothesis, $\typing{\bttrans{\Gamma}}{\bttrans A}{s_{1}}$
and $\typing{\cdecl{\bttrans{\Gamma}}x{\bttrans A}}{\bttrans B}{s_{2}}$
for some sorts $s_{1},s_{2}\in\completion{\set S}$. By definition,
there is a sort $s_{3}\in\completion{\set S}$ such that $(s_{1},s_{2},s_{3})\in\completion{\set R}$.
Therefore $\typing{\bttrans{\Gamma}}{(\prod x{\bttrans A}{\bttrans B})}{s_{3}}$.\\
\medskip{}

\item $\infer[Conversion]{
	\typing{\Gamma}{A}{B} \\
	\typing{\Gamma}{B}{\Kind} \\
	B\equiv_{\beta R}\Type}{
	\typing{\Gamma}{A}{\Type}}$\\
We must have $B=\Type$. By induction hypothesis, $\typing{\bttrans{\Gamma}}{\bttrans A}s$
for some sort $s\in\completion{\set S}$.
\end{itemize}
\end{enumerate}
\end{proof}
\begin{lemma*}[\ref{lem:completion-reducible}]
 If $\typing[\pts{\completion S}]{\Gamma}MA:s$ then for any context
$\Gamma'$ and substitution $\sigma$ such that $\wellformed[\pts S]{\Gamma'}$
and $\reducible{\Gamma'}{\sigma}{\Gamma}$, $\reducible{\Gamma'}{\sigma(M)}{\sigma(A)}$.\end{lemma*}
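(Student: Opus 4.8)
The plan is to proceed by induction on the derivation of $\typing[\pts{\completion S}]{\Gamma}{M}{A}$, carrying along the side hypothesis $\typing[\pts{\completion S}]{\Gamma}{A}{s}$ so that the predicate $\reducible{\Gamma'}{\cdot}{\cdot}$ is always meaningful. Throughout I would exploit that $\reducible{\Gamma'}{\sigma}{\Gamma}$ forces each $\sigma(x)$ to be well-typed in $\pts{\completion S}$, hence $\sigma$ is a well-typed substitution from $\Gamma$ to $\Gamma'$ and every term I produce stays well-typed (by the substitution property of PTS) — this is needed merely to state reducibility. A recurring point is that before invoking the induction hypothesis on a subterm I must check that the subterm's type is itself sorted; this is supplied by correctness of types together with the observation that $\tau$ is the \emph{only} top-sort of $\completion S$, so that any type which is a genuine product or a non-$\tau$ sort is sorted.

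The \emph{sort} and \emph{variable} cases are immediate: for a variable, $\reducible{\Gamma'}{\sigma(x)}{\sigma(A)}$ is exactly the hypothesis $\reducible{\Gamma'}{\sigma}{\Gamma}$ applied to $(x:A)\in\Gamma$; for a sort with $(s_1:s_2)\in\completion{\set A}$, the hypothesis forces $s_2$ to be sorted, hence $s_2\neq\tau$, hence $(s_1:s_2)\in\set A$, and the first clause of the predicate is witnessed by $s_1,s_2$ themselves in $\pts S$. The \emph{conversion} case transports reducibility along $\sigma A\equiv_\beta\sigma B$ using Lemma \ref{lem:reducibility-type-equiv}, the well-typedness of $\sigma B$ coming from the conversion premise. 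In the \emph{product} case the target $s_3$ is a non-$\tau$ sort, hence an original rule $(s_1,s_2,s_3)\in\set R$ with $s_1,s_2\neq\tau$; the induction hypotheses on domain and codomain then land in the first clause, and I assemble a $\pts S$-typeable product by confluence and subject reduction. The \emph{application} case splits on the sort of the function type $\prod x{\sigma A}{\sigma B}$ (sorted by correctness of types, being the type of $M$): if it is at level $\tau$, I read off the second clause of the reducibility of $\sigma M$ and instantiate it at $\sigma N$, which is reducible at $\sigma A$ by induction; if it is at a non-$\tau$ sort, I argue directly from the first clause via confluence, subject reduction and conversion in $\pts S$.

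The heart of the proof is the \emph{abstraction} case at a function type of level $\tau$, i.e.\ $\typing[\pts{\completion S}]{\Gamma}{\abs xAM}{\prod xAB}$ with $\typing[\pts{\completion S}]{\Gamma}{\prod xAB}{\tau}$. By the second clause I must show $\reducible{\Gamma'}{\app{(\abs x{\sigma A}{\sigma M})}{N}}{\subst Nx{\sigma B}}$ for every $N$ with $\reducible{\Gamma'}{N}{\sigma A}$. The key move is to extend $\sigma$ to $\sigma'=\sigma,\;x\mapsto N$: since $x\notin\freevars A$ we have $\sigma'(A)=\sigma(A)$, so $\reducible{\Gamma'}{\sigma'}{\cdecl{\Gamma}xA}$, and the induction hypothesis on the premise $\typing[\pts{\completion S}]{\cdecl{\Gamma}xA}{M}{B}$ (with $B$ sorted from the product-formation premise) yields $\reducible{\Gamma'}{\subst Nx{\sigma M}}{\subst Nx{\sigma B}}$. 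Since $\app{(\abs x{\sigma A}{\sigma M})}{N}\reduces_\beta\subst Nx{\sigma M}$, the two terms are $\beta$-equivalent and, both being well-typed at $\subst Nx{\sigma B}$, Lemma \ref{lem:reducibility-term-equiv} transports reducibility from the reduct back to the redex. This closure of reducibility under $\beta$-expansion is exactly what makes the naive induction go through.

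I expect the main obstacle to be the sort bookkeeping that decides, in the abstraction and application cases, between the ``logical-relations'' clause (level $\tau$) and the ``reduces-into-$\pts S$'' clause (non-$\tau$ level): one must verify that $\tau$ is the unique top-sort, that products and arguments sitting below a $\tau$-product carry original sorts, and that first-clause reasoning correctly reconstructs a $\pts S$ typing after reduction. A related subtlety is the non-$\tau$ abstraction case, where applying the induction hypothesis to the body requires a reducible substitution on the extended context $\cdecl{\Gamma}xA$; this forces a small argument that a fresh variable is reducible at $\sigma A$ once $\sigma A$ has been reduced to a genuine $\pts S$-type, so that $\Gamma'$ can be extended to a well-formed $\pts S$ context. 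These are the points I would treat most carefully; everything else reduces to confluence, subject reduction, and the two equivalence lemmas.
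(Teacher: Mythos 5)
Your proposal is correct and follows essentially the same route as the paper's appendix proof: the same induction on the typing derivation with the side hypothesis $\typing[\pts{\completion S}]{\Gamma}As$, the same $\tau$ versus non-$\tau$ case split in the application and abstraction cases (with the non-$\tau$ branches resolved by confluence plus subject reduction into the first clause, and the sort/product cases forced into $\set A$ and $\set R$ because $\tau$ heads no axiom), and the identical key move in the $\tau$-abstraction case of extending $\sigma$ with $x\mapsto N$ and transporting reducibility across the $\beta$-expansion via Lemma \ref{lem:reducibility-term-equiv}. The subtleties you flag are precisely the ones the paper handles, and in the same way — e.g.\ the fresh-variable point in the non-$\tau$ abstraction case is resolved by reducing $\sigma(A)$ to a $\pts S$-type $A'$ and taking $\reducible{\cdecl{\Gamma'}x{A'}}{(\sigma,x/x)}{(\cdecl{\Gamma}xA)}$.
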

\begin{proof}
By induction on the derivation of $\typing[\pts{\completion S}]{\Gamma}MA$.
\begin{itemize}
\item $\infer[Sort]{
	\wellformed{\Gamma} \\
	(s_1:s_2)\in{\completion{\set{A}}}}{
	\typing{\Gamma}{s_1}{s_2}}$\\
Since $s_{2}:s$, we must have $s_{2}\not=\tau$, so $(s_{1}:s_{2})\in\set A$.
Therefore $\typing[\pts S]{\Gamma'}{s_{1}}{s_{2}}$, which implies
$\reducible{\Gamma'}{s_{1}}{s_{2}}$.\medskip{}

\item $\infer[Variable]{
	\wellformed{\Gamma} \\
	(x:A)\in\cconcat{\Sigma}{\Gamma}}{
	\typing{\Gamma}{x}{A}}$\\
Then $\reducible{\Gamma'}{\sigma(M)}{\sigma(A)}$ by definition of
$\reducible{\Gamma'}{\sigma}{\Gamma}$.\medskip{}

\item $\infer[Application]{
	\typing{\Gamma}{M}{\prod{x}{A}{B}} \\
	\typing{\Gamma}{N}{A}}{
	\typing{\Gamma}{\app{M}{N}}{\subst{N}{x}{B}}}$\\
Without loss of generality, $x\not\in\Gamma'$, so $\sigma(\subst NxB)=\subst{\sigma(N)}x{\sigma(B)}$.
By induction hypothesis, $\reducible{\Gamma'}{\sigma(M)}{\prod x{\sigma(A)}{\sigma(B)}}$
and $\reducible{\Gamma'}{\sigma(N)}{\sigma(A)}$.

\begin{enumerate}
\item If $\typing[\pts{\completion S}]{\Gamma}{\prod xAB}{s_{3}}\not=\tau$
then $\typing[\pts{\completion S}]{\Gamma}A{s_{1}}$ and $\typing[\pts{\completion S}]{\cdecl{\Gamma}xA}B{s_{2}}$
for some $s_{1},s_{2}$ such that $(s_{1},s_{2},s_{3})\in\set S$,
which also means that $\typing[\pts{\completion S}]{\Gamma}{\subst NxB}{s_{2}\not=\tau}$.
By induction hypothesis, $\sigma(M)\reduces_{\beta}^{*}M'$, $\sigma(A)\reduces A'$
and $\sigma(B)\reduces B'$ such that$\typing[\pts{\completion S}]{\Gamma'}{M'}{\prod x{A'}{B'}}$
and $\sigma(N)\reduces_{\beta}^{*}N'$, $\sigma(A)\reduces_{\beta}^{*}A''$
such that $\typing[\pts{\completion S}]{\Gamma'}{N'}{A''}$. By confluence
and subject reduction, we can assume $A'=A''$. Therefore $\typing[\pts{\completion S}]{\Gamma'}{\app{M'}{N'}}{\subst{N'}x{B'}}$.
Since $\subst NxB\reduces_{\beta}^{*}\subst{N'}x{B'}$, this implies
$\reducible{\Gamma'}{\app MN}{\subst NxB}$.
\item Otherwise $\typing{\Gamma}{\prod xAB}{\tau}$. By definition, $\reducible{\Gamma'}{\app{\sigma(M)}{\sigma(N)}}{\subst{\sigma(N)}x{\sigma(B)}}$.\\
\medskip{}

\end{enumerate}
\item $\infer[Abstraction]{
	\typing{\cdecl{\Gamma}{x}{A}}{M}{B} \\
	\typing{\Gamma}{\prod{x}{A}{B}}{s}}{
	\typing{\Gamma}{\abs{x}{A}{M}}{\prod{x}{A}{B}}}$\\
Without loss of generality, $x\not\in\Gamma'$.

\begin{enumerate}
\item If $s\not=\tau$ then by induction hypothesis, $\sigma(A)\reduces_{\beta}^{*}A'$
and $\sigma(B)\reduces_{\beta}^{*}B'$ such that $\typing[\pts S]{\Gamma'}{\prod x{A'}{B'}}s$.
By inversion, $\typing[\pts S]{\Gamma'}{A'}{s_{1}}$ for some $s_{1}\not=\tau$
, so $\reducible{\Gamma}A{s_{1}}$, which implies $\reducible{\cdecl{\Gamma'}xA'}{\sigma}{(\cdecl{\Gamma}xA)}$.
By induction hypothesis, $\sigma(M)\reduces_{\beta}^{*}M'$ and $\sigma(B)\reduces_{\beta}^{*}B''$
such that $\typing[\pts S]{\cdecl{\Gamma'}x{A'}}{M'}{B''}$. By confluence
and subject reduction, we can assume $B'=B''$. Therefore $\typing[\pts S]{\Gamma'}{(\abs x{A'}{M'})}{\prod x{A'}{B'}}$,
which implies $\reducible{\Gamma'}{(\abs xAM)}{\prod xAB}$.
\item If $s=\tau$ then for all $N$ such that $\reducible{\Gamma'}N{\sigma(A)}$,
we have $\reducible{\Gamma'}{(\sigma,N/x)}{(\Gamma,x:A)}$. By induction
hypothesis, $\reducible{\Gamma'}{(\sigma,N/x)(M)}{(\sigma,N/x)(B)}$.
Since $x\not\in\Gamma'$, we have $(\sigma,N/x)(M)=\subst Nx{\sigma(M)}$
and $(\sigma,N/x)(B)=\subst Nx{\sigma(B)}$. Therefore $\reducible{\Gamma'}{\subst Nx{\sigma(M)}}{\subst Nx{\sigma(B)}}$.
By Lemma \ref{lem:reducibility-term-equiv}, $\reducible{\Gamma'}{(\app{(\abs x{\sigma(B)}{\sigma(M)})}N)}{\subst Nx{\sigma(B)}}$.
Therefore $\reducible{\Gamma'}{(\abs x{\sigma(B)}{\sigma(M)})}{\prod xAB}$.\\
\medskip{}

\end{enumerate}
\item $\infer[Product]{
	\typing[\pts{S}]{\Gamma}{A}{s_1} \\
	\typing[\pts{S}]{\cdecl{\Gamma}{x}{A}}{B}{s_2} \\
	(s_1,s_2,s_3)\in{\completion{\set{R}}}}{
	\typing[\pts{S}]{\Gamma}{\prod{x}{A}{B}}{s_3}} $\\
Without loss of generality, $x\not\in\Gamma'$. Since $s_{3}:s$,
we must have $s_{3}\not=\tau$, so $(s_{1},s_{2},s_{3})\in\set R$,
which also means $s_{1}\not=\tau$ and $s_{2}\not=\tau$. By induction
hypothesis, $\sigma(A)\reduces_{\beta}^{*}A'$ such that $\typing[\pts S]{\Gamma'}{A'}{s_{1}}$.
This means that $\wellformed[\pts S]{\cdecl{\Gamma'}x{A'}}$ and $\reducible{\cdecl{\Gamma'}x{A'}}{(\sigma,x/x)}{(\cdecl{\Gamma}xA)}$.
By induction hypothesis, $\sigma(B)\reduces_{\beta}^{*}B'$ such that
$\typing[\pts S]{\Gamma'}{B'}{s_{2}}$. Therefore $\typing[\pts S]{\Gamma'}{(\prod x{A'}{B'})}{s_{3}}$,
which implies $\reducible{\Gamma'}{(\prod x{A'}{B'})}{s_{3}}$.\\
\medskip{}

\item $\infer[Conversion]{
	\typing{\Gamma}{M}{A} \\
	\typing{\Gamma}{B}{s} \\
	A\equiv_\beta B}{
	\typing{\Gamma}{M}{B}}$\\
By induction hypothesis, $\reducible{\Gamma'}{\sigma(M)}{\sigma(A)}$.
Since $A\equiv_{\beta}B$, we have $\sigma(A)\equiv_{\beta}\sigma(B)$.
By Lemma \ref{lem:reducibility-type-equiv}, $\reducible{\Gamma'}{\sigma(M)}{\sigma(A)}$.\end{itemize}
\end{proof}

\end{document}